\newtheorem{theorem}{\bf Theorem}
\newtheorem{lemma}{\bf Lemma}
\newtheorem{proposition}{\bf Proposition}
\newtheorem{corollary}{\bf Corollary}
\newtheorem{remark}{\bf Remark}
\newtheorem{problem}{\bf Question}
\newtheorem{assumption}{\bf Assumption}
\newtheorem{defin}{\bf Definition}
\newcommand{\paren}[1]{\ensuremath{\left( #1\right)}}
\newcommand{\clint}[1]{\ensuremath{\left[ #1\right]}}
\newcommand{\set}[1]{\ensuremath{\left\{ #1\right\}}}
\newcommand{\matr}[1]{\ensuremath{\clint{\begin{array} #1 \end{array}}}}
\newcommand{\norm}[1]{\ensuremath{\left\| #1\right\|}}
\newcommand{\snorm}[1]{\ensuremath{\| #1\|}}
\newcommand{\abs}[1]{\ensuremath{\left| #1\right|}}
\newcommand{\poly}{\ensuremath{\mathrm{poly}}}
\newcommand{\KL}{\ensuremath{\mathrm{KL}}}
\newcommand{\rank}{\ensuremath{\mathrm{rank}}}
\newcommand{\N}{\ensuremath{\mathcal{N}}}
\newcommand{\R}{\ensuremath{\mathbb{R}}}
\renewcommand{\P}{\ensuremath{\mathbb{P}}}
\newcommand{\F}{\ensuremath{\mathcal{F}}}
\newcommand{\E}{\ensuremath{\mathbb{E}}}
\newcommand{\C}{\ensuremath{\mathcal{C}}}
\newcommand{\CC}{\ensuremath{\mathscr{C}}}
\newcommand{\T}{\ensuremath{\mathcal{T}}}
\title{\bf  Linear Systems can be Hard to Learn}
\author{Anastasios~Tsiamis and George~J.~Pappas % <-this % stops a space
	\thanks{The authors  are   with   the   Department   of   Electrical   and   Systems  Engineering,  University  of  Pennsylvania,  Philadelphia,  PA  19104.
		Emails: \{atsiamis,pappasg\}@seas.upenn.edu. This work is supported by the NSF-Simons grant 2031985 and the AFOSR Assured Autonomy grant.}
		}
\begin{document}

\maketitle

\begin{abstract}
In this paper, we investigate when system identification is statistically easy or hard, in the finite sample regime. Statistically easy to learn linear system classes have sample complexity that is polynomial with the system dimension. Most prior research in the finite sample regime falls in this category, focusing on systems that are directly excited by process noise.  Statistically hard to learn linear system classes have worst-case sample complexity that is at least exponential with the system dimension, regardless of the identification algorithm.  Using tools from minimax theory, we show that classes of linear systems can be hard to learn. Such classes include, for example, under-actuated or under-excited systems with weak coupling among the states. Having classified some systems as easy or hard to learn, a natural question arises as to what system properties  fundamentally affect the hardness of system identifiability.  Towards this  direction, we characterize how the controllability index of linear systems affects the  sample  complexity of  identification.  More specifically, we show that the sample complexity of robustly controllable linear systems is upper bounded by an exponential function of the controllability index.  This implies that identification is easy for classes of linear systems with small controllability index and potentially hard if the controllability index is large. Our analysis is based on recent statistical tools for finite sample analysis of system identification as well as a novel lower bound that relates controllability index with the least singular value of the controllability Gramian. 
\end{abstract}

\section{Introduction}\label{sec:intro}

Linear system identification focuses on using input-output data samples for learning dynamical systems of form:
 \begin{equation}\label{eq:system}
\begin{aligned}
    x_{k+1}&=Ax_{k}+Bu_{k}+Hw_k,
\end{aligned}
\end{equation} 
where $x_k$ represents the state, $u_k$ represents the control signal, and $w_k$ is the process noise.  The statistical analysis of system identification algorithms has a long history~\cite{Ljung1999system}. Until recently, the main focus was providing guarantees for the convergence of system identification in the \emph{asymptotic regime}~\cite{deistler1995consistency,bauer1999consistency,chiuso2004asymptotic},  when the number of collected samples $N$ tends to infinity. Under sufficient persistency of excitation~\cite{bai1985persistency}, system identification algorithms converge and the asymptotic bounds capture very well how the identification error decays with $N$ qualitatively. 

However, our standard asymptotic tools (e.g. the Central Limit Theorem), do not always capture all finite-sample phenomena~\cite[Ch 2]{vershynin2018high}. Moreover, the identification error depends on various system theoretic constants, like the state space dimension $n$, which might be hidden under the big-$O$ notation in the asymptotic bounds.
As a result, system identification limitations, like the curse of dimensionality, although known to practitioners, are not always reflected in the theoretical asymptotic bounds.

With the advances in high-dimensional statistics~\cite{vershynin2018high}, there has been a recent shift from asymptotic analysis with infinite data to statistical analysis of system identification with finite samples.
Over the past two years there have been significant advances in understanding finite sample system identification for both fully-observed systems~\cite{campi2002finite,dean2017sample,simchowitz2018learning,faradonbeh2018finite,sarkar2018fast,fattahi2019learning,jedra2019sample,wagenmaker2020active} as well as partially-observed systems~\cite{oymak2018non,sarkar2019finite,simchowitz2019semi,tsiamis2019finite,lee2019non,zheng2020non,lee2020improved,lale2020logarithmic,kozdoba2019line,tsiamis2020online}. A tutorial can be found in~\cite{matni2019tutorial}. The above approaches offer mainly \emph{data-independent} bounds which reveal how the state dimension $n$ and other system theoretic parameters affect the sample complexity of system identification \emph{qualitatively}. This is different from finite sample data-dependent bounds-see for example bootstrapping~\cite{dean2017sample} or~\cite{care2018finite}, which might be more tight and more suitable for applications but do not necessarily reveal this dependence.

Despite these advances, we still do not fully understand the fundamental limits of when identification  is  easy  or  hard.  In this paper, we define as statistically easy, classes of systems whose finite-sample  complexity is  polynomial  with  the  system  dimension.  Most  prior research in the finite-sample analysis of fully observed systems falls  in  this  category by assuming system~(\ref{eq:system}) is fully excited by the process noise $w_k$. We define as statistically hard, classes of linear systems whose worst-case sample complexity is at least exponential  with  the  system  dimension,  regardless  of the learning algorithm. Using recent tools from  minimax  theory~\cite{jedra2019sample}, we show that classes of linear systems which are statistically hard to learn do indeed exist.  Such system classes include, for  example,  under-actuated systems with weak state coupling. The fact that linear systems  may contain exponentially hard classes has implications for broader classes of systems, such as nonlinear systems, as well as control algorithms, such as the linear quadratic regulator~\cite{recht2018tour} and reinforcement learning~\cite{du2019good,jiang2017contextual}. 

By examining classes of linear systems that are statistically easy or hard, we quickly arrive at the conclusion that system theoretic properties, such as controllability, fundamentally  affect the hardness of identification.
In fact, as we show in the paper, structural properties like the controllability index can crucially affect learnability, determining whether a problem is hard or not.
In summary, our contributions are the following:

\noindent \textbf{--Learnability of dynamical systems.} We define two novel notions of learnability for classes of dynamical systems. A class of systems is easy to learn if it exhibits polynomial sample complexity with respect the state dimension $n$. It is hard to learn if for any possible learning algorithm it has exponential worst-case complexity.

\noindent\textbf{--Exponential sample complexity is possible.}
We identify classes of under-actuated linear systems whose worst-case sample complexity increases exponentially with the state dimension $n$ regardless of learning algorithm.  These hardness results hold even for robustly controllable systems.

\noindent\textbf{--Controllability index affects sample complexity.}
 We prove that under the least squares algorithm, the sample complexity is upper-bounded by an exponential function of the system's controllability index. This implies that if the controllability  index is small~$O(1)$ (with respect to the dimension $n$), the  sample complexity is guaranteed to be polynomial generalizing previous cases.  If, however, the index grows linearly $\Omega(n)$, then there exist non-trivial linear systems which are exponentially hard to identify. 

\noindent\textbf{--New controllability Gramian bound}
Our sample complexity upper bound is a consequence of a new result that is of independent, system theoretic interest. We prove that for robustly controllable systems, the least singular value of the controllability Gramian can grow at most exponentially with the controllability index.  Although it has been observed empirically that the Gramian might be affected by the curse of dimensionality~\cite{baggio2019data}, to the best of our knowledge this theoretical bound is new and has implications beyond system identification.

\textit{\textbf{Notation:}}
The transpose operation is denoted by $(\cdot)'$ and the complex conjugate by $*$. By $e_i\in\R^{n}$ we denote the $i-$th canonical vector. By $\sigma_{\min}$ we denote the least singular value. $\succeq$ denotes comparison in the positive semidefinite cone. The identity matrix of dimension $n$ is denoted by $I_n$. The spectral norm of a matrix $A$ is denoted by $\snorm{A}_2$. The notion of controllability and other related concepts are reviewed in the Appendix.

\section{Learnability of System Classes}\label{sec:formulation}
Consider system~\eqref{eq:system}, where $x_k\in \R^n$ is the state and $u_k\in\R^p$ is the input. By $w_k\in\R^{r}$ we denote the process noise which is assumed to be Gaussian, i.i.d. with covariance $I_r$.  Without loss of generality the initial state is assumed to be zero $x_0=0$.
\begin{assumption}\label{ass_boundedness}
All state parameters are bounded: $\snorm{A}_2,\snorm{B}_2,\snorm{H}_2\le M$, for some positive constant $M>0$. The noise has unknown dimension $r$ and can be degenerate $r\le n$. All parameters $A,B,H,r$ are considered unknown. Matrices $B,H$ have full column rank $\rank(B)=p\le n$, $\rank(H)=r\le n$. We also assume that the system is non-explosive $\rho(A)\le 1$. Finally, we assume that the control inputs have bounded energy $\E u'_tu_t\le M$.
\end{assumption}
This setting is rich enough to provide insights about the difficulty of the general learning problem. To simplify the setting we assume that the system is non-explosive. The analysis of unstable systems is left for future research.

A system identification (SI) algorithm $\mathcal{A}$ receives a finite number $N$ of input-state data $(x_{0},u_{0}),\dots,(x_N,u_N)$ generated by system~\eqref{eq:system}, and returns an estimate of the unknown system's parameters $\hat{A}_N,\hat{B}_N,\hat{H}_N$. We denote by $N$ the number of collected input-state samples, which are generated during a single roll-out of the system, that is a single trajectory of length $N$.
For simplicity, we focus only on the estimation of matrix $A$ in this paper.

Our goal is to study when the problem of system identification is fundamentally easy or hard.
The difficulty is captured by the sample complexity, i.e. how many data $N$ do we need to achieve small identification error with high probability.  
Formally, let $\epsilon>0$, $0<\delta<1$ be the accuracy and confidence parameters respectively. Then, the sample complexity is the smallest possible number of samples $N$ such that with probability at least $1-\delta$ we can estimate $A$ with small error $\snorm{A-\hat{A}_N}\le \epsilon$.  
Naturally, the sample complexity increases as the accuracy/confidence parameters $\epsilon,\delta$ decrease. The sample complexity also increases in general with the state-space dimension $n$ and the bound $M$ on the state space parameters.

Ideally, the sample complexity should grow slowly with $n,M,\epsilon^{-1},\delta^{-1}$. 
Inspired by Provably Approximately Correct (PAC) learning~\cite{shalev2014understanding,dann2017unifying}, we classify an identification problem as easy when the sample complexity depends polynomially on $n,M,\epsilon^{-1},\delta^{-1}$. 
For brevity we will use the symbol $S$ to denote the tuple $S=(A,B,H)$. Let $\P_{S}$ denote the probability distribution of the input-state data when the true parameters of the system are equal to $S$ and we apply a control law $u_t\in\mathcal{F}_t$, where $\F_t\triangleq \sigma(x_0,u_0,\dots,u_{t-1},x_t)$ is the sigma algebra generated by the previous outputs and inputs. By $\CC_n$ we will denote a class of systems with dimension $n$. 
\begin{defin}[$\poly$-learnable classes]\label{def:poly}
	Let $\CC_n$ be a class of systems. Consider a trajectory of input-state data $(x_{0},u_{0}),\dots$,$(x_N,u_N)$, which are generated by a system $S$ in $\CC_n$ under some control law $u_t\in\mathcal{F}_t$, $t\le N$. 
	We call the class $\CC_n$ $\poly(n)-$learnable if there exists an identification algorithm such that the sample complexity is polynomial: for any confidence $0\le \delta<1$ and any tolerance $\epsilon>0$:
	\begin{align}
		&\sup_{S\in\CC_n}\P_{S}(\snorm{A-\hat{A}_N}\ge \epsilon)\le \delta\label{eq:objective},\\
		&\text{ for }N\ge \mathrm{poly}(n,1/\epsilon,\log 1/\delta, M)\nonumber,
	\end{align}
	where $\poly(\cdot)$ is some polynomial function.
\end{defin}
Definition~\ref{def:poly} provides an intuitive definition for a class $\CC_n$ of linear systems whose system identification problem is easy. To prove that a class of systems $\CC_n$ is easy, it suffices to provide one algorithm that performs well for any system $S \in\CC_n$ in the sense that it requires \textbf{at most} a polynomial number of samples.  This means that we should obtain sample complexity \textbf{upper bounds} across all $S \in\CC_n$ which is what the the supremum over $S\in \CC_n$ achieves in~\eqref{eq:objective}.  Otherwise, we can construct trivial algorithms that perform well only on one system and fail to identify the other.

\begin{figure}[t] \centering
	\definecolor{mycolor1}{rgb}{0.00000,0.44700,0.74100}%
\definecolor{mycolor2}{rgb}{0.85000,0.32500,0.09800}%
\definecolor{mycolor3}{rgb}{0.92900,0.69400,0.12500}%
\resizebox{0.55\columnwidth}{!}{\begin{tikzpicture}
\begin{axis}[%
		width=6.028in,
		height=2.754in,
		at={(1.011in,0.642in)},
		scale only axis,
		xmin=5,
		xmax=12,
		tick label style={font=\LARGE},
			ylabel style={font=\huge},
	xlabel style={font=\huge},
		ymode=log,
		ymin=10,
		ymax=10000,
		xtick={5,6,7,8,9,10,11,12},
		yminorticks=true,
		xlabel={dimension $n$},
		ylabel={samples $N$},
		xmajorgrids,
		ymajorgrids,
		axis background/.style={fill=white},
		legend style={legend cell align=left, align=left, draw=white!15!black,font=\huge,legend pos=north west}
		]
		\addplot [color=mycolor1,line width=1.5pt]
		table[row sep=crcr]{%
5	11\\
6	16\\
7	24\\
8	45\\
9	125\\
10	415\\
11	1573\\
12	5499\\
		};
		\addlegendentry{$\epsilon$=0.1}
			\addplot [color=mycolor2, dashed,line width=2pt]
		table[row sep=crcr]{%
5	10\\
6	14\\
7	19\\
8	31\\
9	68\\
10	198\\
11	692\\
12	2641\\
		};
		\addlegendentry{$\epsilon$=0.15}
			\addplot [color=mycolor3,dashdotted,line width=2pt]
		table[row sep=crcr]{%
5	9\\
6	12\\
7	17\\
8	26\\
9	47\\
10	124\\
11	403\\
12	1405\\
		};
		\addlegendentry{$\epsilon$=0.2}
	\end{axis}
	\end{tikzpicture}}%}
 	\caption{The minimum number of samples $N$ such that the (empirical) average error $\mathbb{E}\snorm{A-\hat{A}_N}_2$, for identifying~\eqref{eq:anisotropic}, is less than $\epsilon$. The sample complexity appears to be increasing exponentially with the dimension $n$ under the least squares algorithm.}
	\label{Fig:motivational_example}
\end{figure}
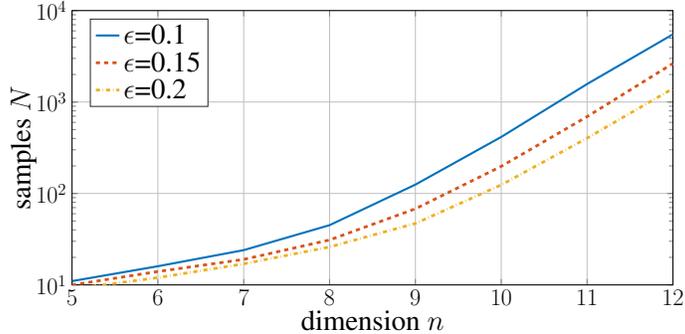

In recent work~\cite{simchowitz2018learning,sarkar2018fast,fattahi2019learning}, it was shown that under the least squares algorithm, the sample complexity of learning linear systems is polynomial.  As we review in Section~III, these results  hold for classes of linear systems 
where the noise is isotropic and hence directly exciting all states.

However, if we relax the last assumption it turns out that the sample complexity might degrade dramatically. To raise this issue, consider the following example. Let $J_n(1)$ be a Jordan block of size $n$ with eigenvalue $1$ and let $e_n$ be the $n-$th canonical vector. We simulate the performance of least squares identification for the system
\begin{equation}
    x_{k+1}=0.5 J_n(1)x_{k}+e_n(u_k+w_k) \label{eq:anisotropic}
\end{equation}
Note that in system (\ref{eq:anisotropic}) the process noise is no longer isotropic. Figure~\ref{Fig:motivational_example} shows the minimum number of samples $N$ required to achieve (empirical) average error $\mathbb{E}\snorm{A-\hat{A}_N}\le \epsilon$ (the details of the simulation can be found in Section~\ref{sec:simulations}). It seems that the sample complexity increases exponentially rather than polynomially. Are the results in Figure~\ref{Fig:motivational_example} due to the choice of the algorithm or is there a fundamental limitation for all system identification algorithms?  We pose the following fundamental problem.

\begin{problem}\label{problem}
	Do there exist classes of linear systems which are hard to learn, meaning not $\poly$-learnable by any system identification algorithm? Furthermore, can the sample complexity for a class of linear systems be exponential with state dimension $n$?
\end{problem}

A class of linear systems $\CC_n$ that is not $\poly$-learnable will be viewed as hard. By negating Definition $\ref{def:poly}$, this notion of hardness means that given {\em any} system identification algorithm, there exist instances $S\in \CC_n$ that cannot have polynomial sample complexity. In other words, a system class $\CC_n$ is classified as hard when its impossible to find any system identification algorithm that achieve polynomial sample complexity for all $S\in \CC_n$. This can be viewed as a fundamental statistical limitation for the chosen class of systems $\CC_n$. 

Motivated by Figure~\ref{Fig:motivational_example}, we define an important subclass of hard problems, namely linear system classes that have worst-case sample complexity that grows exponentially with the dimension $n$ regardless of identification algorithm choice.

\begin{defin}[$\exp$-hard classes]\label{def:exp}
	Let $\CC_n$ be a class of systems of dimension $n$. Consider a trajectory of input-output data $(x_{0},u_{0}),\dots$,$(x_N,u_N)$, which are generated by a system $S$ in $\CC_n$ under some control law $u_t\in\mathcal{F}_t$, $t\le N$. 
	We call a class $\CC_n$ of systems $\exp(n)$-hard if the sample complexity is at least exponential with the dimension $n$: there exist confidence $0\le \delta<1$ and tolerance $\epsilon$ parameters such that for any identification algorithm:
	\begin{align*}
		&\sup_{S\in\CC_n}\P_{S}(\snorm{A-\hat{A}_N}\ge \epsilon)\le \delta,\\
		&\text{ only if }N\ge \mathrm{exp}(n),
	\end{align*}
	where $\mathrm{exp}(n)$ denotes an exponential function of $n$.
\end{defin}
System classes $\CC_n$ that are $\exp$-hard are an important subset of hard system classes as they are clearly not $\poly$-learnable.  However, not all classes that are not $\poly$-learnable are $\exp$-hard.

In order to show that a class of systems $\CC_n$ is $\exp$-hard, one must show that for \textbf{any} system identification algorithm the worst-case sample complexity is at least exponential in state dimension $n$. 
Contrary to $\poly$-learnable problems,
for exponential hardness we should establish sample complexity \textbf{lower bounds}.

In this paper, we first address Question~\ref{problem} and show that $\exp$-hard classes of linear systems do indeed exist. While this can be viewed as a fundamental statistical limitation for all system identification algorithms, our results open a new direction of research that classifies when linear systems are easy to learn and when they are hard to learn.  This leads to the following important question addressing in this paper:.
 \begin{problem}\label{problem:avoid}
When is a class of linear systems $\CC_n$ guaranteed to be $\poly$-learnable?
 \end{problem}
Based on prior work, we already have partial answers to Question~\ref{problem:avoid} as we know that linear systems with isotropic noise are $\poly$-learnable. In Section~\ref{sec:guaranteed}, we seek to broaden the classes of $\poly$-learnable systems and discover their relation to fundamental system theoretic properties such as controllability.

While Definitions~\ref{def:poly},~\ref{def:exp} are inspired by PAC learning, they have a different flavor. One of the differences is that the guarantees in Definitions~\ref{def:poly},~\ref{def:exp} are stated in terms of recovering the state-space parameters, while in PAC learning, they would be stated in terms of the prediction error of the learned model or informally $\sum_{k=0}^{N-1}\E\snorm{x_k-\hat{A}x_{k-1}-\hat{B}u_{k-1}}^2$. 

\section{Directly-excited systems are poly-learnable}\label{sec:isotropic}

In this section, we revisit state-of-the-art results in finite-sample complexity for fully-observed linear systems and re-establish that they all lead to polynomial sample complexity. In prior work~\cite{simchowitz2018learning,sarkar2018fast,fattahi2019learning}, the class of linear systems considered assumes that the stochastic process noise is isotropic, i.e. $HH'=\sigma^2_w I_n$. Since all states are directly excited by the process noise, all modes of the system are captured sufficiently in the data. To obtain polynomial complexity, it suffices to use the least squares identification algorithm
\begin{equation}\label{eq:least_squares}
\matr{{cc}\hat{A}_N&\hat{B}_N}=\arg\min_{\set{F,G}} \sum_{t=0}^{N-1}\snorm{x_{t+1}-Fx_t-Gu_t}^2_2
\end{equation}
with white noise inputs $u_t\sim\N(0,\sigma^2_u I)$. 
Based on the algorithm analysis from~\cite{simchowitz2018learning}, let $k$ be a fixed time index which is much smaller than the horizon $N$ (see Theorem 2.1 in~\cite{simchowitz2018learning} for details). Let $0<\delta<1$ and $\epsilon$ be the confidence and accuracy parameters respectively. Then, with probability at least $1-\delta$, the error is $\snorm{A-\hat{A}_N}_2\le \epsilon$ if:
\[
N\ge \frac{c\sigma^2_w}{\sigma_{\min}(\Gamma_k)}\frac{1}{\epsilon^2}\paren{n\log\frac{n}{\delta}+\log\det(\Gamma_{N}\Gamma^{-1}_k)},
\]
where $c$ is a universal constant, and  $\Gamma_k=\sigma^2_u\Gamma_k(A,B)+\sigma^2_w\Gamma_k(A,I_n)$ is the (combined) controllability Gramian.  Uunder the isotropic noise assumption, the least singular value of the Gramian $\Gamma_k$ is bounded away from zero, $\sigma_{\min}(\Gamma_k)\ge \sigma^2_{w}$. 

In a slight departure from~\cite{simchowitz2018learning,sarkar2018fast,fattahi2019learning}, we can  show that the determinant of the Gramian $\det(\Gamma_N)$ can only increase at most polynomially with the number of samples $N$ and exponentially with state dimension $n$.  This is a direct consequence of the following lemma, which is a new result.
\begin{lemma}\label{lem:powers_of_A}
 Let $A\in\R^{n\times n}$ have all eigenvalues inside or on the unit circle, with $\norm{A}_2\le M$. Then, the powers of matrix $A$ are bounded by:
\begin{equation}\label{eq:powers_bound}
    \norm{A^k}_2\le (ek)^{n-1}\max\set{M^{n},1}
\end{equation}   
\end{lemma}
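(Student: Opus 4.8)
The plan is to exploit that $A$ satisfies its own characteristic polynomial, so every power $A^k$ with $k \ge n$ collapses onto the span of $\{I, A, \dots, A^{n-1}\}$ via the recursion coming from Cayley--Hamilton. Concretely, write the characteristic polynomial as $p(\lambda) = \lambda^n - \sum_{j=0}^{n-1} c_j \lambda^j$, so that $A^n = \sum_{j=0}^{n-1} c_j A^j$, and more generally $A^k$ for $k \ge n$ is a polynomial in $A$ of degree at most $n-1$ whose coefficients obey a linear recurrence driven by the $c_j$. The eigenvalue constraint $\rho(A) \le 1$ enters only through bounding the coefficients $c_j$: each $c_j$ is (up to sign) an elementary symmetric function of the eigenvalues $\lambda_1, \dots, \lambda_n$, hence $|c_j| \le \binom{n}{n-j} \le 2^{n}$ since every $|\lambda_i| \le 1$. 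So the combinatorial content of the bound $(ek)^{n-1}$ is really the growth rate of the coefficient sequence in this recurrence.

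**Key steps, in order.** First I would set up the "companion form" bookkeeping: define, for $k \ge 0$, coefficients $\alpha^{(k)}_0, \dots, \alpha^{(k)}_{n-1}$ by $A^k = \sum_{j=0}^{n-1} \alpha^{(k)}_j A^j$, with $\alpha^{(k)}_j = \delta_{kj}$ for $k < n$ and the shift-plus-Cayley--Hamilton recursion $\alpha^{(k+1)}_j = \alpha^{(k)}_{j-1} + c_j \alpha^{(k)}_{n-1}$ (with $\alpha^{(k)}_{-1} := 0$) for $k \ge n$. Second, I would bound $\|A^j\|_2 \le M^j \le \max\{M^{n-1},1\} \le \max\{M^n,1\}$ for $0 \le j \le n-1$ (using $\|A^j\| \le \|A\|^j$ and splitting on whether $M \ge 1$), so that $\|A^k\|_2 \le \big(\sum_j |\alpha^{(k)}_j|\big)\max\{M^n,1\}$ — reducing everything to controlling $\|\alpha^{(k)}\|_1$. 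Third, I would prove by induction that $\|\alpha^{(k)}\|_1 \le \binom{k}{n-1}$ or something of comparable size: the recursion adds a shifted copy of the previous coefficient vector plus $c_j$ times its last entry, and since $\sum_j |c_j| \le 2^n$ this naively gives a factor blow-up, so the cleaner route is to bound things directly in terms of the number of monomials. Actually the slick version: $A^k$ is a sum over all ways of writing $k$ as an ordered... no — the robust argument is that $\|\alpha^{(k)}\|_1$ counts lattice paths, and one shows $\|\alpha^{(k)}\|_1 \le \binom{k+n-1}{n-1} \le (k+1)^{n-1}$, then crudely $(k+1)^{n-1}\le (ek)^{n-1}$ for $k\ge 1$.

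**Alternative, possibly cleaner route.** Rather than tracking the Cayley--Hamilton recursion, I would instead use the Jordan form $A = S J S^{-1}$ and the finite-difference / Vandermonde interpolation identity: since $A^k = q_k(A)$ for the unique polynomial $q_k$ of degree $<n$ agreeing with $t \mapsto t^k$ on the spectrum (with derivatives at repeated eigenvalues), Lagrange--Hermite interpolation writes the coefficients of $q_k$ as divided differences of $t^k$. Each divided difference of order $\le n-1$ of $t \mapsto t^k$ at points in the closed unit disk is bounded by $\binom{k}{n-1}$ (this is the $(n-1)$-st divided difference, which equals a normalized derivative $\tfrac{1}{(n-1)!}\tfrac{d^{n-1}}{dt^{n-1}}t^k\big|_{t=\xi}$ for some $\xi$ in the convex hull, giving $\binom{k}{n-1}|\xi|^{k-n+1} \le \binom{k}{n-1}$). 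Summing $n$ such coefficients against powers $\|A^j\|\le\max\{M^n,1\}$ and using $\sum_{j=0}^{n-1}\binom{k}{j} \le ?$ — here one needs $\binom{k}{n-1}\cdot n \le (ek)^{n-1}$, which follows from $\binom{k}{n-1} \le \frac{k^{n-1}}{(n-1)!}$ and Stirling's bound $(n-1)! \ge ((n-1)/e)^{n-1}$, so $\binom{k}{n-1} \le (ek/(n-1))^{n-1}$; absorbing the extra factor $n$ is then routine.

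**Main obstacle.** The arithmetic of the constant $e$ is the delicate part: a naive Cayley--Hamilton induction produces a base like $2^n$ (from $\sum|c_j|$) rather than $e$, so one must be careful to set up the recursion or the interpolation so that the natural bound is a single binomial coefficient $\binom{k}{\le n-1}$, and then invoke $(n-1)! \ge ((n-1)/e)^{n-1}$ to convert $\binom{k}{n-1}$ into $(ek)^{n-1}$. I expect the divided-difference/Hermite-interpolation formulation to make this cleanest, with the only real work being the uniform bound $|[t_0,\dots,t_{n-1}]\, t^k| \le \binom{k}{n-1}$ on divided differences of the power function over the unit disk, which follows from the integral (Hermite--Genocchi) representation of divided differences.
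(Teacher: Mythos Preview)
The paper takes a different route: the Schur decomposition $A = UDU^*$ with $U$ unitary and $D$ upper triangular. Since $U$ is unitary, $\|A^k\|_2 = \|D^k\|_2$, so one works with $D$ directly. Writing $D = \Lambda + N$ with $\Lambda$ the diagonal part of $D$ (so $\Lambda$ carries the eigenvalues and $\|\Lambda^t\|_2 \le 1$ for all $t$) and $N = D - \Lambda$ strictly upper triangular (hence $\|N\|_2 \le \|D\|_2 = M$ since $N$ is a submatrix of $D$), expand $D^k = (\Lambda + N)^k$ as a sum over all $2^k$ words in $\{\Lambda, N\}$. Any word with at least $n$ copies of $N$ vanishes (strict upper-triangularity of $N$ and diagonality of $\Lambda$); a word with exactly $t < n$ copies of $N$ has norm at most $M^t$, and there are $\binom{k}{t}$ such words. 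Hence $\|A^k\|_2 \le \sum_{t=0}^{n-1}\binom{k}{t}\max\{M^t,1\} \le \max\{M^{n-1},1\}\sum_{t\le n-1}\binom{k}{t}$, and the standard estimate $\sum_{t\le n-1}\binom{k}{t} \le (ek/(n-1))^{n-1}$ finishes.

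Your Cayley--Hamilton bookkeeping has a real gap: the claim $\|\alpha^{(k)}\|_1 \le \binom{k+n-1}{n-1}$ is false. With all eigenvalues equal to $1$ the interpolant is $q_k(t) = \sum_{m<n}\binom{k}{m}(t-1)^m$, and $\|\alpha^{(k)}\|_1 \ge |q_k(-1)| \sim 2^{n-1}\binom{k}{n-1}$, which exceeds $\binom{k+n-1}{n-1}$ by a factor of order $2^{n-1}$ for large $k$; the $2^n$ you worried about cannot be avoided in the monomial basis. Your divided-difference route is sound in spirit but conflates bases: the divided differences $f[\lambda_0,\dots,\lambda_m]$ are the coefficients in the \emph{Newton} basis $\prod_{i<m}(t-\lambda_i)$, not against the monomials $t^j$, so you cannot pair them with $\|A^j\|_2$. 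Done correctly in Newton form you get $\|A^k\|_2 \le \sum_{m<n}\binom{k}{m}\prod_{i<m}\|A-\lambda_i I\|_2 \le (M+1)^{n-1}\sum_{m<n}\binom{k}{m}$, which is of the right order but carries $(M+1)^{n-1}$ rather than $\max\{M^n,1\}$. The Schur argument sidesteps both issues: the unitary conjugation is isometric, and the nilpotent part $N$ inherits $\|N\|_2 \le M$ directly from $D$, so there is no basis conversion and no ``$+1$'' from $\|A-\lambda I\|$.
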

Lemma~\ref{lem:powers_of_A} enables us to eliminate the dependence on the condition number of the Jordan form's similarity transformation, which exists in prior bounds and can be arbitrarily large. We avoid this dependence by using the Schur form of $A$~\cite{horn2012matrix}. While this does not alter the already known sample complexity results, it allows us to have sample complexity bounds that are uniform across all systems that satisfy Assumption~\ref{ass_boundedness}.

As a result of Lemma~\ref{lem:powers_of_A}, we obtain that the system identification problem for linear systems with isotropic noise has polynomial sample complexity. The result can be broadened to the more general case of direct excitation, where the covariance is lower bounded by $HH'+BB'\succeq \sigma_w^2I_n$, for some $\sigma_w>0$, as the following theorem states. 
\begin{theorem}[Directly-excited]\label{thm:directly_excited}
Consider the class $\CC_n$ of directly-excited systems $S=(A,B,H)\in\R^{n\times(n+p+r)}$ such that Assumption~\ref{ass_boundedness} is satisfied with covariance $HH'+BB'\succeq\sigma^2_w I_n$, for some $\sigma_w> 0$. The class $\CC_n$ is $\poly-$learnable under the least squares system identification algorithm with white noise input signals $u_k\sim\mathcal{N}(0,I_p)$.
\end{theorem}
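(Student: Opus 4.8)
The plan is to run the least-squares estimator~\eqref{eq:least_squares} with white inputs $u_k\sim\N(0,I_p)$ and invoke the finite-sample guarantee of~\cite{simchowitz2018learning} (Theorem~2.1 there), then use Lemma~\ref{lem:powers_of_A} to certify that every quantity entering that guarantee is controlled by a polynomial in $n$, $1/\epsilon$, $\log(1/\delta)$ and $M$ \emph{uniformly} over the class $\CC_n$. Concretely, I would fix a short block length (say $k=1$) and apply the self-normalized least-squares bound to the regression of $x_{t+1}$ onto $(x_t,u_t)$: for $N$ above a burn-in threshold, with probability at least $1-\delta$ one has $\snorm{A-\hat A_N}_2\le\epsilon$ as soon as
\[
N\ge\frac{c\,\snorm{HH'}_2}{\sigma_{\min}(\Gamma_k)\,\epsilon^2}\paren{n\log\frac n\delta+\log\det(\Gamma_N\Gamma_k^{-1})},
\]
where $\Gamma_k=\sum_{i=0}^{k-1}A^i(BB'+HH')(A')^i$ is the combined controllability Gramian, $c$ is universal, and the burn-in inequality has the same shape, $N\gtrsim k\paren{n\log(n/\delta)+\log\det(\Gamma_N\Gamma_k^{-1})}$. (In the isotropic case $HH'=\sigma^2_w I_n$ this is exactly the bound quoted in Section~\ref{sec:isotropic}, with $\snorm{HH'}_2=\sigma^2_w$.)

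The second step is to bound each ingredient uniformly over $\CC_n$. The direct-excitation hypothesis $BB'+HH'\succeq\sigma^2_w I_n$, applied to the $i=0$ term of $\Gamma_k$, immediately gives $\Gamma_k\succeq\sigma^2_w I_n$, hence $\sigma_{\min}(\Gamma_k)\ge\sigma^2_w$ and $\det\Gamma_k\ge\sigma^{2n}_w$; moreover $\snorm{HH'}_2\le M^2$ by Assumption~\ref{ass_boundedness}. For the log-determinant I would combine monotonicity of Gramians with the coarse bound $\Gamma_N\preceq\paren{\sum_{i=0}^{N-1}\snorm{A^i}^2_2}\snorm{BB'+HH'}_2\,I_n$ and then invoke Lemma~\ref{lem:powers_of_A} (applicable since $\rho(A)\le1$), $\snorm{A^i}_2\le(ei)^{n-1}\max\set{M^n,1}$, to get $\Gamma_N\preceq 2M^2 N(eN)^{2(n-1)}\max\set{M^{2n},1}\,I_n$; consequently $\log\det(\Gamma_N\Gamma_k^{-1})$ is at most a polynomial in $n$, $\log N$, $\log M$ and $\log(1/\sigma_w)$. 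The crucial feature is that \emph{no} condition number of a modal (Jordan-form) transformation survives, which is precisely the point of Lemma~\ref{lem:powers_of_A}; this makes all of the above bounds uniform over $S\in\CC_n$.

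Finally, substituting these estimates into the Step-1 requirement turns it into an inequality of the form $N\ge a+b\log N$ with $a,b=\poly(n,1/\epsilon,\log(1/\delta),M)$ (here $\sigma_w$ and $c$ are fixed constants of the class, not scaling with $n$). A routine fixed-point argument --- e.g.\ using $\log N\le2\sqrt N$, any $N\ge 4(a+b^2)$ works --- then exhibits an explicit $N=\poly(n,1/\epsilon,\log(1/\delta),M)$ that meets this and the burn-in inequality simultaneously, for every $S\in\CC_n$; that is exactly the polynomial sample complexity demanded by Definition~\ref{def:poly}.

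I expect the only genuine obstacle to be \emph{uniformity}: a priori the constants in~\cite{simchowitz2018learning} could hide instance-dependent quantities --- most notably the conditioning of the spectral decomposition of $A$, and the small-ball constant underlying the burn-in --- and verifying that these are all absorbed into $M$ and $\sigma_w$ alone is where Lemma~\ref{lem:powers_of_A} and the uniform lower bound $\sigma_{\min}(\Gamma_k)\ge\sigma^2_w$ do the real work. Everything else is bookkeeping with determinants and a fixed-point inequality.
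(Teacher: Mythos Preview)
Your proposal is correct and is essentially the same argument the paper uses: the paper's proof is the one-line observation that directly-excited systems have controllability index $\kappa=1$ and then defers to Theorem~\ref{thm:upper_exponential}, whose proof in the Appendix does exactly what you outline (invoke the least-squares bound of~\cite{simchowitz2018learning} via Theorems~2.4/4.2 there, lower-bound $\sigma_{\min}(\Gamma_\kappa)$, upper-bound the log-determinant using Lemma~\ref{lem:powers_of_A}, and close with a $N\ge c\log N$ fixed-point step). The only cosmetic difference is that you work directly at $\kappa=1$ rather than routing through the general $\kappa$-dependent statement, and you cite Theorem~2.1 rather than the more general Theorem~2.4/4.2 of~\cite{simchowitz2018learning}; since the noise $Hw_k$ is not isotropic you do need the latter (as you implicitly acknowledge by tracking $\snorm{HH'}_2$ as the sub-Gaussian parameter), which is precisely the modification the paper makes in the proof of Theorem~\ref{thm:upper_exponential}.
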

\begin{proof}
It follows as a special case of Theorem~\ref{thm:upper_exponential} for controllability index $\kappa=1$.
\end{proof}
Directly excited systems includes fully-actuated systems (number of inputs equal to the number of states $p=n$), or systems with isotropic noise as special cases.
However, having direct excitation might not always be the case. The combined noise and input matrices might be rank-deficient. For example, we might have actuation noise as in:
\[
x_{t+1}=Ax_t+B(u_t+w_t).
\]
In general, the noise might be ill-conditioned (zero across certain directions), while it might be physically impossible to actuate every state of the system. We call such systems underactuated or under-excited. It might still be possible to identify underactuated systems, e.g. if the pair $(A,\matr{{cc}H&B})$ is controllable. However, as we prove in the next section, the identification difficulty might increase dramatically.
\section{Exp-hard system classes}\label{sec:hard}
In this section, we show that there exist common classes of linear systems which are impossible or hard to identify with a finite amount of samples. As we will see, this can happen  when systems are under-actuated and under-excited. When only a limited number of system states is directly driven by inputs (or excited by noise) and the remaining states are only indirectly excited, 
then identification can be inhibited. 
 \subsection{Controllable systems with infinite sample complexity}

For presentation simplicity, let us assume that there are no exogenous inputs $B=0$. Similar results also hold when $B\neq 0$--see Remark~\ref{rem:inputs}. 
To fully identify the unknown matrix $A$, it is necessary that the pair $(A,H)$ is controllable. Furthermore, let's assume that the noise is meaningful, that is $\sigma_{\min}(H)\ge \sigma$ for some $\sigma>0$.
However, controllability of $(A,H)$ and $\sigma_{\min}(H)\ge \sigma$ are not sufficient to ensure system identification from a finite numer of samples.  
The following, perhaps unsurprising theorem, shows that for this class of linear systems, the worst-case sample complexity is infinite. 
 
 \begin{theorem}[Controllability is not sufficient for finite sample complexity]\label{thm:trivial_example}
Consider the class $\CC_n$ of systems $S=(A,H)\in\R^{n\times(n+r)}$ such that Assumption~\ref{ass_boundedness} is satisfied with $(A,H)$ controllable, and $\sigma_{\min}(H)\ge \sigma$ for some $\sigma>0$. For any system identification algorithm the sample complexity is infinite: there exist a failure probability $0\le \delta<1$ and a tolerance $\epsilon>0$ such that we cannot achieve
\begin{align}
		&\sup_{S\in\CC_n}\P_{S}(\snorm{A-\hat{A}_N}\ge \epsilon)\le \delta \nonumber
	\end{align}
with a finite number of samples $N$.
\end{theorem}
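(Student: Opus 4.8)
Here is how I would prove the theorem; I will assume, as in the surrounding discussion, that $B=0$, so the data record is simply the Gaussian path $x_0,\dots,x_N$ and the inputs carry no information about $S$.

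The plan is a two‑point (Le Cam) minimax argument adapted to the single‑trajectory regime. Write $\P_S$ for the law of $x_0,\dots,x_N$. First I would record the standard reduction: it is enough to exhibit, for \emph{every} horizon $N$, two systems $S_1,S_2\in\CC_n$ — allowed to depend on $N$ — with a \emph{fixed} separation $\snorm{A_1-A_2}_2\ge 2\epsilon$ and with $\mathrm{TV}(\P_{S_1},\P_{S_2})\le\tfrac12$ on the length‑$N$ record. Indeed, since no $\hat A_N$ can lie within $\epsilon$ of both $A_1$ and $A_2$, for any estimator
\[
\sup_{S\in\CC_n}\P_S\!\left(\snorm{A-\hat A_N}_2\ge\epsilon\right)\ \ge\ \max_{i\in\{1,2\}}\P_{S_i}\!\left(\snorm{A_i-\hat A_N}_2\ge\epsilon\right)\ \ge\ \frac{1-\mathrm{TV}(\P_{S_1},\P_{S_2})}{2}\ \ge\ \frac14 ,
\]
so the bound $\sup_{S}\P_S(\cdot)\le\delta$ fails for every finite $N$ once we set $\delta:=\tfrac18$.

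For the construction I would work in dimension $n=2$ (for larger $n$ one adjoins $n-2$ coordinates that are directly and independently excited by their own noise, which keeps the pair controllable and keeps $\sigma_{\min}(H)=\sigma$ by the PBH test). Fix a small coupling parameter $\alpha>0$ and a strictly stable diagonal value $\beta\in(0,1)$ small enough that $\snorm{A}_2\le M$ below; take $H=\sigma e_1\in\R^{2\times1}$, so $r=1<n$, $\sigma_{\min}(H)=\sigma$, and only the first coordinate is directly excited. Set
\[
A_\alpha^{\pm}=\begin{bmatrix}\beta & \pm\epsilon\\[3pt]\alpha & \beta\end{bmatrix},\qquad S_1=(A_\alpha^{+},H),\quad S_2=(A_\alpha^{-},H).
\]
Then $(A_\alpha^{\pm},H)$ is controllable for every $\alpha>0$ (the reachability matrix $[\,H\ \ A_\alpha^{\pm}H\,]$ has determinant $\sigma^2\alpha\neq0$), $\rho(A_\alpha^{\pm})=\beta+O(\sqrt{\alpha\epsilon})<1$ for $\alpha,\epsilon$ small, and $\snorm{A_\alpha^{\pm}}_2\le M$, so $S_1,S_2\in\CC_n$, with $\snorm{A_1-A_2}_2=2\epsilon$ for \emph{all} $\alpha$. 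The two design features that make this work are: the perturbation $(A_\alpha^{+}-A_\alpha^{-})x=2\epsilon\,x^{(2)}e_1$ always lies in $\mathrm{range}(HH')=\mathrm{span}(e_1)$ — otherwise the KL below is infinite and the two‑point bound is vacuous — and the disagreeing coordinate $x^{(2)}$ is only \emph{weakly} reachable, of typical size $O(\alpha)$.

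The divergence estimate uses the Markov chain rule for relative entropy. Under $S_i$ the one‑step law is $x_{k+1}\mid x_k\sim\N(A_\alpha^{\pm}x_k,\,HH')$ and, since the mean difference lies in $\mathrm{range}(HH')$,
\[
\KL\!\left(\P_{S_1}\,\big\|\,\P_{S_2}\right)=\frac12\sum_{k=0}^{N-1}\E_{S_1}\!\left[(A_\alpha^{+}x_k-A_\alpha^{-}x_k)'(HH')^{+}(A_\alpha^{+}x_k-A_\alpha^{-}x_k)\right]=\frac{2\epsilon^2}{\sigma^2}\sum_{k=0}^{N-1}\E_{S_1}\!\left[(x^{(2)}_k)^2\right].
\]
Because the excited block is strictly Schur stable, $\E_{S_1}[(x^{(1)}_k)^2]$ is bounded uniformly in $k$, and the identity $x^{(2)}_k=\alpha\sum_{j<k}\beta^{\,k-1-j}x^{(1)}_j$ (equivalently, the $(2,2)$ entry of the stationary Lyapunov solution is $O(\alpha^2)$) gives $\sup_k\E_{S_1}[(x^{(2)}_k)^2]\le C\alpha^2\sigma^2$ with $C=C(\beta,\epsilon)$. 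Hence $\KL(\P_{S_1}\|\P_{S_2})\le 2C\epsilon^2 N\alpha^2$. Choosing $\alpha=\alpha_N$ small enough that this is $\le\tfrac12$ (for instance $\alpha_N\asymp N^{-1/2}$), Pinsker's inequality gives $\mathrm{TV}(\P_{S_1},\P_{S_2})\le\sqrt{\tfrac{1}{2}\,\KL}\le\tfrac12$, while $\snorm{A_1-A_2}_2=2\epsilon$ is untouched; substituting into the first display settles the claim for every $N$, so the worst‑case sample complexity is infinite.

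The crux is not any individual estimate but the structural tension just used: to keep $\snorm{A_1-A_2}_2$ bounded below by a universal constant while the two trajectory laws collapse onto each other, the parameter perturbation must simultaneously live in the noise range and act only on a state whose second moment can be driven to zero through the coupling $\alpha$. The second requirement is exactly where strict stability of the excited sub‑dynamics enters — if $\rho(A)=1$ the variances of $x^{(1)}_k$, hence of $x^{(2)}_k$, would grow with $k$ and spoil the clean bound $\KL=O(N\alpha^2)$, so the non‑explosiveness hypothesis is doing real work. A secondary point to check with care is that controllability genuinely persists for \emph{every} $\alpha>0$ even though the least singular value of the controllability Gramian tends to $0$ as $\alpha\to 0$; this degenerating Gramian is precisely the effect that the quantitative results of the later sections make explicit.
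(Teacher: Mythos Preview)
Your proof is correct and follows essentially the same two-point minimax strategy as the paper: exhibit two controllable systems separated by $2\epsilon$ in which the perturbed entry multiplies a state whose variance is $O(\alpha^2)$ (your $\alpha$ plays the role of the paper's $\beta$), so the KL divergence is $O(N\alpha^2)$ and can be made arbitrarily small. The only cosmetic differences are that the paper works in dimension $3$ with nilpotent $A$ and invokes its packaged Lemma~\ref{lem:minimax}, whereas you work in dimension $2$ with a strictly stable $A$ and unwind Le~Cam/Pinsker by hand, letting the hard pair depend on $N$.
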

Theorem~\ref{thm:trivial_example} clearly shows that we may need stronger notions of controllability, as done in Section~\ref{sec:exp-classes}, in order to find classes of systems whose sample complexity is finite. 
The proof of Theorem~\ref{thm:trivial_example} uses tools from minimax theory~\cite{jedra2019sample}. Adapting these tools in our setting results in the following. 
 \begin{lemma}[Minimax bounds]\label{lem:minimax}
Let $\CC_n$ be a class of systems. Consider a confidence $0<\delta<1$ and an accuracy parameter $\epsilon>0$. Denote by $S_1,S_2\in \CC_n$ any pair of two systems with $A_1,H_1$, $A_2,H_2$ the respective unknown matrices, such that $\snorm{A_1-A_2}\ge 2\epsilon$. 
 Let $\KL(\P_{S_1},\P_{S_2})$ be the Kullback-Leibler divergence between the probability distributions of the data when generated under $S_1,S_2$ respectively. Then for any identification
 algorithm
 \begin{align}
		&\sup_{S\in\CC_n}\P_{S}(\snorm{A-\hat{A}_N}\ge \epsilon)\le \delta \nonumber
	\end{align}
holds only if
 \begin{align}
 \label{eq:pairwise_lower_bound}
 &\KL(\P_{S_1},\P_{S_2})\ge \log \frac{1}{3\delta},
 \end{align}
for all such pairs $S_1,S_2\in\CC_n$.  
 \end{lemma}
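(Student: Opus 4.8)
The plan is to use Le~Cam's two-point method: reduce the uniform recovery guarantee to a binary hypothesis test distinguishing $S_1$ from $S_2$, and then invoke the data-processing inequality for the Kullback--Leibler divergence. Fix any admissible pair $S_1,S_2\in\CC_n$ with $\snorm{A_1-A_2}\ge 2\epsilon$, fix a common control law $u_t\in\F_t$ (so that $\P_{S_1}$ and $\P_{S_2}$ are the laws of the trajectory $(x_0,u_0),\dots,(x_N,u_N)$ under this policy and the two dynamics), and suppose some algorithm achieves $\sup_{S\in\CC_n}\P_S(\snorm{A-\hat A_N}\ge\epsilon)\le\delta$. If the algorithm is randomized I would append its internal coin as a system-independent extra coordinate, which leaves $\KL(\P_{S_1},\P_{S_2})$ unchanged and lets me treat $\hat A_N$ as a measurable function of the (extended) data. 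First I would introduce the decision event $E\triangleq\{\snorm{\hat A_N-A_1}<\epsilon\}$. On $E$ the triangle inequality and the separation $\snorm{A_1-A_2}\ge 2\epsilon$ force $\snorm{\hat A_N-A_2}\ge\snorm{A_1-A_2}-\snorm{\hat A_N-A_1}>\epsilon$, hence $E\subseteq\{\snorm{\hat A_N-A_2}\ge\epsilon\}$. Since $S_1,S_2\in\CC_n$, the uniform guarantee then gives $\P_{S_1}(E)=1-\P_{S_1}(\snorm{\hat A_N-A_1}\ge\epsilon)\ge 1-\delta$ and $\P_{S_2}(E)\le\P_{S_2}(\snorm{\hat A_N-A_2}\ge\epsilon)\le\delta$; i.e.\ the test $\mathbf 1_E$ separates the two hypotheses with both error probabilities at most $\delta$.

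Next I would push $E$ through the data-processing inequality: since $\mathbf 1_E$ is a (randomized) statistic of the data,
\[
\KL(\P_{S_1},\P_{S_2})\ \ge\ d_{\mathrm{KL}}\!\big(\P_{S_1}(E)\,\big\|\,\P_{S_2}(E)\big),
\]
where $d_{\mathrm{KL}}(a\|b)=a\log\tfrac ab+(1-a)\log\tfrac{1-a}{1-b}$ is the divergence between two Bernoullis. Writing $p=\P_{S_1}(E)\ge 1-\delta$ and $q=\P_{S_2}(E)\le\delta$, and noting the only nontrivial regime is $\delta<1/3$ (for $\delta\ge 1/3$ the claimed bound $\log\frac1{3\delta}$ is $\le 0\le\KL$), I would use the elementary monotonicity of $d_{\mathrm{KL}}$ — increasing in its first argument above the diagonal and decreasing in its second argument below it — to get $d_{\mathrm{KL}}(p\|q)\ge d_{\mathrm{KL}}(1-\delta\,\|\,\delta)=(1-2\delta)\log\frac{1-\delta}{\delta}$.

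It then remains to check the one-variable inequality $(1-2\delta)\log\frac{1-\delta}{\delta}\ge\log\frac1{3\delta}$ on $(0,1/3]$, equivalently $(1-2\delta)\log(1-\delta)+2\delta\log\delta+\log 3\ge 0$, which is a routine calculus check (the left-hand side tends to $\log 3$ as $\delta\to 0^+$, equals $\tfrac13\log 2$ at $\delta=\tfrac13$, and stays positive in between). Chaining the three displays yields $\KL(\P_{S_1},\P_{S_2})\ge\log\frac1{3\delta}$, and since the pair $S_1,S_2$ was arbitrary this holds for all admissible pairs. I expect the only delicate point to be this last constant: the cruder route of lower-bounding $1-\mathrm{TV}(\P_{S_1},\P_{S_2})\ge 1-2\delta$ and applying the Bretagnolle--Huber inequality only gives $\KL\ge\log\frac1{4\delta}$, so matching the stated $\log\frac1{3\delta}$ genuinely requires the comparison with $d_{\mathrm{KL}}(1-\delta\|\delta)$ and the elementary bound on it; everything else is the standard Le~Cam two-point reduction.
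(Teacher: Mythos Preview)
Your proposal is correct and follows the standard Le~Cam two-point reduction, which is exactly what the cited reference \cite{jedra2019sample} does; the paper's own ``proof'' is just the trivial reduction $\sup_{S\in\CC_n}\ge\sup_{S\in\{S_1,S_2\}}$ followed by a citation to that proposition (with the constant $2.4$ loosened to $3$). You have supplied the argument the paper outsources: the decision event $E$, the data-processing inequality for the binary channel, and the monotonicity step $d_{\mathrm{KL}}(p\|q)\ge d_{\mathrm{KL}}(1-\delta\|\delta)$, together with the calculus check $(1-2\delta)\log\frac{1-\delta}{\delta}\ge\log\frac{1}{3\delta}$ on $(0,1/3]$ --- all of this is sound (in particular, the auxiliary function $f(\delta)=(1-2\delta)\log(1-\delta)+2\delta\log\delta+\log 3$ is convex in $\delta$ with a unique interior minimum that stays strictly positive). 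Your remark that Bretagnolle--Huber alone only yields $\log\frac{1}{4\delta}$ is also a useful observation about why the binary-KL route is needed to match the stated constant.
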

 \begin{proof}
Let $S_1,S_2$ be any pair satisfying the conditions. We trivially have that:
\[
\sup_{S\in\CC_n}\P_{S}(\snorm{A-\hat{A}_N}\ge \epsilon)\le \delta
\]
only if
\[
\sup_{S\in\set{S_1,S_2}}\P_{S}(\snorm{A-\hat{A}_N}\ge \epsilon)\le \delta.
\]
The remaining proof is identical to~\cite[Proposition 2]{jedra2019sample}, where we replaced constant $2.4$ with $3$ for simplicity and we did not expand the expression for $\KL(\P_{S_1},\P_{S_2})$ explicitly (term $\E_A(L_t)$ in~\cite{jedra2019sample}).
\end{proof}
Intuitively, to find difficult learning instances we construct systems which are sufficiently separated ($2\epsilon$ away). Meanwhile, the systems should be similar enough to generate data with as indistinguishable distributions as possible (small KL divergence). 
If the system is hard to excite, then the distributions of the states will look similar under many different matrices $A$, leading to smaller KL-divergence.
Unless we bound the pair $(A,H)$ away from uncontrollability, it might be impossible to satisfy~\eqref{eq:pairwise_lower_bound} for all pairs of systems with a finite number of samples. For example consider:	\[
  A=\matr{{ccc}0&\alpha&0\\0&0&\beta\\0&0&0},\,H=\matr{{cc}1&0\\0&0\\0&1},
  \] 
 It requires an arbitrarily large number of samples to learn $\alpha$
 if the coupling $\beta$ between $x_{t,2}$ and $x_{t,3}$ is arbitrarily small. The distribution of $x_{t,1}$ remains virtually the same as we perturb $\alpha$, since the state $x_{t,2}$ is under-excited for small $\beta$.
 
 \subsection{Robustly controllable systems can be exp-hard}
 \label{sec:exp-classes}
 
 Theorem~\ref{thm:trivial_example} implies that we need to bound the system away from uncontrollability in order to obtain non-trivial sample complexity bounds.  In order to formulate this, we review the notion of distance from uncontrollability, which is the norm of the smallest perturbation that makes $(A,H)$ uncontrollable.
  \begin{defin}[Distance from uncontrollability~\cite{eising1984metric}]
Let $(A,H)\in\R^{n\times(n+r)}$ be controllable. Then, the distance from uncontrollability is given by:
\begin{equation}
\begin{aligned}
    d(A,H)&\triangleq \inf \left\{\snorm{\matr{{cc}\Delta A&\Delta H}}_2:\right.\\
   & \left.(A+\Delta A,H+\Delta H)\text{ uncontrollable}\right\},
\end{aligned}
\end{equation}
where perturbations $(\Delta A,\Delta H)\in\mathbb{C}^{n\times(n+r)}$ are complex.
\end{defin}
Let us now consider linear systems that are robustly controllable. That is, classes of controllable linear systems whose distance from uncontrollability is lower bounded.  The lower bound is allowed to degrade gracefully (polynomially) with the system  dimension $n$. 
\begin{assumption}[Robust Controllability]\label{ass:away_from_uncontrollability}
Assume that system $(A,H)$ is robustly controllable, that is $(A,H)\in\R^{n\times(n+m)}$ is $\mu$-away from uncontrollability:
\begin{equation}
    d(A,H)\ge \mu,
\end{equation}
for some positive $\mu\ge 0$, with $\mu^{-1}\le \poly(n)$.
\end{assumption}
Assumption~\ref{ass:away_from_uncontrollability} is not restrictive as long as we allow the bound to degrade with the dimension. Common systems like the $n-$th order integrator have distance that degrades linearly with $n$--see Lemmas~\ref{lem:integrator_distance_to_uncontrollability},~\ref{lem:integrator_like_distance_to_uncontrollability} in the Appendix.
However, even for system classes that satisfy Assumption~\ref{ass:away_from_uncontrollability}, the next theorem shows that system identification can be $\exp$-hard. 
   \begin{theorem}[Exp(n)-hard classes]\label{thm:lower_exponential}
Consider the set $\CC_n$ of systems $S=(A,H)$ such that Assumptions~\ref{ass_boundedness},~\ref{ass:away_from_uncontrollability} are satisfied with $d(A,H)\ge\mu=8(n+1)^{-1}$. Then, for any system identification algorithm $\mathcal{A}$ the sample complexity is exponential in the state dimension $n$. There exist a confidence $0\le \delta<1$ and a tolerance $\epsilon>0$ such that
\begin{align}
		&\sup_{S\in\CC_n}\P_{S}(\snorm{A-\hat{A}_N}\ge \epsilon)\le \delta \nonumber
	\end{align}
is satisfied only if
\[
N\ge    \frac{4^{n-3}}{3\epsilon^2}\log \frac{1}{\delta}.
\]
\end{theorem}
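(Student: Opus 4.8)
The plan is to apply the minimax machinery of Lemma~\ref{lem:minimax}: to show exponential hardness it suffices to exhibit, for a suitable $\epsilon>0$, two systems $S_1=(A_1,H)$ and $S_2=(A_2,H)$ in the class $\CC_n$ with $\snorm{A_1-A_2}_2\ge 2\epsilon$ but whose data distributions are so close that $\KL(\P_{S_1},\P_{S_2})$ is of order $4^{-n}$. Then Lemma~\ref{lem:minimax} forces $N\gtrsim 4^{n}\epsilon^{-2}\log(1/\delta)$ for any algorithm. Guided by the motivating example~\eqref{eq:anisotropic} and the $3\times 3$ illustration after Lemma~\ref{lem:minimax}, I would take $A$ to be (a scaled) shift/companion-type nilpotent matrix with a single excited coordinate, so that the excitation of the last coordinate only reaches the first coordinate after $n-1$ multiplications by a factor roughly $1/2$ each, i.e.\ attenuated by $\sim 2^{-(n-1)}$; perturbing the top coupling entry $\alpha$ then changes the law of the observed, directly excited coordinate by an amount that is exponentially small. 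Concretely I expect $H=\sigma e_n$ (or $e_n$ with $\sigma=1$), $A_1=\tfrac12 J_n(0)$ (shifted chain) and $A_2=A_1+2\epsilon\,e_1 e_2'$ for an $\epsilon$ of constant order, so that $\snorm{A_1-A_2}_2=2\epsilon$.

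The core computation is the KL divergence between two Gaussian trajectory laws that differ only through the state matrix. Since there are no inputs ($B=0$), $x_k=\sum_{j=0}^{k-1}A^{k-1-j}Hw_j$ is Gaussian, and the joint law of $(x_0,\dots,x_N)$ is a zero-mean Gaussian whose covariance is determined by $A$ and $H$; equivalently, one can use the chain-rule form $\KL(\P_{S_1},\P_{S_2})=\sum_{t=0}^{N-1}\E_{S_1}\KL\big(\N(A_1 x_t,HH')\,\|\,\N(A_2 x_t,HH')\big)$, which is finite only when the perturbation lies in the range of $H$ — here $e_1 e_2' x_t = x_{t,2}\,e_1$ is \emph{not} in $\mathrm{range}(H)=\mathrm{span}(e_n)$, so I must instead perturb along a coupling that keeps the mismatch confined to excited directions, or work directly with the marginal law of the excited coordinate(s). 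The cleanest route: pick the perturbation so that $A_2-A_1$ feeds only into coordinate $n$ (the one with noise), e.g.\ perturb the entry $(n, n-1)$ or use an observable-canonical form where the scalar output $x_{t,1}$ carries all information; then $\KL$ reduces to $\tfrac{1}{2}\sum_t \E_{S_1}\snorm{(A_1-A_2)x_t}^2/\sigma^2$, and $(A_1-A_2)x_t$ is a scalar multiple of some coordinate of $x_t$ whose variance, by the chain-of-length-$(n-1)$ attenuation argument, is bounded by a constant times $4^{-(n-1)}\cdot N$ (using $\rho(A)\le1$, $\snorm{A^k}$ controlled, and stability of the partial sums). Summing over $t\le N$ gives $\KL\le c\,4^{-n} N \epsilon^2/\sigma^2$; setting this $<\log\frac{1}{3\delta}$ and rearranging yields exactly the claimed $N\ge \frac{4^{n-3}}{3\epsilon^2}\log\frac{1}{\delta}$ after tracking constants.

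Finally I must verify that both $S_1,S_2$ genuinely lie in $\CC_n$: Assumption~\ref{ass_boundedness} is immediate since the matrices are nilpotent-plus-small-perturbation with norm $\le M$ and $\rho(A)=0\le 1$; the nontrivial part is checking $d(A_i,H)\ge \mu = 8/(n+1)$, i.e.\ robust controllability. This is where I expect the main obstacle: bounding the distance from uncontrollability from below for the chain $\tfrac12 J_n(0)$ with $H=e_n$. I would use the eigenvalue/PBH characterization $d(A,H)=\min_{z\in\mathbb C}\sigma_{\min}\big([\,zI-A \ \ H\,]\big)$ and lower-bound this minimum over $z$ by a direct computation exploiting the bidiagonal structure — showing it degrades only like $1/n$, matching the $n$-th order integrator estimates cited in Lemmas~\ref{lem:integrator_distance_to_uncontrollability},~\ref{lem:integrator_like_distance_to_uncontrollability}. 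One has to choose the scaling constants ($\tfrac12$, the $8$ in $\mu$, the constant-order $\epsilon$) so that all three requirements — separation $2\epsilon$, KL bound $\sim 4^{-n}$, and $d\ge 8/(n+1)$ — hold simultaneously; that bookkeeping, rather than any single hard inequality, is the real work.
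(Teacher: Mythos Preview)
Your high-level plan---pick two systems $2\epsilon$ apart, bound $\KL(\P_{S_1},\P_{S_2})$ by the chain attenuation, invoke Lemma~\ref{lem:minimax}---is exactly the paper's strategy, and you correctly spot the obstruction: with a single column $H=e_n$, perturbing entry $(1,2)$ puts the mean shift outside $\mathrm{range}(H)$ and the KL blows up. But your proposed repair has a concrete gap. If you perturb entry $(n,n-1)$ as you suggest, then $(A_1-A_2)x_t=-2\epsilon\,x_{t,n-1}\,e_n$ and the KL becomes $2\epsilon^2\sum_t \E_{S_1}x_{t,n-1}^2$. With $A_1=\tfrac12 J_n(0)$ and $H=e_n$, the coordinate $x_{t,n-1}$ sits \emph{one} step from the noise: $x_{t,n-1}=\tfrac12 x_{t-1,n}=\tfrac12 w_{t-2}$, so $\E x_{t,n-1}^2=\tfrac14$, not $4^{-(n-1)}$. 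The chain-attenuation bound applies to the coordinate at the \emph{far} end of the chain, here $x_{t,1}$. For both constraints to hold at once---perturbation in the noisy row \emph{and} multiplying state under-excited---you would have to perturb entry $(n,1)$, not $(n,n-1)$; your text conflates the two.

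The paper sidesteps this tension by taking $H=\matr{{cc}e_1&\rho e_n}$ with \emph{two} columns and $A=\rho J_n(1)$, $\rho=\tfrac14$ (system~\eqref{eq:difficult_system}), and perturbing entry $(1,2)$. Row~1 now carries unit-variance noise from the $e_1$ column, so the conditional KL per step is the finite quantity $\tfrac12(2\epsilon\,x_{t,2})^2$; meanwhile the $e_1$ column never excites state~2 (upper-triangular $A$ gives $e_2'A^k e_1=0$ for all $k$), so $x_{t,2}$ is reached only through the length-$(n-1)$ chain from $\rho e_n$, yielding $\E_{S_1} x_{t,2}^2\le (2\rho)^{2(n-1)}/(1-4\rho^2)=4^{-n+2}/3$ (Lemma~\ref{lem:difficult_system_gramian}). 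Two smaller bookkeeping points also differ from your sketch: the distance-to-uncontrollability estimates (Lemmas~\ref{lem:integrator_distance_to_uncontrollability},~\ref{lem:integrator_like_distance_to_uncontrollability}) are proved for this $A$ with nonzero diagonal, not for the nilpotent $\tfrac12 J_n(0)$, so you would have to redo that Toeplitz computation; and $\epsilon$ is not of constant order but is taken $\le(16(n+1))^{-1}$, so that the triangle inequality (Lemma~\ref{lem:distance_to_uncontrollability_triangle}) keeps the perturbed system inside $\CC_n$.
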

Theorem~\ref{thm:lower_exponential} shows that even for robustly controllable classes of linear systems satisfying Assumptions~\ref{ass_boundedness},~\ref{ass:away_from_uncontrollability}, any system identification algorithm will have worst-case sample complexity that depends exponentially on the system dimension $n$.  The proof of Theorem~\ref{thm:lower_exponential} is based once more on minimax theory used in Lemma~\ref{lem:minimax}.

The reason for this learning difficulty is due to the need for indirect excitation. Consider, for example, chained systems, where every state indirectly excites the next one. If the states are weakly-coupled, then the exploratory signal (noise or input) attenuates
exponentially fast along the chain. As a concrete example, consider the following system for $\rho<0.5$:
 \begin{equation}\label{eq:difficult_system}
	A=\matr{{cccccc}\rho &\rho&0&\cdots&0&0\\0& \rho&\rho&\cdots&0&0\\& & &\ddots&\\0&0&0&\cdots&\rho&\rho\\0&0&0&\cdots&0&\rho},\, H=\matr{{cc}1&0\\\vdots&\vdots\\0&\rho} 
\end{equation}
 which satisfies Assumptions~\ref{ass_boundedness},~\ref{ass:away_from_uncontrollability}.
Matrix $A$ has a chained structure with weak coupling between the states. Noise can only excite states $x_{t,1},x_{t,n}$ directly. Until the exploratory noise signal reaches $x_{t,2}$ it decreases exponentially fast with the dimension $n$. As a result, it is difficult to learn $A_{12}$ due to lack of excitation. In terms of Lemma~\ref{lem:minimax}, the distribution of $x_{t,1}$ will remain virtually the same if we perturb $A_{12}$ since $x_{t,2}$ is under-excited.

 \begin{remark}[Exogenous inputs]\label{rem:inputs}
 When $B\neq 0$ similar results hold but with an additional interpretation. Consider system~\eqref{eq:difficult_system} but with $H=e_1$, $B=
 \rho e_n$. Then, if we apply white-noise input signals we have two possibilities: i) the control inputs have bounded energy per Assumption~\ref{ass_boundedness} but we suffer from exponential sample complexity or ii) we obtain polynomial sample complexity but we allow the energy of the inputs to increase exponentially with the dimension. 
 From this alternative viewpoint a system is hard to learn if it requires exponentially large control inputs.
 
 \begin{remark} The constant $8$ in $8(n+1)^{-1}$ in the statement of Theorem~\ref{thm:lower_exponential} is not important in our analysis. We could modify Theorem~\ref{thm:lower_exponential} so that $8$  can be replaced by any smaller constant. In particular, we can decrease $8$ by considering systems with smaller chains, which still have exponential sample complexity. Instead of system~\eqref{eq:difficult_system}, we can consider for example the following. Let $J_{\lfloor n/m\rfloor}(1)$ be the Jordan block of size $\lfloor n/m\rfloor$, for some $m$, and eigenvalue 1 and define
 \begin{equation*}
	A=\matr{{c|c}\rho J_{\lfloor n/m\rfloor}(1)&0\\\hline0&I_{n-\lfloor n/m\rfloor}},\, H=\matr{{cc|ccc}e_1&\rho e_{\lfloor n/m\rfloor}&e_{\lfloor n/m\rfloor+1}&\cdots&e_n}.
\end{equation*}
Notice that we reduced the size of the chain by $1/m$ and we added $n-\lfloor n/m\rfloor$ directly excited states. By increasing $m$, we can achieve a larger distance to uncontrollability (constant smaller than $8$). However, we will still have exponential sample complexity of the order of at least $\lfloor n/m\rfloor$, based on the length of the chain.
\end{remark}
 \end{remark}
 
\section{Controllability index affects learnability}\label{sec:guaranteed}
Structural system properties of an underactuated system, such as the chained structure in the dynamics, can be critical in making system identification easy or hard.  This poses novel questions about understanding how system theoretic properties affect system learnability as defined in Definitions~\ref{def:poly} and~\ref{def:exp}.
We begin a new line of inquiry by characterizing how the controllability index $\kappa$, a critical structural system property,  affects the statistical properties of system identification. A brief review of the concept of controllability index can be found in the Appendix. It can be viewed as a structural measure of whether a system is directly actuated or underactuated resulting in long chains. The following theorem, is the first result connecting the controllability index with
sample complexity bounds.

\begin{theorem}[Controllability index-dependent upper bounds]\label{thm:upper_exponential}
Consider the set $\CC_n$ of systems $S=(A,B,H)$ such that Assumption~\ref{ass_boundedness} is satisfied. Let Assumption~\ref{ass:away_from_uncontrollability} be satisfied for the all pairs $(A,\matr{{cc}H&B})$.  Furthermore assume that  the controllability index of all pairs $(A,\matr{{cc}H&B})$ in the class is upper bounded by $\kappa$. Then, under the least squares system identification algorithm and white noise inputs $u_k~\sim\N(0,I_p)$, we obtain that
\begin{align}
		&\sup_{S\in\CC_n}\P_{S}(\snorm{A-\hat{A}_N}\ge \epsilon)\le \delta \nonumber
	\end{align}
is satisfied for
	\[
	N\ge    \poly^{\kappa}(n,M)\poly(\epsilon^{-1},\log 1/\delta).
	\]
\end{theorem}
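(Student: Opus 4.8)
The plan is to reduce to the finite-sample least-squares analysis of~\cite{simchowitz2018learning} and then control the two system-dependent quantities that appear there — the least singular value of the controllability Gramian and the log-determinant of the Gramian — using the controllability index $\kappa$, robust controllability (Assumption~\ref{ass:away_from_uncontrollability}), and Lemma~\ref{lem:powers_of_A}. Concretely, I would run the joint least-squares estimator~\eqref{eq:least_squares} of $[A\ B]$ with white-noise inputs $u_t\sim\N(0,I_p)$, regressor $z_t=[x_t'\ u_t']'$, and driving noise $Hw_t$. Because the inputs are independent across time and of the state, the population Gramian of $z_t$ at time $t$ is block structured: the $x$-block equals the combined controllability Gramian $\Gamma_t\triangleq\Gamma_t(A,[H\ B])=\sum_{j=0}^{t-1}A^j(HH'+BB')(A')^j$ and the $u$-block equals $I_p$. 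Invoking Theorem~2.1 of~\cite{simchowitz2018learning} with cutoff index $k=\kappa$, the event $\snorm{A-\hat A_N}_2\le\epsilon$ holds with probability at least $1-\delta$ provided
\[
N\ \ge\ \frac{c\,\big(\Tr(HH')+\Tr(BB')\big)}{\sigma_{\min}(\Gamma_\kappa)}\,\frac{1}{\epsilon^2}\Big(n\log\tfrac{n}{\delta}+\log\det\big(\Gamma_N\Gamma_\kappa^{-1}\big)\Big).
\]

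Next I would bound each ingredient. The numerator is $\Tr(HH')+\Tr(BB')\le 2nM^2$ by Assumption~\ref{ass_boundedness}. For $\sigma_{\min}(\Gamma_\kappa)$: since the controllability index of $(A,[H\ B])$ is at most $\kappa$, the length-$\kappa$ controllability matrix $[\,[H\ B]\ \ A[H\ B]\ \cdots\ A^{\kappa-1}[H\ B]\,]$ already has full row rank $n$, so $\Gamma_\kappa\succ0$; quantitatively, by the new controllability-Gramian lower bound established in this paper, which converts $d(A,[H\ B])\ge\mu$ and $\snorm{A}_2\le M$ into a lower bound on $\sigma_{\min}(\Gamma_\kappa)$, together with $\mu^{-1}\le\poly(n)$, one obtains $\sigma_{\min}(\Gamma_\kappa)^{-1}\le\poly^\kappa(n,M)$. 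For the log-determinant: $\log\det\Gamma_N\le n\log\snorm{\Gamma_N}_2$, and by Lemma~\ref{lem:powers_of_A}, $\snorm{\Gamma_N}_2\le\sum_{j=0}^{N-1}\snorm{A^j}_2^2\,\snorm{[H\ B]}_2^2\le 2M^2\,N(eN)^{2(n-1)}\max\{M^{2n},1\}$, so $\log\det\Gamma_N\le\poly(n)\log N+\poly(n,\log M)$; and $\log\det\Gamma_\kappa^{-1}\le n\log\sigma_{\min}(\Gamma_\kappa)^{-1}\le\kappa\cdot\poly(n,\log M)$, which is only polynomial. Here Lemma~\ref{lem:powers_of_A} is exactly what prevents the Jordan-transformation condition number from entering and keeps the $\log\det$ term polynomial uniformly over the class.

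Substituting these bounds yields an inequality of the form $N\ge\poly^\kappa(n,M)\,\epsilon^{-2}\big(\poly(n)\log(n/\delta)+\poly(n)\log N+\poly(n,\log M)\big)$, in which $N$ occurs on the right only through $\log N$. Solving this implicit inequality by the standard argument (if $N\ge a(b+c\log N)$ then $N\le 2ab+2ac\log(2ac)$ suffices), absorbing $\log M$ into $\poly(M)$ and the residual $\poly(n)$ factors into the $\poly^\kappa(n,M)$ term, gives the claimed bound $N\ge\poly^\kappa(n,M)\,\poly(\epsilon^{-1},\log1/\delta)$. As a sanity check, for $\kappa=1$ the hypothesis ``controllability index $\le 1$'' is precisely $HH'+BB'\succ0$, i.e. direct excitation, and $\poly^1(n,M)=\poly(n,M)$, so Theorem~\ref{thm:directly_excited} indeed follows as the $\kappa=1$ special case.

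The step I expect to be the real obstacle is the controllability-Gramian lower bound $\sigma_{\min}(\Gamma_\kappa)^{-1}\le\poly^\kappa(n,M)$: showing that robust controllability together with a bound $\kappa$ on the controllability index forces $\sigma_{\min}$ of the \emph{length-}$\kappa$ Gramian to decay no faster than exponentially in $\kappa$ rather than in $n$. I would attack this by relating $\sigma_{\min}$ of the controllability matrix to the PBH-type distance $d(A,[H\ B])=\min_{z\in\mathbb{C}}\sigma_{\min}([zI-A\ \ H\ \ B])$ through a resolvent/interpolation argument that only invokes $\kappa$ powers of $A$ (using $\snorm{A}_2\le M$ to control the interpolation error), so that the exponent tracks $\kappa$. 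A secondary, minor technical point is to verify that the joint least-squares bound of~\cite{simchowitz2018learning} really reduces the $A$-estimation error to $\sigma_{\min}$ of the $x$-block Gramian in the presence of exogenous inputs, i.e. that the vanishing cross-covariance $\E[x_tu_t']=0$ makes the $u$-direction harmless; everything else is bookkeeping on top of that analysis and Lemma~\ref{lem:powers_of_A}.
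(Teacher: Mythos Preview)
Your reduction to the finite-sample least-squares analysis of \cite{simchowitz2018learning}, with the two system-dependent quantities controlled by Theorem~\ref{thm:upper_bound_control_energy} (the Gramian lower bound) and Lemma~\ref{lem:powers_of_A} (the $\log\det$ upper bound), and the final removal of $\log N$ by an elementary inequality, is exactly the paper's proof. The minor bookkeeping differences---the paper invokes Theorems~2.4 and~4.2 of \cite{simchowitz2018learning} rather than 2.1, explicitly verifies the block small-ball condition~\eqref{eq:small_ball} with $\Gamma_{\mathrm{sb}}=\diag(\Gamma_\kappa,I_p)$, and computes the sub-Gaussian parameter of $Hw_k$---are precisely the ``secondary technical points'' you flag, and they go through as you expect.

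The one place your plan genuinely diverges is the proof strategy for the Gramian bound $\sigma_{\min}(\Gamma_\kappa)^{-1}\le\poly^\kappa(M/\mu)$ itself. You propose a resolvent/interpolation argument tying $\sigma_{\min}$ of the length-$\kappa$ controllability matrix directly to the PBH distance $d(A,[H\ B])=\min_{z}\sigma_{\min}([zI-A\ H\ B])$. The paper instead passes to the staircase (block-Hessenberg) canonical form of Proposition~\ref{prop:canonical}: robust controllability forces every subdiagonal coupling block $A_{i+1,i}$ to satisfy $\sigma_{\min}(A_{i+1,i})\ge\mu$, and then the block-triangular structure of the controllability matrix in this basis yields a recursion for $\snorm{\tilde{\C}_k^\dagger}$ whose growth is governed by $\kappa$ rather than $n$. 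The staircase route is constructive and makes the $\kappa$-dependence transparent (each step of the recursion corresponds to one block layer, and there are exactly $\kappa$ of them), whereas your resolvent idea, as stated, does not obviously isolate $\kappa$ from $n$---the PBH quantity $\min_z\sigma_{\min}([zI-A\ H\ B])$ carries no explicit information about the controllability index, so you would still need some structural input (essentially the staircase decomposition) to make the exponent track $\kappa$.
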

Theorem~\ref{thm:upper_exponential} formalizes our intuition since the controllability index is the length of the chain from input excitation towards the most distant state in the chain.  Hence, systems with a large number of inputs (or noise) and small controllability index ($\kappa<<n$) are easy to identify. The directly excited case with isotropic noise, presented in Theorem~\ref{thm:directly_excited}, is a special case corresponding to a controllability index $\kappa=1$, recovering prior polynomial bounds. 

The implications of Theorems~\ref{thm:lower_exponential},~\ref{thm:upper_exponential} illustrate the impact controllability properties have on system learnability--see Figure~\ref{Fig:diagram}. 
Classes of systems with small controllability index~$O(1)$ have polynomial sample complexity. Classes where the index grows linearly $\Omega(n)$ can be exponentially hard in the worst case in general. There might still be subclasses of systems with large controllability indexes which nonetheless can be identified with a polynomial number of samples. However, we cannot provide any guarantees without further assumptions.

The proof of Theorem~\ref{thm:upper_exponential} crucially depends on the following system theoretic result that bounds the least singular value of the controllability Gramian (a quantitative measure of controllability) with the controllability index (a structural measure of controllability).  
\begin{theorem}[Controllability gramian bound]\label{thm:upper_bound_control_energy}
	Consider a system $(A,H)$ that satisfies Assumptions~\ref{ass_boundedness},~\ref{ass:away_from_uncontrollability}. Let $\kappa$ be its controllability index. Then, the least singular value of the gramian $\Gamma_{\kappa}$ is lower bounded by:
	\[
	\sigma_{\min}^{-1}(\Gamma_{\kappa})\le \poly^{\kappa}(M/\mu).
	\]
\end{theorem}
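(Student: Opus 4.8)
The plan is to show that $\sigma_{\min}(\Gamma_\kappa)$ cannot be too small by exhibiting, for every unit vector $v\in\R^n$, a control energy strategy of modest size that steers the state in the direction $v$; equivalently, I want to lower bound $v'\Gamma_\kappa v=\sum_{j=0}^{\kappa-1}\snorm{H'(A')^j v}^2$. The key structural fact I would exploit is the PBH/eigenvector characterization behind the distance from uncontrollability: Assumption~\ref{ass:away_from_uncontrollability} says that for every complex $\lambda$ and every unit vector $z$, $\snorm{\matr{{cc}z'(A-\lambda I)&z'H}}_2\ge \mu$, so in particular for every unit $z$ the pair $(z'A, z'H)$ cannot be ``small'' simultaneously. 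I would then use the controllability index: since $\kappa$ is the smallest integer for which $\matr{{cc}H&AH&\cdots&A^{\kappa-1}H}$ has rank $n$, the Krylov-type subspaces $\mathcal V_j=\mathrm{span}\{H,AH,\dots,A^{j-1}H\}$ strictly increase until they fill $\R^n$ at step $\kappa$. The idea is that if $v'\Gamma_\kappa v$ were tiny, then $v$ would be nearly orthogonal to all of $\mathcal V_\kappa=\R^n$, forcing $v$ to be nearly an ``uncontrollable direction,'' contradicting robust controllability — but quantifying this with the right exponential-in-$\kappa$ dependence is the crux.

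Concretely, I would proceed by induction / a telescoping argument on the index. Suppose $v'\Gamma_\kappa v=\eta$ is small. Then $\snorm{H'v}^2\le\eta$ and $\snorm{H'A'v}^2\le\eta$, etc. Write $A'v = (\text{component in }\mathrm{span}\{v\}) + (\text{rest})$; the robust controllability bound applied with $z=v/\snorm v$ and $\lambda = v'A'v/\snorm v^2$ (the Rayleigh quotient) gives $\snorm{v'(A'-\bar\lambda I)}^2 + \snorm{v'H}^2 \ge \mu^2\snorm v^2$, and since $\snorm{v'H}^2=\snorm{H'v}^2\le\eta$ is negligible, the vector $w_1 \triangleq (A'-\bar\lambda I)v = A'v-\bar\lambda v$ has $\snorm{w_1}\ge \mu\snorm v-o(1)$. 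Crucially $w_1$ is a genuinely new direction, and $\snorm{H'w_1}\le \snorm{H'A'v}+|\bar\lambda|\snorm{H'v}\le \sqrt\eta + M\sqrt\eta$. Iterating — applying the PBH bound now to $w_1$, producing $w_2=(A'-\bar\lambda_2 I)w_1$, and so on — I generate vectors $w_0=v,w_1,\dots,w_{\kappa-1}$ that are ``spread out'' (each with norm at least $\sim\mu$ times the previous, up to additive errors involving $\sqrt\eta$ and powers of $M$), all of which are nearly annihilated by $H'$ after at most $\kappa$ steps. Because these $\kappa$ directions together must span (by the controllability index), and yet each lies nearly in $\ker H'$ after $<\kappa$ applications of $A'$, I get a contradiction unless $\eta\gtrsim (\mu/M)^{c\kappa}$ for an appropriate constant, which is exactly the claimed bound $\sigma_{\min}^{-1}(\Gamma_\kappa)\le\poly^\kappa(M/\mu)$. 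I would present the bookkeeping via a recursion of the form $\varepsilon_{j+1}\le c(M)(\varepsilon_j + \sqrt\eta)/\mu$ for the ``defect'' at step $j$, solve it to get $\varepsilon_\kappa \le (C(M)/\mu)^\kappa\sqrt\eta$, and then use that $w$'s span must be full-dimensional with the smallest Gram-determinant controlled, forcing $\sqrt\eta$ above a threshold.

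The main obstacle I anticipate is making the ``these $\kappa$ nearly-uncontrollable directions must span'' step quantitative with a clean exponential-in-$\kappa$ constant rather than, say, a tower or a constant depending badly on $n$. The naive approach — decompose $\R^n$ via the Krylov flag and track one new direction per step — is fine combinatorially, but controlling the conditioning of the resulting (non-orthogonal) basis $\{w_j\}$ so that a lower bound on each $\snorm{w_j}$ translates into a lower bound on $\sigma_{\min}$ of the relevant matrix requires care; I would likely orthogonalize the $w_j$ incrementally (Gram–Schmidt) and bound the error of each projection by the accumulated $\varepsilon_j$, showing the orthogonalized vectors still have norm $\ge \mu/2$ say, and then the Gramian $\Gamma_\kappa$ restricted to their span is bounded below. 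A secondary subtlety is that $\Gamma_\kappa$ sums $\snorm{H'(A')^j v}^2$ for $j=0,\dots,\kappa-1$, whereas the iteration naturally produces $(A'-\bar\lambda_j I)$-shifted combinations; expanding these shifts back into a polynomial in $A'$ of degree $\le\kappa-1$ keeps everything inside $\Gamma_\kappa$, but the coefficients bring in extra powers of $M$ (hence the $\poly^\kappa(M/\mu)$ rather than $\poly^\kappa(1/\mu)$), and I would want to verify that the eigenvalue estimates $|\bar\lambda_j|\le M$ (from $\rho(A)\le 1$ and $\snorm A_2\le M$) suffice to keep these coefficients at most $\poly^\kappa(M)$.
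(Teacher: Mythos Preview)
Your approach differs from the paper's, and there is a real gap at the ``contradiction'' step, not just a quantitative difficulty. The vectors $w_0=v,w_1,\dots,w_{\kappa-1}$ you build all lie in the Krylov subspace $\mathrm{span}\{v,A'v,\dots,(A')^{\kappa-1}v\}$, which has dimension at most $\kappa$ and is typically a proper subspace of $\R^n$ when $\kappa<n$. The controllability index asserts that the columns of $\matr{{cccc}H&AH&\cdots&A^{\kappa-1}H}$ span $\R^n$; it says nothing about the $A'$-Krylov subspace generated by a single direction $v$. Hence the claim ``these $\kappa$ directions together must span (by the controllability index)'' is false in general, and without it no contradiction materializes: you simply have at most $\kappa$ vectors nearly in $\ker H'$, which is entirely consistent since $\dim\ker H'=n-r$ can easily be $\ge\kappa$. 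Your PBH step does give $\snorm{w_{j+1}}^2+\snorm{H'w_j}^2\ge\mu^2\snorm{w_j}^2$, so $\snorm{w_j}$ stays bounded below as long as $\snorm{H'w_j}$ is small; but after step $\kappa-1$ you lose control of $\snorm{H'w_\kappa}$ (it involves $H'(A')^\kappa v$, not bounded by $\eta$), and the iteration halts without producing any tension with controllability. The obstacle you flagged is therefore understated: the span claim is qualitatively wrong, not merely hard to quantify.

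The paper's argument is structural and uses robust controllability in a different way. It passes to the staircase (block-Hessenberg) canonical form of $(A,H)$ under a unitary similarity, where $H$ has a single nonzero block $H_1$ and $A$ has subdiagonal blocks $A_{2,1},\dots,A_{\kappa,\kappa-1}$. The key lemma is that each subdiagonal block satisfies $\sigma_{\min}(A_{i+1,i})\ge\mu$, since a rank drop there makes the staircase pair uncontrollable via a perturbation of norm $<\mu$. In this basis $\C_\kappa$ is block triangular with diagonal blocks $H_1,\,A_{2,1}H_1,\,\dots,\,A_{\kappa,\kappa-1}\cdots A_{2,1}H_1$; the paper then constructs an explicit right-inverse of $\C_\kappa$ recursively and shows its norm obeys a linear matrix recursion in $k$ whose solution is at most $\poly^\kappa(M/\mu)$. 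The staircase form supplies exactly $\kappa$ coupling blocks whose products build the full rank of $\C_\kappa$, which is why the exponent is $\kappa$; a single-vector Arnoldi iteration on $v$ does not see this block decomposition.
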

The above theorem is of independent interest, since it states that the controllability index rather than the dimension $n$ controls how fast the controllability Gramian degrades. While the above bound may be loose in general, it gives us qualitative insights about how system structure affects the hardness of input excitation and system identification. 
Our proof exploits the so-called ``staircase" (or Hessenberg) canonical representation~\eqref{eq:canonical_form} of state space systems~\cite{Dooren03}--see Appendix. The main idea is that if a system is robustly controllable then the coupling between the states is bounded away from zero. Hence, we can avoid the essentially uncontrollable systems of Theorem~\ref{thm:trivial_example} which lead to infinite sample complexity.
\section{Simulations}\label{sec:simulations}
We study three simulation scenarios to illustrate the qualitative implications of our results. In the first two cases, we verify that the sample complexity of the least squares algorithm can indeed grow exponentially with the dimension. In the third case, we investigate how the controllability index affects the sample complexity. In all cases, we perform Monte Carlo simulations to compute the empirical mean error $\snorm{A-\hat{A}_N}_2$ and we count the number of samples required to have error less than $\epsilon$, for some $\epsilon>0$. For numerical stability in the least squares estimator~\eqref{eq:least_squares} we used a regularization term (ridge  regression) with coefficient $0.001$.

In the first example in Section~\ref{sec:formulation}, Figure~\ref{Fig:motivational_example}, we used $1000$ Monte Carlo iterations to approximate the empirical average. We modeled the noise as gaussian with $w_k\sim\N(0,0.5)$ and used white noise inputs $u_k\sim\N(0,10)$. The sample complexity of the least squares algorithm seems to be exponential with the dimension. In Section~\ref{sec:hard}, we showed that such systems exhibit exponential sample complexity due to the weak coupling between the states.

In the second example, we study the behavior of Jordan blocks actuated from the last state. Let $J_n(\lambda)$ be a Jordan block of dimension $n$ and eigenvalues all $\lambda$. We consider the system $A=J_n(\lambda)$, $H=0.1e_n$, $B=5e_n$, which means we excite directly only state $x_{t,n}$. We repeat the same experiment as before for $1000$ Monte Carlo simulations with $w_k,u_k\sim\N(0,1)$ and for $\epsilon=0.005$. In Figure~\ref{Fig:jordan_block}, it seems that the complexity of the least squares algorithm is also exponential when $0<\lambda<1$. In this case the coupling between the states is not weak. However, certain subspaces might still be hard to excite. As $\lambda$ approaches the unit circle eigenvalue $1$ the complexity improves. For $\lambda=1$, after $n=9$ Matlab returned inaccurate results
as the condition number of the data becomes very large. Hence, we do not report any results beyond $n=9$. However, based on simulations for small $n$ it might be possible that the system can be learned by only a polynomial number of samples.
The intuition might be that in this case instability helps with excitation~\cite{simchowitz2018learning}. It is an open problem to prove or disprove exponential lower bounds for the Jordan block when $0<\lambda<1$. Similarly, we leave it as an open problem to prove or disprove polynomial upper bounds for the Jordan block when $\lambda=1$. 
\begin{figure}[h] \centering
	\definecolor{mycolor1}{rgb}{0.00000,0.44700,0.74100}%
\definecolor{mycolor2}{rgb}{0.85000,0.32500,0.09800}%
\definecolor{mycolor3}{rgb}{0.92900,0.69400,0.12500}%
\definecolor{mycolor4}{rgb}{0.49400,0.18400,0.55600}%
\resizebox{0.55\columnwidth}{!}{\begin{tikzpicture}
\begin{axis}[%
		width=6.028in,
		height=2.754in,
		at={(1.011in,0.642in)},
		scale only axis,
		xmin=5,
		xmax=13,
		tick label style={font=\LARGE},
			ylabel style={font=\huge},
	xlabel style={font=\huge},
		ymode=log,
		ymin=10,
		ymax=1000,
		xtick={5,6,7,8,9,10,11,12,13},
		yminorticks=true,
		xlabel={dimension $n$},
		ylabel={samples $N$},
		xmajorgrids,
		ymajorgrids,
		axis background/.style={fill=white},
		legend style={legend cell align=left, align=left, draw=white!15!black,font=\huge,legend pos=north west}
		]
		\addplot [color=mycolor1,line width=1.5pt]
		table[row sep=crcr]{%
5   65\\
6   88\\
7	125\\
8	171\\
9	236\\
10	326\\
11	480\\
12	700\\
13	999\\
% 14	1439\\
		};
		\addlegendentry{$\lambda$=0.5}
			\addplot [color=mycolor2, dashed,line width=2pt]
		table[row sep=crcr]{%
5   53\\
6   71\\
7	96\\
8	134\\
9	181\\
10	254\\
11	357\\
12	507\\
13	731\\
% 14	1013\\
		};
		\addlegendentry{$\lambda$=0.6}
			\addplot [color=mycolor3,dashdotted,line width=2pt]
		table[row sep=crcr]{%
5   41\\
6   52\\
7	68\\
8	92\\
9	115\\
10	147\\
11	194\\
12	276\\
13	368\\
% 14	1439\\
		};
		\addlegendentry{$\lambda$=0.7}
					\addplot [color=mycolor4,dotted,line width=2pt]
		table[row sep=crcr]{%
5   25\\
6   31\\
7	36\\
8	40\\
9	45\\
% 14	1439\\
		};
		\addlegendentry{$\lambda$=1}
	\end{axis}
	\end{tikzpicture}}%}
 	\caption{Sample complexity of identifying the Jordan block of size $n$ and eigenvalues all $\lambda$, actuated from the last state. The figure shows the minimum number of samples $N$ such that the (empirical) average error $\mathbb{E}\snorm{A-\hat{A}_N}_2$ is less than $0.005$. The sample complexity appears to be increasing exponentially with the dimension $n$ for $\lambda<1$.  For $\lambda=1$, Matlab returns inaccurate results for $n\ge 10$ since the condition number of the data is very large. However, in the regime $5\le n\le 9$, the complexity seems to be polynomial, increasing in 5 sample increments. }
	\label{Fig:jordan_block}
\end{figure}
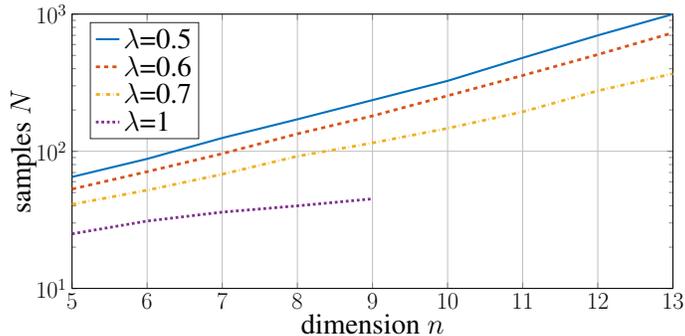

In the third example, we consider the Jordan block $A=J_n(0.5)$ with noise $H=0.1e_n$. We start from $B=5e_n$ and we gradually add more exogenous inputs to decrease the controllability index: we try $B=5\matr{{cc}e_{n}&e_{\lceil n/2\rceil}}$ and $B=5\matr{{ccc}e_n&e_{n-2}&\dots}$ which correspond to indices $\kappa=\lceil n/2\rceil$ and $\kappa=2$ respectively. We repeat the same experiment as before for $1000$ Monte Carlo simulations with $w_k,u_k\sim\N(0,1)$ and for $\epsilon=0.005$. In Figure~\ref{Fig:contr_index}, it seems that the sample complexity remains exponential when $\kappa=\lceil n/2\rceil$. However, when $\kappa=2$ there is a phase transition and the sample complexity becomes polynomial with the dimension.

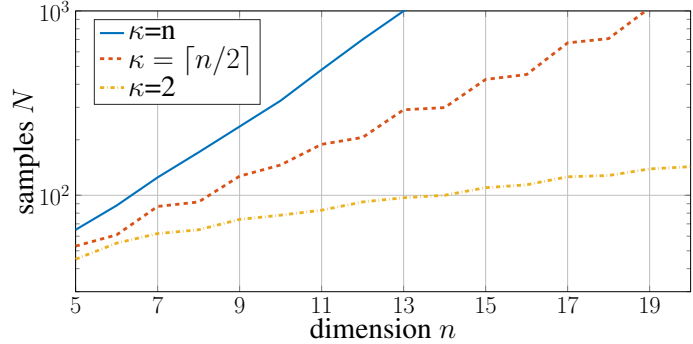
\begin{figure}[t] \centering
	\definecolor{mycolor1}{rgb}{0.00000,0.44700,0.74100}%
\definecolor{mycolor2}{rgb}{0.85000,0.32500,0.09800}%
\definecolor{mycolor3}{rgb}{0.92900,0.69400,0.12500}%
\resizebox{0.55\columnwidth}{!}{\begin{tikzpicture}
\begin{axis}[%
		width=6.028in,
		height=2.754in,
		at={(1.011in,0.642in)},
		scale only axis,
		xmin=5,
		xmax=20,
		tick label style={font=\LARGE},
			ylabel style={font=\huge},
	xlabel style={font=\huge},
		ymode=log,
		ymin=30,
		ymax=1000,
		xtick={5,7,9,11,13,15,17,19},
% 		yminorticks=true,
		xlabel={dimension $n$},
		ylabel={samples $N$},
		xmajorgrids,
		ymajorgrids,
		axis background/.style={fill=white},
		legend style={legend cell align=left, align=left, draw=white!15!black,font=\huge,legend pos=north west}
		]
		\addplot [color=mycolor1,line width=1.5pt]
		table[row sep=crcr]{%
5   65\\
6   88\\
7	125\\
8	171\\
9	236\\
10	326\\
11	480\\
12	700\\
13	999\\
14  1439\\
15  2116\\
% 14	1439\\
		};
		\addlegendentry{$\kappa$=n}
			\addplot [color=mycolor2, dashed,line width=2pt]
		table[row sep=crcr]{%
5   53\\
6   61\\
7	87\\
8	92\\
9	127\\
10	146\\
11	189\\
12	206\\
13	291\\
14 299\\
15 425\\
16 452\\
17 671\\
18 708\\
19 1038\\
20 1151\\
% 14	1013\\
		};
		\addlegendentry{$\kappa=\lceil n/2\rceil$}
			\addplot [color=mycolor3,dashdotted,line width=2pt]
		table[row sep=crcr]{%
5   45\\
6   55\\
7	62\\
8	65\\
9	74\\
10	78\\
11	83\\
12	92\\
13	97\\
14 100\\
15 110\\
16 114\\
17 126\\
18 128\\
19 139\\
20 143\\
		};
		\addlegendentry{$\kappa$=2}
	\end{axis}
	\end{tikzpicture}}%}
 	\caption{Sample complexity of identifying the Jordan block $J_n(0.5)$ of size $n$ and eigenvalues all $0.5$, for different values of the controllability index. The figure shows the minimum number of samples $N$ such that the (empirical) average error $\mathbb{E}\snorm{A-\hat{A}_N}_2$ is less than $0.005$. The sample complexity appears to be increasing exponentially with the dimension $n$ for $\kappa=\Theta(n)$.  For $\kappa=2$, the sample complexity is much smaller and increases polynomially.}
	\label{Fig:contr_index}
\end{figure}

\section{Conclusion}
The results of this paper paint a broader and more diverse landscape about the statistical complexity of learning linear systems, summarized in Figure~\ref{Fig:diagram} according to the controllability index $\kappa$ of the considered system class.
While statistically easy cases that were previously known are captured by Theorem~\ref{thm:directly_excited}, we also showed that hard system classes exist (Theorem~\ref{thm:lower_exponential}). By exploiting structural system theoretic properties, such as the controllability index,  we broadened the class of easy to learn linear systems (Theorem~\ref{thm:upper_exponential}).
\begin{figure}[h] \centering{
		\includegraphics[width=0.6\columnwidth]{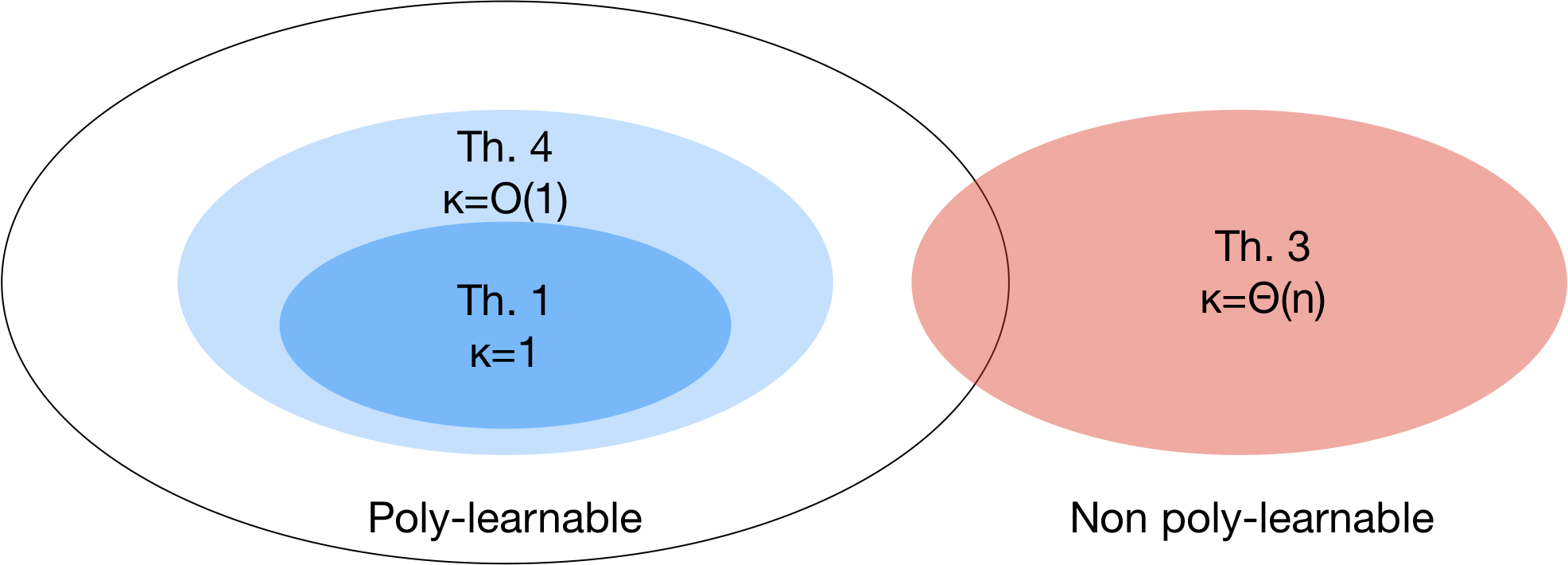}}   
	\caption{Sample complexity classes for linear systems.
	according to their controllability index. 
	}
	\label{Fig:diagram}
\end{figure}

 Our results pose numerous future questions for exploiting other system properties (e.g. observability) for efficiently learning classes of partially-observed linear systems or nonlinear systems. 
It remains an open problem to prove whether or not the $n-$th order integrator is poly-learnable as discussed in Section~\ref{sec:simulations}. Similarly, it is an open problem to prove whether or not the Jordan block of size $n$ and eigenvalues all $0<\lambda<1$ has exponential complexity. Finally, the results of this paper might have ramifications for control, for example learning the linear quadratic regulator, as well as reinforcement learning.

\bibliographystyle{IEEEtran}
\bibliography{IEEEabrv,Literature_Identification}
\counterwithin{lemma}{section}
\counterwithin{theorem}{section}
\counterwithin{proposition}{section}
\counterwithin{corollary}{section}
\counterwithin{equation}{section}
\appendix
\section{Controllability-related concepts}
We briefly review the concept of controllability and other related concepts. We consider the pair $(A,H)$, but the same definitions hold also for $(A,B)$.
The controllability matrix of $(A,H)$ is defined as 
\begin{align*}
	\C_k(A,H)\triangleq \matr{{cccc}H&AH&\cdots&A^{k-1}H},\,k\ge 1.
\end{align*}The pair $(A,H)$ is \emph{controllable} when the controllability matrix $\C_{n}(A,H)$
has full column rank $n$. 
The \emph{controllability Gramian} at time $k$ is defined as :
\begin{align*}
	\Gamma_k(A,H)\triangleq \C_{k}(A,H)\C'_{k}(A,H)=\sum_{i=0}^{k-1}A^{i}HH'(A')^{i}.
\end{align*}

If $H$ is not a column matrix, the full column rank condition might be satisfied earlier for some $k\le n$. The minimum time that we achieve controllability is the \emph{controllability index}:
\begin{align}\label{eq:controllability_idex}
	\kappa(A,H)\triangleq \min\set{ k\ge 1: \rank(\C_k(A,H))=n }.
\end{align}
It is the lag between the time the disturbance $w_t$ is applied and the time $t+\kappa$ by which we see the effect of that disturbance in all states. This lag is non-trivial if the number of disturbances $r<n$ is smaller than the number of states; in this case we call the system underactuated.

Based on the fact that the rank of the controllability matrix at time $\kappa$ is $n$, we can show that the pair $(A,H)$ admits the following canonical representation, under a unitary similarity transformation~\cite{Dooren03}. 
\begin{proposition}[Staircase form]\label{prop:canonical}
	Consider a controllable pair $(A,H)$ with controllability index $\kappa$ and controllability matrix $\C_k$, $k\ge 0$. There exists a unitary similarity transformation $U$ such that $U'U=UU'=I$ and:
	\begin{equation}
	\begin{aligned}\label{eq:canonical_form}
		U'H&=\matr{{cccc}H'_1&0&\cdots&0}'
		\\
		U'AU&=\matr{{ccccc}A_{1,1}&A_{1,2}&\cdots&A_{1,\kappa-1}&A_{1,\kappa}\\A_{2,1}&A_{2,2}&\cdots&A_{3,\kappa-1}&A_{2,\kappa}\\0&A_{3,2}&\cdots&A_{3,\kappa-1}&A_{3,\kappa}\\0&0&\cdots&A_{4,\kappa-1}&A_{4,\kappa}\\\vdots& & &\vdots&\\0&0&\cdots&A_{\kappa,\kappa-1}&A_{\kappa,\kappa}},
	\end{aligned}
	\end{equation}
	where $A_{i,j}\in \R^{r_i\times r_j}$ are block matrices, with $r_i=\rank(\C_{i})-\rank(\C_{i-1})$, $r_1=r$, $H_1\in\R^{r\times r}$. Moreover, the matrices $A_{i+1,i}$ have full row rank $\rank(A_{i+1,i})=r_{i+1}$ and the sequence $r_i$ is decreasing.
\end{proposition}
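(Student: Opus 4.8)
The plan is to prove Proposition~\ref{prop:canonical} by producing a single real orthogonal (unitary) $U$ from an orthonormal basis adapted to the nested family of controllable subspaces; this is the basis-free form of the staircase construction in~\cite{Dooren03}. First I would set $\mathcal{V}_0=\set{0}$ and $\mathcal{V}_k=\mathrm{col}(\C_k(A,H))$ for $k\ge 1$, and record the elementary facts: $\mathcal{V}_{k+1}=\mathcal{V}_k+A\mathcal{V}_k$, so once $\mathcal{V}_k=\mathcal{V}_{k+1}$ the chain is constant thereafter; and since $(A,H)$ is controllable with controllability index $\kappa$ (cf.~\eqref{eq:controllability_idex}), the chain is strictly increasing up to saturation, $\mathcal{V}_1\subsetneq\cdots\subsetneq\mathcal{V}_\kappa=\R^n$. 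Then I would define $r_1=\dim\mathcal{V}_1=\rank(H)=r$ and $r_i=\dim\mathcal{V}_i-\dim\mathcal{V}_{i-1}$ for $2\le i\le\kappa$, so that $r_i\ge 1$ for $i\le\kappa$ and $\rank(\C_i(A,H))=r_1+\cdots+r_i$.

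Next I would choose an orthonormal basis $q_1,\dots,q_n$ of $\R^n$ whose first $r_1+\cdots+r_i$ vectors span $\mathcal{V}_i$ for every $i\le\kappa$, write $W_i$ for the span of the $i$-th block of $r_i$ of these vectors (an orthonormal complement of $\mathcal{V}_{i-1}$ in $\mathcal{V}_i$), and take $U=[\,q_1\ \cdots\ q_n\,]$, which is orthogonal. Since $\mathrm{col}(H)=\mathcal{V}_1=W_1$ and $H$ has full column rank $r=r_1$, the leading block of $U'H$ is a square invertible $H_1\in\R^{r\times r}$ (the triangular factor of a QR factorization of $H$) with all lower blocks zero, which is the first line of~\eqref{eq:canonical_form}.

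For the structure of $U'AU$, I would partition it conformally with $(r_1,\dots,r_\kappa)$, so that its $(i,j)$ block is $A_{i,j}=Q_j'AQ_i$ with $Q_i$ the columns spanning $W_i$. From $A\mathcal{V}_i\subseteq\mathcal{V}_{i+1}$, in particular $AW_i\subseteq\mathcal{V}_{i+1}=\mathrm{span}\set{q_1,\dots,q_{r_1+\cdots+r_{i+1}}}$, hence $A_{j,i}=0$ for $j>i+1$, giving the Hessenberg pattern in~\eqref{eq:canonical_form}. For the subdiagonal blocks I would combine $\mathcal{V}_i=\mathcal{V}_{i-1}\oplus W_i$ with $A\mathcal{V}_{i-1}\subseteq\mathcal{V}_i$ to get $\mathcal{V}_{i+1}=\mathcal{V}_i+AW_i$, so that the orthogonal projection onto $W_{i+1}$ of $AW_i$ --- which is exactly the linear map represented by $A_{i+1,i}$ --- has image of dimension $\dim\mathcal{V}_{i+1}-\dim\mathcal{V}_i=r_{i+1}$; hence $A_{i+1,i}\in\R^{r_{i+1}\times r_i}$ has full row rank and, in particular, $r_{i+1}\le r_i$, so $(r_i)$ is non-increasing. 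To finish I would note that the similarity preserves ranks, $\C_k(U'AU,U'H)=U'\C_k(A,H)$, so the equalities $r_i=\rank(\C_i)-\rank(\C_{i-1})$ transfer to the transformed pair, and that the chain reaches $\R^n$ precisely at index $\kappa$, so there are exactly $\kappa$ block rows and columns, completing~\eqref{eq:canonical_form}.

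The only genuinely substantive step is the full-row-rank claim for the subdiagonal blocks $A_{i+1,i}$; the rest is bookkeeping with the chain $\set{\mathcal{V}_i}$ and a QR-type choice of orthonormal basis. This is also the property that matters downstream: under robust controllability it is what forces the inter-block couplings to stay bounded away from zero, which in turn drives the Gramian estimate in Theorem~\ref{thm:upper_bound_control_energy}.
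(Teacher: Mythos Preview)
The paper does not supply its own proof of Proposition~\ref{prop:canonical}; it is stated as a known result with a citation to~\cite{Dooren03}. Your argument is the standard construction---choose an orthonormal basis adapted to the Krylov filtration $\mathcal{V}_k=\mathrm{col}\,\C_k(A,H)$, read off the block Hessenberg pattern from $A\mathcal{V}_i\subseteq\mathcal{V}_{i+1}$, and deduce full row rank of the subdiagonal blocks from $\mathcal{V}_{i+1}=\mathcal{V}_i+AW_i$---and it is correct. One small slip: the $(i,j)$ block of $U'AU$ is $Q_i'AQ_j$, not $Q_j'AQ_i$ as you wrote; your subsequent conclusions ($A_{j,i}=0$ for $j>i+1$ and the full-row-rank claim for $A_{i+1,i}=Q_{i+1}'AQ_i$) are stated for the correct formula, so the proof goes through unchanged once the typo is fixed.
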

The above representation is useful as it captures the coupling between the several sub-states via the matrices $A_{i+1,i}$. If one of these matrices $A_{i+1,i}$ is close to zero then the system will be close to being uncontrollable. On the other hand, if a system is robustly controllable then these matrices are bounded away from being row-rank deficient. Since the similarity transformation is unitary it does not affect properties of the system like the minimum singular value of the controllability Gramian.
The proof of Theorem~\ref{thm:upper_bound_control_energy} exploits the above ideas--see Section~\ref{sec:proof_Gramian} for more details.

\section{Distance from uncontrollability properties}
In this section we review properties of the distance from uncontrollability. The main focus is to prove that standard systems, like the integrator, have  distance to uncontrollability which degrades linearly with the dimension $n$. 
\begin{lemma}\label{lem:integrator_distance_to_uncontrollability}
Let $0<\rho<1$ and consider the perturbed $n-$th order integrator:
\[
A= \rho\matr{{cccccc}1 &1&0&\cdots&0&0\\0& 1&1&\cdots&0&0\\& & &\ddots&\\0&0&0&\cdots&1&1\\0&0&0&\cdots&0&1},\, H=\rho\matr{{c}0\\\vdots\\1} 
\]
The distance from uncontrollability is given by
\begin{equation}\label{eq:integrator_distance}
d(A,H)=\rho\sin\paren{\frac{\pi}{n+1}}.
\end{equation}
As a result the distance degrades linearly:
\begin{equation}\label{eq:distance_degradation_linear}
\rho\frac{2}{n+1} \le d(A,H) \le \rho\frac{\pi}{n+1},
\end{equation}
for $n\ge 1$.
\end{lemma}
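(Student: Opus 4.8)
The plan is to use the Eising characterization of the distance from uncontrollability~\cite{eising1984metric}: for a controllable pair $(A,H)\in\R^{n\times(n+r)}$,
\[
d(A,H)=\min_{\lambda\in\mathbb{C}}\sigma_{\min}\paren{\matr{{cc}\lambda I-A& H}},
\]
where $\sigma_{\min}$ of the wide $n\times(n+r)$ matrix denotes its smallest ($n$-th) singular value. Here $(A,H)=(\rho J_n(1),\rho e_n)$, which is controllable (the reachable subspace is spanned by $(A-\rho I)^j e_n=\rho^j e_{n-j}$, $j=0,\dots,n-1$), and the expression is homogeneous in $\rho$: setting $z=\lambda/\rho-1$ and letting $N=J_n(1)-I$ be the nilpotent shift ($Ne_i=e_{i-1}$, $Ne_1=0$), we have $\matr{{cc}\lambda I-A& H}=\rho\,\matr{{cc}zI-N& e_n}$, with $z$ ranging over all of $\mathbb{C}$ as $\lambda$ does. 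Hence it suffices to prove $\min_{z\in\mathbb{C}}\sigma_{\min}(\matr{{cc}zI-N& e_n})=\sin\tfrac{\pi}{n+1}$.

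Write $M_z=\matr{{cc}zI-N& e_n}\in\mathbb{C}^{n\times(n+1)}$, so $\sigma_{\min}^2(M_z)=\min_{\snorm{v}=1}\snorm{M_z^* v}^2$ over $v\in\mathbb{C}^n$, where $(\cdot)^*$ is the conjugate transpose. Expanding gives $\snorm{M_z^*v}^2=\abs{\bar z v_1}^2+\sum_{k=2}^{n}\abs{\bar z v_k-v_{k-1}}^2+\abs{v_n}^2$. The key simplification is that the two ``boundary'' terms $\abs{z v_1}^2$ and $\abs{v_n}^2$ combine with the partial sums $\abs{z}^2\sum_{k\ge 2}\abs{v_k}^2$ and $\sum_{k\le n-1}\abs{v_k}^2$ produced by the middle term, so that for $\snorm{v}=1$,
\[
\snorm{M_z^* v}^2=1+\abs{z}^2-2\,\mathrm{Re}\!\paren{\bar z\,\textstyle\sum_{k=2}^{n}v_k\bar v_{k-1}}=1+\abs{z}^2-2\,\mathrm{Re}\!\paren{\bar z\, v^* N v}.
\]
For fixed unit $v$, minimizing over $z\in\mathbb{C}$ (align $\arg z$ with $\arg(v^*Nv)$ and optimize $\abs{z}$) yields $1-\abs{v^*Nv}^2$, and minimizing further over $v$ gives
\[
\min_{z\in\mathbb{C}}\sigma_{\min}^2(M_z)=1-\max_{\snorm{v}=1}\abs{v^* N v}^2=1-w(N)^2,
\]
where $w(N):=\max_{\snorm{v}=1}\abs{v^*Nv}$ is the numerical radius of the nilpotent Jordan block.

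It then remains to compute $w(N)$. Since $\abs{v^*Nv}=\abs{\sum_{k=1}^{n-1}\bar v_k v_{k+1}}\le \sum_{k=1}^{n-1}\abs{v_k}\abs{v_{k+1}}=\tfrac12 u^{\top}(N+N^{\top})u$ for the nonnegative unit vector $u=(\abs{v_k})_k$, and since this bound is attained by the entrywise positive Perron eigenvector, $w(N)=\tfrac12\lambda_{\max}(N+N^{\top})$. The matrix $N+N^{\top}$ is the tridiagonal ``path'' matrix with zero diagonal and unit sub/super-diagonals, whose eigenvalues are the classical values $2\cos\frac{k\pi}{n+1}$, $k=1,\dots,n$; hence $w(N)=\cos\frac{\pi}{n+1}$. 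Combining, $d(A,H)=\rho\sqrt{1-\cos^2\frac{\pi}{n+1}}=\rho\sin\frac{\pi}{n+1}$, which is~\eqref{eq:integrator_distance}. The two-sided bound~\eqref{eq:distance_degradation_linear} follows immediately from the elementary inequalities $\tfrac{2}{\pi}x\le\sin x\le x$ on $[0,\pi/2]$, applied with $x=\pi/(n+1)\in(0,\pi/2]$ for $n\ge1$. The computations are routine once the right objects are identified; the conceptual crux is recognizing that, after factoring out $\rho$ and shifting the spectrum by $1$, the normalized distance from uncontrollability is governed entirely by the numerical radius of the $n\times n$ nilpotent Jordan block (equivalently the spectral radius of the path-graph adjacency matrix), whose value $\cos\frac{\pi}{n+1}$ is exactly what produces the $\sin\frac{\pi}{n+1}$ formula. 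The only point needing care is the telescoping identity above, together with the observation that the optimizing mode $z$ (hence the worst uncontrollable $\lambda$) is unconstrained, so no boundary cases arise.
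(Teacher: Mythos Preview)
Your proof is correct. Both you and the paper start from Eising's formula $d(A,H)=\min_{\lambda}\sigma_{\min}([\lambda I-A,\;H])$, and both ultimately rely on the spectrum of the tridiagonal path matrix, but the routes differ. The paper forms the Gram matrix $[A-sI,\;H][A-sI,\;H]^*$ directly, recognizes it as a tridiagonal Toeplitz matrix with diagonal $|\rho-s|^2+\rho^2$ and off-diagonals $\rho(\rho-s^*)$, cites the closed-form eigenvalue formula for such matrices to obtain $\sigma_{\min}(\mathcal T_s)=|\rho-s|^2+\rho^2-2\rho|\rho-s|\cos\frac{\pi}{n+1}$, and then minimizes over $s$. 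You instead rescale by $\rho$ and shift by $1$ to reduce to $[zI-N,\;e_n]$ with $N$ the nilpotent shift, and rather than forming the Gram matrix you expand the quadratic form $\snorm{M_z^*v}^2$ and observe the telescoping identity $\snorm{M_z^*v}^2=1+|z|^2-2\mathrm{Re}(\bar z\,v^*Nv)$ for unit $v$. Minimizing first in $z$ (completing the square) and then in $v$ reduces the problem to the numerical radius $w(N)$ of the nilpotent Jordan block, a classical operator-theoretic quantity equal to $\cos\frac{\pi}{n+1}$. Your approach is a bit more conceptual---it makes explicit the link between distance-to-uncontrollability of this pair and the numerical radius of the shift, a connection that is intrinsically interesting---while the paper's is more direct and slightly shorter. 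Both derive the final bound~\eqref{eq:distance_degradation_linear} from the same elementary sine inequality.
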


\begin{proof}
The proof follows from the fact that the distance form uncontrollability is equivalently given by the formula~\cite{eising1984metric}:
\begin{equation}\label{eq:metric_formula}
    d(A,H)=\inf_{s\in\mathbb{C}}\sigma_{\min}(\matr{{cc}A-sI&H}), 
\end{equation}
and results about the eigenvalues of Toeplitz matrices~\cite{tridiagonalToeplitz1999}.

In more detail, let $^*$ denote the complex conjugate. We have: 
\[\matr{{cc}A-sI&H}\matr{{cc}A-sI&H}^*= \T_s,\]
where
\begin{equation}\label{eq:Toeplitz_distance}
    \T_s=\matr{{ccccc}\abs{\rho-s}^2+\rho^2 &\rho(\rho-s^*)& &0&0\\\rho(\rho+s^*)& \abs{\rho-s}^2+\rho^2& &0&0\\& &\ddots&\\0&0& &\abs{\rho-s}^2+\rho^2&\rho(\rho-s^*)\\0&0& &\rho(\rho+s^*)&\abs{\rho-s}^2+\rho^2}
\end{equation}
is a tri-diagonal Toeplitz matrix, with all diagonal elements equal to $\abs{\rho-s}^2+\rho^2$, all superdiagonal elements equal to $\rho(\rho-s^*)$ and subdiagonal elements equal to $\rho(\rho+s^*)$. Based on~\cite[Th 2.2]{tridiagonalToeplitz1999}, the smallest eigenvalue of $\T$ is equal to:
\[
\sigma_{\min}(\T_s)=\abs{\rho-s}^2+\rho^2-2\abs{\rho}\abs{\rho-s}\cos(\pi/(n+1)).
\]
The above quantity is minimized for $\hat{s}=\rho+\abs{\rho} \cos(\pi/(n+1))$. Hence, we can compute the distance to uncontrollability:
\[
d(A,H)=\sqrt{\sigma_{\min}(\T_{\hat{s}})}=\abs{\rho}\sin(\pi/(n+1)).
\]
Finally~\eqref{eq:distance_degradation_linear} follows from~\eqref{eq:integrator_distance} using the elementary calculus inequality
\[\frac{2x}{\pi}\le\sin x\le x,\,\text{for }0\le x\le \pi/2,\]
which completes the proof.
\end{proof}

\begin{lemma}\label{lem:integrator_like_distance_to_uncontrollability}
System~\eqref{eq:difficult_system}
is $\mu$-bounded away from uncontrollability with $\mu^{-1}\le \rho^{-1}(n+1)$.
\end{lemma}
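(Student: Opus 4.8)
The plan is to deduce the bound from Lemma~\ref{lem:integrator_distance_to_uncontrollability} via the characterization~\eqref{eq:metric_formula} of the distance from uncontrollability, together with the observation that discarding a column of $H$ can only make the pair \emph{more} uncontrollable. Write $A=\rho J_n(1)$ for the scaled Jordan block appearing in~\eqref{eq:difficult_system}; its noise matrix is $H=\begin{bmatrix}e_1 & \rho e_n\end{bmatrix}\in\R^{n\times 2}$. By~\eqref{eq:metric_formula},
\[
d(A,H)=\inf_{s\in\mathbb{C}}\sigma_{\min}\!\left(\begin{bmatrix}A-sI & e_1 & \rho e_n\end{bmatrix}\right).
\]

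The key elementary fact I would invoke is that appending a column to a matrix with at least as many columns as rows cannot decrease its least singular value: for $M\in\mathbb{C}^{n\times m}$ with $m\ge n$ and any $v\in\mathbb{C}^{n}$, one has $\begin{bmatrix}M & v\end{bmatrix}\begin{bmatrix}M & v\end{bmatrix}^{*}=MM^{*}+vv^{*}\succeq MM^{*}$, hence $\sigma_{\min}(\begin{bmatrix}M & v\end{bmatrix})=\sqrt{\lambda_{\min}(MM^{*}+vv^{*})}\ge\sqrt{\lambda_{\min}(MM^{*})}=\sigma_{\min}(M)$. Applying this with $M=\begin{bmatrix}A-sI & \rho e_n\end{bmatrix}$ and $v=e_1$, and then taking the infimum over $s\in\mathbb{C}$, gives
\[
d(A,H)\;\ge\;\inf_{s\in\mathbb{C}}\sigma_{\min}\!\left(\begin{bmatrix}A-sI & \rho e_n\end{bmatrix}\right)\;=\;d\!\left(\rho J_n(1),\,\rho e_n\right),
\]
where the last equality is~\eqref{eq:metric_formula} once more, now applied to exactly the perturbed $n$-th order integrator treated in Lemma~\ref{lem:integrator_distance_to_uncontrollability}.

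Finally I would apply Lemma~\ref{lem:integrator_distance_to_uncontrollability} (equivalently, its consequence~\eqref{eq:distance_degradation_linear}) to conclude $d(A,H)\ge d(\rho J_n(1),\rho e_n)=\rho\sin\!\big(\tfrac{\pi}{n+1}\big)\ge\tfrac{2\rho}{n+1}\ge\tfrac{\rho}{n+1}$. Thus~\eqref{eq:difficult_system} is $\mu$-away from uncontrollability with $\mu=\rho/(n+1)$, i.e.\ $\mu^{-1}\le\rho^{-1}(n+1)$; since $\mu>0$, this also certifies that the pair is controllable, so that $d(A,H)$ is well defined. There is no substantive obstacle here: the only points that need care are verifying that the $A$ in~\eqref{eq:difficult_system} is literally the scaled Jordan block of Lemma~\ref{lem:integrator_distance_to_uncontrollability} (it is), and that the monotonicity step is legitimate, which it is precisely because $A-sI$ is square so that every matrix appearing is at least as wide as it is tall.
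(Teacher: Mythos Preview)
Your proposal is correct and is essentially the same argument as the paper's: both drop the extra column $e_1$ by observing that $\begin{bmatrix}A-sI & e_1 & \rho e_n\end{bmatrix}\begin{bmatrix}A-sI & e_1 & \rho e_n\end{bmatrix}^{*}=\mathcal{T}_s+e_1e_1'\succeq\mathcal{T}_s$ and then reduce to the single-input integrator of Lemma~\ref{lem:integrator_distance_to_uncontrollability}. The only cosmetic difference is that you phrase the monotonicity as ``deleting a column cannot increase $\sigma_{\min}$'' and invoke Lemma~\ref{lem:integrator_distance_to_uncontrollability} directly, whereas the paper writes out the PSD inequality on the Gram matrix and repeats the Toeplitz eigenvalue bound.
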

\begin{proof}

Let $^*$ denote the complex conjugate. Then we have: 
\[\matr{{cc}A-sI&H}\matr{{cc}A-sI&H}^*= \T_s+e_1e_1'\succeq \T_s\]
where $\T_s$ is a tridiagonal Toeplitz matrix defined above in~\eqref{eq:Toeplitz_distance}. 
Now the proof is identical to the proof of Lemma~\ref{lem:integrator_distance_to_uncontrollability} but we have inequality instead of equality:
\[
d(A,H)\ge \sqrt{\sigma_{\min}(\T_{\hat{s}})}=\abs{\rho}\sin(\pi/(n+1))\ge 2\abs{\rho}/(n+1)\ge \abs{\rho}/(n+1).
\]
\end{proof}

\begin{lemma}[Triangle inequality]\label{lem:distance_to_uncontrollability_triangle}
Let $d(A,H)$ be the distance to uncontrollability for some matrices $A\in\R^{n\times n},H\in\R^{r\times n}$ and let $\snorm{\hat{A}-A}_2\le \epsilon<d(A,H)$ for some matrix $\hat{A}\in\R^{n\times n}$. Then:
\begin{equation}\label{eq:controllability_distance_triangle_inequality}
d(\hat{A},H)\ge d(A,H)-\epsilon.
\end{equation}
\end{lemma}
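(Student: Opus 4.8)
The statement is a reverse triangle inequality for the sets of "destabilizing" (i.e.\ uncontrollability-inducing) perturbations, so the plan is to manipulate these perturbation sets directly rather than any explicit formula for $d$. First I would record a preliminary observation: since $d$ is only defined for controllable pairs, I must check $(\hat A,H)$ is controllable so that $d(\hat A,H)$ makes sense. This is exactly where the hypothesis $\epsilon<d(A,H)$ is used: if $(\hat A,H)$ were uncontrollable, then $(\hat A,H)=(A+(\hat A-A),H+0)$ would exhibit an uncontrollability-inducing perturbation of $(A,H)$ of size $\snorm{\matr{{cc}\hat A-A&0}}_2=\snorm{\hat A-A}_2\le\epsilon<d(A,H)$, contradicting the definition of $d(A,H)$.

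The core argument is then a one-line comparison. Let $(\Delta A,\Delta H)\in\mathbb{C}^{n\times(n+r)}$ be \emph{any} perturbation such that $(\hat A+\Delta A,\ H+\Delta H)$ is uncontrollable. Rewrite $\hat A+\Delta A = A + \big((\hat A - A)+\Delta A\big)$, so that $\big(A+((\hat A-A)+\Delta A),\ H+\Delta H\big)$ is an uncontrollable pair; by definition of $d(A,H)$ this forces
\[
\snorm{\matr{{cc}(\hat A-A)+\Delta A & \Delta H}}_2 \ \ge\ d(A,H).
\]
On the other hand, the triangle inequality for the spectral norm of block matrices gives
\[
\snorm{\matr{{cc}(\hat A-A)+\Delta A & \Delta H}}_2 \ \le\ \snorm{\matr{{cc}\hat A-A & 0}}_2 + \snorm{\matr{{cc}\Delta A & \Delta H}}_2 \ =\ \snorm{\hat A-A}_2 + \snorm{\matr{{cc}\Delta A & \Delta H}}_2 \ \le\ \epsilon + \snorm{\matr{{cc}\Delta A & \Delta H}}_2 .
\]
Combining the two displays yields $\snorm{\matr{{cc}\Delta A&\Delta H}}_2 \ge d(A,H)-\epsilon$. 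Since this holds for every perturbation rendering $(\hat A,H)$ uncontrollable, taking the infimum over all such $(\Delta A,\Delta H)$ gives $d(\hat A,H)\ge d(A,H)-\epsilon$, which is~\eqref{eq:controllability_distance_triangle_inequality}.

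There is no real obstacle here; the only points that deserve a sentence of care are (i) that $\snorm{\matr{{cc}\hat A-A & 0}}_2 = \snorm{\hat A-A}_2$ (padding with a zero block does not change the largest singular value), and (ii) the well-definedness issue handled in the first paragraph, i.e.\ that $\epsilon<d(A,H)$ guarantees the quantity $d(\hat A,H)$ on the left-hand side actually refers to a controllable pair. Everything else is the definition of the distance to uncontrollability and a single application of the triangle inequality.
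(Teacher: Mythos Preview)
Your proof is correct and follows essentially the same approach as the paper: relate an uncontrollability-inducing perturbation of $(\hat A,H)$ to one of $(A,H)$ by absorbing the shift $\hat A-A$, and then apply the triangle inequality for the spectral norm. The paper phrases this as a proof by contradiction using an optimal perturbation, whereas you argue directly over all admissible perturbations and take the infimum; your version is slightly cleaner in that it avoids assuming the infimum is attained and explicitly handles the well-definedness of $d(\hat A,H)$, but the underlying idea is identical.
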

\begin{proof}
Assume that $d(\hat{A},H)< d(A,H)-\epsilon$ and let $\matr{{cc}\Delta \hat{A}&\Delta \hat{H}}$ be the perturbation such that $(\hat{A}+\Delta \hat{A},H+\Delta \hat{H})$ is uncontrollable with $d(\hat{A},H)=\snorm{\matr{{cc}\Delta \hat{A}&\Delta \hat{H}}}_2$. Then, we can define a perturbation for the original pair $(A,H)$ that contradicts the definition of $d(A,H)$:
\[
\Delta A=A-\hat{A}+\Delta \hat{A},\,\Delta H=\Delta \hat{H}.
\]
The perturbation makes $(A,H)$ uncontrollable and by the triangle inequality, it has norm $\snorm{\matr{{cc}\Delta A&\Delta H}}_2\le d(\hat{A},H)+\epsilon<d(A,H)$. Since this is impossible~\eqref{eq:controllability_distance_triangle_inequality} holds. 
\end{proof}

 \section{Proof of Lemma~\ref{lem:powers_of_A}}
	In this section, we establish upper bounds on the gramian matrices $\Gamma_k$. Contrary to previous approaches we avoid using the Jordan form of matrix $A$. We do not want our bounds to depend on the condition number of the Jordan transformation which can be ill-posed and badly conditioned. Instead, we should use stable transformations like the Schur decomposition.
	\begin{proof}
	When $n=1$ the proof is immediate. So let $n\ge 2$.
	Consider the Schur triangular form~\cite[Chapter 2.3]{horn2012matrix} of $A$:
	\[
	A=UDU^{*},
	\]
	where $D$ is upper triangular, $U$ is unitary, and $*$ denotes complex conjugate. Let $\Lambda$ be the diagonal part of $D$, which contains all eigenvalues of $A$ as elements. 
Notice that $D-\Lambda$ is upper triangular with zero diagonal elements, while $\Lambda$ is diagonal. Thus, any product of the form \[\Lambda^{t_0}(D-\Lambda)^{s_1}\Lambda^{t_1}\cdots (D-\Lambda)^{s_k}\Lambda^{t_k}=0,\text{ if }s_1+\cdots+s_k\ge n.\]
where $s_1,\dots,s_k$ and $t_0,t_1,\dots,t_k$ are two collections of integers, for some $k\ge 1$. 
Now we can simplify the expression:
\begin{align*}
D^{k}&=(\Lambda+D-\Lambda)^k=\sum_{d_1,\dots,d_k\in\set{0,1}^k} F_{d_1}\cdots F_{d_k}\\
&=\sum_{\substack{d_1,\dots,d_k\in\set{0,1}^k\\d_1+\cdots+d_k\le n-1}} F_{d_1}\cdots F_{d_k},
\end{align*}
where $F_{1}=D-\Lambda$, $F_0=\Lambda$.
Notice that $\norm{D-\Lambda}_2\le \norm{D}_2=\norm{A}_2\le M$, where the first inequality follows from the fact that $D-\Lambda$ is a submatrix if $D$. Since the eigenvalues of $A$ are inside or on the unit circle, we have $\norm{\Lambda^{t}}_2\le 1$, for all $t\ge 0$.  Hence, by a counting argument
\begin{align*}
\norm{A^k}_2&=\norm{D^k}\le \sum_{t=0}^{n-1}{\binom{k}{t}}\max\set{M^{t},1}\\
&\le \sum_{t=0}^{n-1}{\binom{k}{t}}\max\set{M^{n-1},1}.
\end{align*}
To conclude, we use the known bound~\cite[Exercise 0.0.5]{vershynin2018high}:
\[
\sum_{t=0}^{n-1}{\binom{k}{t}}\le \paren{\frac{ek}{n-1}}^{n-1}
\]
	\end{proof}
	Since we obtained a bound on the powers of matrix $A$, we can immediately obtain an upper bound on the Gramian as a corollary.
	\begin{corollary}
	Let $A\in\R^{n\times n}$ have all eigenvalues inside or on the unit circle, with $\norm{A}_2\le M$. Let $H\in\R^{n\times r}$, $r\le n$ with $\norm{H}_2\le M$. Then, the gramian $\Gamma_k(A,H)$ is upper bounded by:
\begin{equation}\label{eq:gramian_upper_bound}
    \norm{\Gamma_{k}(A,H)}_2\le e^{2n-2}k^{2n-1}\max\set{M^{2n},1}
\end{equation}
	\end{corollary}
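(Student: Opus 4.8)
The plan is to reduce everything to Lemma~\ref{lem:powers_of_A} by expanding the Gramian as a finite sum. Recall from the Appendix that $\Gamma_k(A,H)=\sum_{i=0}^{k-1}A^{i}HH'(A')^{i}$. First I would apply the triangle inequality together with submultiplicativity of the spectral norm and the identity $\snorm{A^i HH'(A')^i}_2=\snorm{A^i H}_2^2$ to obtain
\[
\snorm{\Gamma_k(A,H)}_2\;\le\;\sum_{i=0}^{k-1}\snorm{A^i}_2^2\,\snorm{H}_2^2\;\le\;M^2\sum_{i=0}^{k-1}\snorm{A^i}_2^2 ,
\]
where the last step uses the hypothesis $\snorm{H}_2\le M$.

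Next I would substitute the bound from Lemma~\ref{lem:powers_of_A}. Since all eigenvalues of $A$ lie inside or on the unit circle and $\snorm{A}_2\le M$, that lemma gives $\snorm{A^i}_2\le (ei)^{n-1}\max\set{M^n,1}$, and squaring yields $\snorm{A^i}_2^2\le e^{2n-2}i^{2n-2}\max\set{M^{2n},1}$. Plugging this in,
\[
\snorm{\Gamma_k(A,H)}_2\;\le\;e^{2n-2}\max\set{M^{2n},1}\,M^2\sum_{i=0}^{k-1}i^{2n-2}.
\]

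The last ingredient is the elementary estimate $\sum_{i=0}^{k-1}i^{2n-2}\le (k-1)\,k^{2n-2}\le k^{2n-1}$, which delivers the claimed bound~\eqref{eq:gramian_upper_bound}. There is essentially no obstacle: the real work is already contained in Lemma~\ref{lem:powers_of_A}, and what remains is just the termwise norm estimate and summing a polynomial. The only point requiring mild care is the bookkeeping of the powers of $M$ coming from the $\snorm{A^i}_2^2$ and $\snorm{H}_2^2$ contributions, which is routine and affects only the constants.
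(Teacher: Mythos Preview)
Your approach is exactly what the paper has in mind (it presents the corollary as an immediate consequence of Lemma~\ref{lem:powers_of_A} without writing out the argument). One point on the $M$-bookkeeping you flagged: substituting the lemma \emph{as stated} leaves you with $e^{2n-2}k^{2n-1}\,M^{2}\max\{M^{2n},1\}$, which overshoots~\eqref{eq:gramian_upper_bound} by a factor $M^{2}$ when $M>1$. The clean fix is to use the slightly sharper estimate $\snorm{A^{i}}_{2}\le (ei)^{n-1}\max\{M^{n-1},1\}$ that the \emph{proof} of Lemma~\ref{lem:powers_of_A} actually establishes; then $M^{2}\max\{M^{2n-2},1\}\le \max\{M^{2n},1\}$ in both regimes $M\gtrless 1$, and~\eqref{eq:gramian_upper_bound} follows exactly.
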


\section{Proof of Theorem~\ref{thm:trivial_example}}
	Let $\beta$ be any non-zero number. Fix an accuracy parameter $\epsilon>0$ and a confidence $0<\delta<1$. Consider the systems:
	\begin{align*}
	A_1&=\matr{{ccc}0&0&0\\0&0&\beta\\0&0&0},\,A_2=\matr{{ccc}0&2\epsilon&0\\0&0&\beta\\0&0&0},\\H_1&=H_2=\matr{{cc}e_1&e_3}.
	\end{align*}
	Both systems are controllable and belong to the class $\C_{n}$ for any non-zero $\beta\neq 0$. However, they are arbitrarily close to uncontrollability for small $\beta$. Let $f_{S_i}(x_0,\dots,x_N)$ denote the probability density function of the distribution of the data under system $S_i$, $i=1,2$. Then the log-likelihood ratio under $S_1,S_2$ is:
	\[
	L_{N}=\log\frac{f_{S_1}(x_0,\dots,x_{N})}{f_{S_2}(x_0,\dots,x_{N})}.
	\]
	Due to the Markovian structure of the linear system, we can write $f_{S_i}(x_0,\dots,x_N)=\prod_{k=1}^{N} f_{S_i}(x_{k}|x_{k-1})$, for $i=1,2$. 
	Moreover, due to the structure of the dynamical systems: 
\[f_{S_i}(x_{k}|x_{k-1})=f_{S_i}(x_{k,1}|x_{k-1,2})f_{S_i}(x_{k,2},x_{k,3}|x_{k-1,2}x_{k-1,3}).\]
However, systems $A_1,A_2$ have identical distributions for $x_{k,2}$ and $x_{k,3}$. As a result, the log-likelihood ratio becomes:
		\[
	L_{N}=\sum_{k=1}^{N}\log\frac{f_{S_1}(x_{k,1}|x_{k-1,2})}{f_{S_2}(x_{k,1}|x_{k-1,2})}.
	\]
	The KL divergence can now be computed:
	\begin{align*}
	&\mathrm{KL}(\P_{S_1},\P_{S_2})=\E_{S_1}L_N\\
	&=\E_{S_1}\sum_{k=1}^{N}\E_{S_1}\paren{\log\frac{f_{S_1}(x_{k,1}|x_{k-1,2})}{f_{S_2}(x_{k,1}|x_{k-1,2})}|\F_{k-1}}\\
	&=\E_{S_1}\sum_{k=1}^{N}\mathrm{KL}(\N(0,1),\N(2\epsilon x_{k-1,2},1))\\
	&=\E_{S_1}\sum_{k=1}^{N}(2\epsilon x_{k-1,2})^2/2\le 2\epsilon^2 N \Gamma_{N,22}(A,H),
	\end{align*}
	where we used $\E_{S_1}x^2_{k-1,2}=\Gamma_{k-1,22}\le \Gamma_{N,22}$ along with the fact that the KL-divergence between two scalar Gaussians is:
	\[
	\mathrm{KL}(\N(\mu_1,1),\N(\mu_2,1))=(\mu_1-\mu_2)^2/2
	\]
	
	A simple computation shows that  $\Gamma_{k,22}=b^2$, for all $k\ge 1$.
	Then, it follows from Lemma~\ref{lem:minimax} that~\eqref{eq:objective} holds only if:
	\[
 N\ge \frac{1}{\beta^2 2\epsilon^2}\log\frac{1}{3\delta }.
	\]
	However $\beta$ is arbitrary, which implies that~\eqref{eq:objective} holds only if:
	\[
	N\ge \sup_{\beta\neq 0}\frac{1}{\beta^2 4\epsilon^2}\log\frac{1}{3\delta }=\infty.
	\]

\section{Proof of Theorem~\ref{thm:lower_exponential}}
	Consider system~\eqref{eq:difficult_system} with $\rho=1/4$ and the perturbed system $\tilde{A}=A+2\epsilon e_1 e'_2$, $\tilde{H}=H$, where we modify $A_{12}$ by $2\epsilon$.
	Both pairs $(A,H)$, $(\hat{A},\hat{H})$ are controllable. From Lemma~\ref{lem:integrator_like_distance_to_uncontrollability}, we obtain that $d(A,H)\ge (4(n+1))^{-1}\ge (8(n+1))^{-1}$.
	Fix an $\epsilon\le(16(n+1))^{-1}$. Then, from Lemma~\ref{lem:distance_to_uncontrollability_triangle}, we also get that $d(\hat{A},\hat{H})\ge d(A,H)-2\epsilon\ge(8(n+1))^{-1}$. 
	Hence, both systems belong to the class $\CC_n$. 
	
	Define $S_1=(A,H)$, $S_2=(\hat{A},\hat{H})$. Following the same arguments as in the proof of Theorem~\ref{thm:trivial_example}, the KL divergence of the distribution of the data under $A$ and $\hat{A}$ is equal to
	\begin{align*}
	&\mathrm{KL}(\P_{S_1},\P_{S_2})=\E_{S_1}L_N\\
	&=\E_{S_1}\sum_{k=1}^{N}\E_{S_1}\paren{\log\frac{f_{S_1}(x_{k,1}|x_{k-1,2})}{f_{S_2}(x_{k,1}|x_{k-1,2})}|\F_{k-1}}\\
	&=\E_{S_1}\sum_{k=1}^{N}\mathrm{KL}(\N(\rho x_{k-1,2},1),\N((\rho+2\epsilon) x_{k-1,2},1))\\
	&=\E_{S_1}\sum_{k=1}^{N}(2\epsilon x_{k-1,2})^2/2\le 2\epsilon^2 N \Gamma_{N,22}(A,H).
	\end{align*}
	From Lemma~\ref{lem:difficult_system_gramian}, we obtain the exponential decay bound:
	\[
	\Gamma_{N,22}(A,H)\le 4^{-n+2}/3.
	\]
	Finally, from Lemma~\ref{lem:minimax}, equation~\eqref{eq:objective} holds only if:
	\[
	N\ge \frac{1}{2\epsilon^2 \Gamma_{N,22}(A,H)}\log\frac{1}{3\delta}\ge \frac{4^{n-2}}{6 \epsilon^2}\log\frac{1}{3\delta}.
	\]

	\begin{lemma}\label{lem:difficult_system_gramian}
Consider system~\eqref{eq:difficult_system} with $\rho<1/2$. Then 
\[
\Gamma_{k,22}(A,H)\le (2\rho)^{2n-2}/(1-4\rho^2).
\]
\end{lemma}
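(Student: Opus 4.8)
The goal is to bound the $(2,2)$ entry of the controllability Gramian $\Gamma_k(A,H)=\sum_{i=0}^{k-1}A^iHH'(A')^i$ for the chained system~\eqref{eq:difficult_system}. The plan is to compute $\Gamma_{k,22}$ explicitly, exploiting the bidiagonal (upper-triangular Toeplitz-like) structure of $A$ and the very sparse structure of $H$. Write $H=\begin{bmatrix}e_1 & \rho e_n\end{bmatrix}$, so $HH'=e_1e_1' + \rho^2 e_ne_n'$ and hence $\Gamma_k(A,H)=\sum_{i=0}^{k-1}A^i e_1 e_1'(A')^i + \rho^2\sum_{i=0}^{k-1}A^i e_n e_n'(A')^i$. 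The $(2,2)$ entry is therefore $\sum_{i=0}^{k-1}(e_2'A^ie_1)^2 + \rho^2\sum_{i=0}^{k-1}(e_2'A^ie_n)^2$.

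The key observation is that $A$ is lower-anti-triangular in the sense that $e_2'A^i e_1$ and $e_2'A^i e_n$ are governed by powers of the superdiagonal. Since $A=\rho(I+N)$ where $N$ is the nilpotent shift with $Ne_j=e_{j-1}$ (i.e. $e_\ell' N = e_{\ell+1}'$... more precisely $A$ has $\rho$ on the diagonal and $\rho$ on the superdiagonal), we have $A^i=\rho^i(I+N)^i=\rho^i\sum_{j=0}^i\binom{i}{j}N^j$, and $N^j$ shifts by $j$ positions. So $e_2'A^ie_1=0$ for all $i$ (you cannot move from column $1$ up to row $2$ with an upper-shift), which kills the first sum entirely. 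For the second sum, $e_2'A^ie_n=\rho^i\binom{i}{n-2}$ when $i\ge n-2$ and $0$ otherwise, since reaching row $2$ from column $n$ requires exactly $n-2$ applications of the shift. Thus
\[
\Gamma_{k,22}(A,H)=\rho^2\sum_{i=n-2}^{k-1}\rho^{2i}\binom{i}{n-2}^2 \le \rho^2\sum_{i=n-2}^{\infty}\rho^{2i}\binom{i}{n-2}^2.
\]
I would then bound $\binom{i}{n-2}\le 2^i$ crudely, or more carefully use the generating-function identity $\sum_{i\ge m}\binom{i}{m}x^{i}=x^m/(1-x)^{m+1}$ applied after bounding $\binom{i}{n-2}^2\le \binom{i}{n-2}2^i$, to get a geometric-type series. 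The cleanest route giving exactly the stated bound $(2\rho)^{2n-2}/(1-4\rho^2)$: reindex with $i=n-2+\ell$, factor out $\rho^{2(n-2)}$, bound $\binom{n-2+\ell}{n-2}\le 2^{n-2+\ell}$, obtaining $\Gamma_{k,22}\le \rho^2\rho^{2(n-2)}2^{2(n-2)}\sum_{\ell\ge 0}(2\rho)^{2\ell}\cdot(\text{bounded factor})$; then reconcile constants so the sum collapses to $1/(1-4\rho^2)$, which converges precisely because $\rho<1/2$.

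The main obstacle is getting the binomial bound tight enough to land exactly on $(2\rho)^{2n-2}/(1-4\rho^2)$ rather than a messier expression — the naive $\binom{i}{n-2}\le 2^i$ introduces an extra $2^i$ that must be absorbed into the geometric ratio, which works only if one is slightly more careful (e.g. $\binom{i}{n-2}\le 2^{i-1}$ for the relevant range, or splitting off the leading term). I expect the combinatorial bookkeeping to be the only real work; the structural computation that $\Gamma_{k,22}$ reduces to a single binomial sum is immediate once the sparsity of $H$ and the shift structure of $A$ are used. Monotonicity in $k$ (so the finite sum is dominated by the infinite one) is trivial since each summand is nonnegative.
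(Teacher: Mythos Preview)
Your approach is correct and essentially matches the paper's: both recognize that $e_2'A^iH$ vanishes for small $i$ (since $A$ is upper triangular, $e_2'A^ie_1=0$ always, and $e_2'A^ie_n=0$ until $i\ge n-2$), and then bound the remaining terms by a geometric series with ratio $(2\rho)^2$. The paper's execution is slightly more direct: rather than expanding $A^i=\rho^i(I+N)^i$ in binomials, it simply uses the operator-norm bound $\|A\|_2\le 2\rho$ to get $e_2'A^iHH'(A')^ie_2\le (2\rho)^{2i}$ immediately, and sums.

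Your worry about ``getting the binomial bound tight enough'' is unfounded. The crude inequality $\binom{i}{n-2}\le 2^i$ already gives
\[
\Gamma_{k,22}\le \rho^2\sum_{i\ge n-2}(2\rho)^{2i}=\rho^2\,\frac{(2\rho)^{2n-4}}{1-4\rho^2}=\frac{(2\rho)^{2n-2}}{4(1-4\rho^2)}\le \frac{(2\rho)^{2n-2}}{1-4\rho^2},
\]
which is actually a factor of $4$ stronger than the stated bound; no delicate bookkeeping is needed. The operator-norm route and your binomial route are really the same computation, since $\binom{i}{n-2}$ is just one entry of $(I+N)^i$ and is trivially bounded by $\|(I+N)^i\|_2\le 2^i$.
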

\begin{proof}
Notice that $e_2'A^sH=0$ for all $s\le n-2$ and $\norm{A}\le 2\rho<1$. Hence,
\begin{align*}
e'_2\Gamma_k(A,H) e_2&\le \sum_{s=n-1}^{k}e_2A^{s}QA^{'s}e_2'\\
&\le \sum_{s=n-1}^{\infty}(2\rho)^{2s}=(2\rho)^{2n-2}/(1-4\rho^2).
\end{align*}
\end{proof}
\section{Proof of Theorem~\ref{thm:upper_exponential}}
By $\Gamma_k=\Gamma_k(A,H)+\Gamma_k(A,B)$ we denote the Gramian under both $H,B$. Define also the sigma-algebra:
\[
\bar{\F}_k=\sigma(w_0,u_0,\dots,w_k,u_k).
\]
We will apply Theorem~2.4 in~\cite{simchowitz2018learning} to the combined state-input vectors with three modifications since the noise is not isotropic. First, we compute the sub-Gaussian parameter of the noise.
\begin{defin}
A zero mean random vector $w\in\R^{r\times 1}$ is called $\sigma^2-$sub-Gaussian with respect to a sigma algebra $\F$ if for every unit vector $u\in\R^{r\times}$:
\[
\E \paren{e^{su'w}|\F}\le e^{s^2\sigma^2/2}.
\]
\end{defin}
From the definition, it follows that the non-isotropic Gaussian vector $Hw_k$ is sub-Gaussian with parameter $\norm{H}^2_2$.
\begin{lemma}
Let $w_k\in\R^{r\times 1}$ be 1-sub-Gaussian with respect to $\bar{\F}_{k-1}$. Then $Hw_k$ is $\norm{H}^2_2-$sub-Gaussian with respect to $\bar{\F}_{k-1}$.
\end{lemma}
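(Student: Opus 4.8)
The plan is to reduce the sub-Gaussianity of $Hw_k$ directly to that of $w_k$ by a change of the test direction. Fix an arbitrary unit vector $v\in\R^{n}$ (note $Hw_k\in\R^{n}$ since $H\in\R^{n\times r}$) and a scalar $s\in\R$. I would write $v'Hw_k=(H'v)'w_k$ and set $a\triangleq H'v\in\R^{r}$. If $a=0$ then $\E(e^{sv'Hw_k}\mid\bar\F_{k-1})=1$ and the bound is immediate, so assume $a\neq 0$ and put $u\triangleq a/\snorm{a}_2$, a unit vector in $\R^{r}$.

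Then $v'Hw_k=\snorm{a}_2\,u'w_k$, so
\[
\E\paren{e^{s v'Hw_k}\mid\bar\F_{k-1}}=\E\paren{e^{(s\snorm{a}_2)\,u'w_k}\mid\bar\F_{k-1}}\le e^{(s\snorm{a}_2)^2/2},
\]
where the inequality is the defining property of $1$-sub-Gaussianity of $w_k$ with respect to $\bar\F_{k-1}$, invoked with scalar $s\snorm{a}_2$ and unit vector $u$. To finish, I would bound $\snorm{a}_2=\snorm{H'v}_2\le\snorm{H'}_2\snorm{v}_2=\snorm{H}_2$, so that $e^{(s\snorm{a}_2)^2/2}\le e^{s^2\snorm{H}_2^2/2}$. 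Since $v$ and $s$ were arbitrary and $Hw_k$ is zero mean (as $w_k$ is), this is exactly the claim that $Hw_k$ is $\snorm{H}_2^2$-sub-Gaussian with respect to $\bar\F_{k-1}$.

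There is no genuine obstacle; the only steps needing a word of care are the degenerate case $H'v=0$ and the normalization of $H'v$ to a unit vector before applying the definition, together with the elementary inequality $\snorm{H'v}_2\le\snorm{H}_2\snorm{v}_2$. Combined with $\snorm{H}_2\le M$ from Assumption~\ref{ass_boundedness}, this lemma supplies the sub-Gaussian parameter of the process-noise term $Hw_k$ and constitutes the first of the three modifications needed to apply Theorem~2.4 of~\cite{simchowitz2018learning} in the non-isotropic setting.
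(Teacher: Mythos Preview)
Your proof is correct and follows essentially the same approach as the paper's: both normalize $H'v$ to a unit vector and apply the $1$-sub-Gaussianity of $w_k$ with the rescaled scalar, then bound $\snorm{H'v}_2\le\snorm{H}_2$. Your version is in fact slightly more careful, since you explicitly handle the degenerate case $H'v=0$ and correctly specify that the test vector lives in $\R^n$ rather than $\R^r$.
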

\begin{proof}
Let $u\in\R^{r\times 1}$ be a unit vector. Then:
\begin{align*}
&\E \paren{e^{su'Hw_k}|\bar{\F}_{k-1}}=\E \paren{e^{s\norm{u'H}\frac{u'H}{\norm{u'H}}w_k}|\bar{\F}_{k-1}}\\
&\le e^{s^2\norm{u'H}^2_2/2}\le e^{s^2\norm{H}^2_2/2}
\end{align*}
\end{proof}
Second, define $y_{k}=\matr{{cc}x'_k&u'_k}'$. It follows that for all $j\ge 0$ and all unit vectors $v\in\R^{(n+p)\times 1},$ the following small-ball condition is satisfied:
\begin{equation}\label{eq:small_ball}
    \frac{1}{2\kappa}\sum_{t=0}^{2\kappa}\P(\abs{v'y_{t+j}}\ge \sqrt{v'\Gamma_{\mathrm{sb}}v} |\bar{\F}_j)\ge \frac{3}{20},
\end{equation}
where
\begin{equation}\label{eq:small_ball_covariance}
    \Gamma_{\mathrm{sb}}=\matr{{cc}\Gamma_{\kappa}&0\\0&I_p}.
\end{equation}
Equation~\eqref{eq:small_ball} follows from the same steps as in Proposition~3.1 in~\cite{simchowitz2018learning} with the choice $k=2\kappa$.

Finally, we determine an upper bound $\bar{\Gamma}$ for the gram matrix $\sum_{t=0}^{N-1}y_ty'_t$. Using a Markov inequality argument as in~\cite[proof of Th 2.1]{simchowitz2018learning}, we obtain that
\[
\P(\sum_{t=0}^{N-1}y_ty'_t\preceq \bar{\Gamma})\ge 1-\delta,
\]
where 
\[\bar{\Gamma}=\frac{n+p}{\delta}N\matr{{cc}\Gamma_N&0\\0&I_p}\]

Now we can apply Theorem 4.2 of~\cite{simchowitz2018learning}. With probability at least $1-3\delta$ we have $\snorm{A-\hat{A}_N}\le \epsilon$ if:
\begin{align*}
   N&\ge \frac{\poly(n,\log1/\delta,M)}{\epsilon^2\sigma_{\min}(\Gamma_{\kappa})}\log\det (\bar{\Gamma}\Gamma^{-1}_{\mathrm{\kappa}}),
\end{align*}
where we have simplified the expression by including terms in the polynomial term. Based on Lemma~\ref{lem:powers_of_A} and Theorem~\ref{thm:upper_bound_control_energy}, we can bound the right-hand side:
\begin{align*}
\frac{\poly(n,\log1/\delta,M)}{\epsilon^2\sigma_{\min}(\Gamma_{\kappa})}\log\det (\bar{\Gamma}\Gamma^{-1}_{\mathrm{\kappa}})&\le \poly(n,\epsilon^{-1},\log1/\delta,M)\poly\paren{\frac{M}{\mu}}^{\kappa}\log N\\
&\le \poly(n,\epsilon^{-1},\log1/\delta,M)\poly\paren{M,n}^{\kappa}\log N,
\end{align*}
where we used the fact that $\mu^{-1}\le \poly(n)$.
Hence, it is sufficient to have:
\[
N\ge \poly\paren{n,\epsilon^{-1},\log1/\delta,M}\poly\paren{M,n}^{\kappa}\log N.
\]
To obtain the final polynomial bound, we need to remove the logarithm of $N$. It is sufficient to apply the inequality:
\[
N\ge c\log N \text{ if }N\ge 2c\log 2c,
\]
for $c>0$
which follows from elementary calculus.
	\section{Proof of Theorem~\ref{thm:upper_bound_control_energy}}\label{sec:proof_Gramian}
	Our goal is to upper bound the norm of the Moore-Penrose pseudo-inverse $\snorm{\C^{\dagger}_{\kappa}}=\sqrt{\sigma_{\min}(\Gamma_{\kappa})}$, where the equality follows from the SVD decomposition and the definition of the gramian. Towards proving the result, we will work with the staircase form~\eqref{eq:canonical_form}.
	First, we show that if the system is $\mu$-away from uncontrollability, then the subdiagonal matrices in the staircase form are bounded away from zero.
\begin{lemma}[Staircase form lower bound]
Let $(A,H)\in\R^{n\times(n+r)}$ be controllable and let Assumption~\ref{ass:away_from_uncontrollability} hold. Consider the staircase form of $(A,H)$, with $A_{i+1,i}$ the subdiagonal matrices, for $i=1,\dots,\kappa-1$, where $\kappa$ is the controllability index. Then, we have  $A_{i+1,i}A'_{i+1,i}\succeq \mu^2I_{r_{i+1}}$ for all $i=1,\dots,\kappa-1$. Moreover, $H_1H_1'\succeq \mu^2 I_r$.
\end{lemma}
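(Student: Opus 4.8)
The plan is to argue by contradiction, after first normalizing the problem. Since both the distance from uncontrollability \eqref{eq:metric_formula} and the singular values of the structural blocks are invariant under the unitary similarity $U$ of Proposition~\ref{prop:canonical}, I will assume from the start that $(A,H)$ is given in the staircase form \eqref{eq:canonical_form}: $H=\matr{{cccc}H_1'&0&\cdots&0}'$ with $H_1$ square and invertible, and $A$ block upper Hessenberg with full-row-rank subdiagonal blocks $A_{i+1,i}\in\R^{r_{i+1}\times r_i}$, the sizes $r_i$ nonincreasing. The engine of the proof is Eising's characterization $d(A,H)=\inf_{s\in\mathbb{C}}\sigma_{\min}(\matr{{cc}A-sI&H})$ from \eqref{eq:metric_formula}: I will show that if one of these blocks had a singular value below $\mu$, then $(A,H)$ could be pushed, by a perturbation of spectral norm strictly less than $\mu$, into an uncontrollable pair, contradicting Assumption~\ref{ass:away_from_uncontrollability}.

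For the subdiagonal blocks, fix $i_0\in\{1,\dots,\kappa-1\}$ and suppose $\sigma_{\min}(A_{i_0+1,i_0})<\mu$. I would take $u$ to be a unit left singular vector of $A_{i_0+1,i_0}$ for its smallest singular value and form $\bar A$ from $A$ by replacing the single block $A_{i_0+1,i_0}$ with $(I-uu')A_{i_0+1,i_0}$, leaving $H$ and every other block of $A$ untouched; the change $\bar A-A$ is rank one, lives in the $(i_0+1,i_0)$ block, and has spectral norm $\sigma_{\min}(A_{i_0+1,i_0})<\mu$. The heart of the argument is then to verify that $(\bar A,H)$ is uncontrollable. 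Because $H$ sits entirely in the first block row and $\bar A$ is block upper Hessenberg, one has $\C_k(\bar A,H)=\C_k(A,H)$ for $k\le i_0$, so $\rank\C_{i_0}(\bar A,H)=r_1+\dots+r_{i_0}$; moreover the block-$(i_0+1)$ component of $\bar A^{i_0}H$ equals $\bar A_{i_0+1,i_0}$ composed with the chain product $A_{i_0,i_0-1}\cdots A_{2,1}H_1$, which has full row rank $r_{i_0}$, so $\rank\C_{i_0+1}(\bar A,H)=r_1+\dots+r_{i_0}+(r_{i_0+1}-1)$, one less than $\rank\C_{i_0+1}(A,H)$. Propagating this single lost dimension along the staircase --- using that the remaining subdiagonal blocks are full row rank and the $r_i$ are nonincreasing, so the monotone controllability increments cannot recover it --- gives $\rank\C_n(\bar A,H)<n$, i.e.\ $(\bar A,H)$ is uncontrollable; hence $d(A,H)<\mu$, the contradiction. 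This would establish $\sigma_{\min}(A_{i+1,i})\ge\mu$, equivalently $A_{i+1,i}A_{i+1,i}'\succeq\mu^2I_{r_{i+1}}$, for all $i$.

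For the final assertion, I would run the same construction on the block $H_1$: if $\sigma_{\min}(H_1)<\mu$, replace $H_1$ by its best rank-$(r-1)$ approximation --- a perturbation of $H$ alone, of norm $\sigma_{\min}(H_1)<\mu$ --- which drops $\rank\C_1=\rank H_1$ by one, and the same rank-propagation argument forces $\rank\C_n<n$, again contradicting Assumption~\ref{ass:away_from_uncontrollability}. The step I expect to be the real obstacle is precisely this uncontrollability verification: showing that one minimal rank drop in a single staircase block genuinely decouples a proper $\bar A$-invariant subspace that still contains the column space of $H$. This is where the structure of \eqref{eq:canonical_form} must be used carefully --- a careless perturbation of one block can be undone by the coupling through the strictly upper-triangular blocks, and if the staircase contracts immediately past level $i_0$ one may have to propagate the perturbation into the next subdiagonal block(s) as well to preserve the decoupling. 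Everything else --- the unitary reduction, Eising's formula, and the norm bound on the rank-one perturbation --- is routine bookkeeping.
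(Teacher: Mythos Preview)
Your plan coincides with the paper's own argument: pass to the staircase form by unitary invariance of $d(\cdot,\cdot)$, assume some $\sigma_{\min}(A_{i_0+1,i_0})<\mu$, subtract the smallest singular triple to make that one block row-rank deficient, and conclude that the perturbed pair is uncontrollable. The paper simply asserts this last step; you go further and try to justify it by a rank-propagation argument, while correctly flagging it as the real obstacle.

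It is a genuine gap, and your rank-propagation argument does not close it. Dropping one rank in a single subdiagonal block need not destroy controllability, because the \emph{diagonal} block $A_{i_0+1,i_0+1}$ can rotate the surviving reachable direction in block $i_0{+}1$ onto the missing one at a later power of $\bar A$. Concretely, take $n=4$, $\kappa=2$, $r_1=r_2=2$, with $A_{11}=A_{12}=0$, $A_{21}=I_2$, $A_{22}=\left(\begin{smallmatrix}0&1\\1&0\end{smallmatrix}\right)$, $H_1=I_2$, and perturb $A_{21}$ to $\bar A_{21}=\left(\begin{smallmatrix}1&0\\0&0\end{smallmatrix}\right)$ (your rank-one change, norm $\sigma_{\min}(A_{21})=1$). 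Then $\bar A e_1=e_3$ and $\bar A e_3=e_4$, so $[H,\bar AH,\bar A^2H]$ already has rank $4$ and $(\bar A,H)$ is controllable. Your claim $\rank\C_n(\bar A,H)<n$ fails here: the rank of $\C_k(\bar A,H)$ does lose one at $k=2$, but it recovers at $k=3>\kappa$ through $A_{22}$, and nothing in your monotone-increment argument rules this out. The paper's proof and your proposal therefore share the same unjustified step at exactly the spot you anticipated; a correct argument would need to perturb more of block row $i_0{+}1$ --- enough to produce an actual left eigenvector of $\bar A$ annihilating $H$ --- or proceed by a different route.
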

\begin{proof}
Let $(\hat{A},\hat{H})$ be the staircase form of $(A,H)$ under the unitary similarity transformation $U$. First, we show that the controllability metric is invariant to unitary transformations. Denote $\Delta \hat{A}=U^*\Delta A U$, $\Delta\hat{H}=U^*\Delta H$. Then:
\begin{align*}
&\min\set{\snorm{\matr{{cc}\Delta A&\Delta H}}_2:\:(A+\Delta A,H+\Delta H)\text{ unc.}}\\
    &=\min\set{\snorm{\matr{{cc}\Delta \hat{A}&\Delta \hat{H}}}_2:\:(A+\Delta A,H+\Delta H)\text{ unc.}}\\
    &=\min\set{\snorm{\matr{{cc}\Delta \hat{A}&\Delta \hat{H}}}_2:\:(\hat{A}+\Delta \hat{A},\hat{H}+\Delta \hat{H})\text{ unc.}}
\end{align*}
where the first equality follows from $\snorm{\matr{{cc}\Delta A&\Delta H}}_2=\snorm{\matr{{cc}U^*\Delta AU&U^*\Delta H}}_2$. The second equality follows from the fact that controllability is preserved under similarity transformations
As a result, $d(\hat{A},\hat{H})=d(A,H)\ge \mu$.

Note that $A_{i+1,i}\in\R^{r_{i+1}\times r_i}$. Hence, it is sufficient to show that $\sigma_{r_{i+1}}(A_{i+1,i})\ge \mu$, where $\sigma_{r_{i+1}}$ denotes the $r_{i+1}$ smallest singular value. Assume that the opposite is true $\sigma_{r_{i+1}}(A_{i+1,i})< \mu$. We will show that this contradicts the fact that $(\hat{A},\hat{H})$ is away from uncontrollability: $d(\hat{A},\hat{H})=d(A,H)\ge \mu$. Let $u$ and $v$ be the singular vectors in the Singular Value Decomposition of $A_{i+1,i}$ corresponding to $\sigma_{r_{i+1}}$. Let $\Delta A_{i+1,i}\triangleq -\sigma_{r_{i+1}}(A_{i+1,i})uv'$. Then $A_{i+1,i}+\Delta A_{i+1,i}$ is rank deficient. Now let $\Delta\hat{A}$ be zero everywhere apart from the block $\Delta A_{i+1,i}$. Then, we have that $(\hat{A}+\Delta \hat{A},\hat{H})$ is uncontrollable, with $\snorm{\Delta\hat{A}}_2<\mu\le d(\hat{A},\hat{H})$, which is impossible. The proof for $H_1$ is similar.
\end{proof}

The above result allows us to work with the staircase form~\eqref{eq:canonical_form}, which has a nice triangular structure. In fact the controllability matrix is block-triangular and we can upper-bound its least singular value using a simple recursive bound. Since the least singular value of the Gramian is invariant to similarity transformations, we will now assume that the system $(A,H)$ is now already in form~\eqref{eq:canonical_form} with $U=I$.
Let us define some auxiliary matrices that will help us prove Theorem~\ref{thm:upper_bound_control_energy}. With $\tilde{A}_{k}$, for $k\le \kappa$ we denote the submatrix of $A$  when we keep the $k$-upper left block matrices in~\eqref{eq:canonical_form} and we delete the remaining columns and rows, e.g.:
\[
\tilde{A}_{2}=\matr{{cc}A_{1,1}&A_{1,2}\\A_{2,1}&A_{2,2}},\tilde{A}_3=\matr{{ccc}A_{1,1}&A_{1,2}&A_{1,3}\\A_{2,1}&A_{2,2}&A_{2,3}\\0&A_{3,2}&A_{3,3}},\dots
\]
Similarly, we define the submatrices $\tilde{H}_k$ where we keep only the upper $k$ blocks of the matrix $H$:
\[
\tilde{H}_1=H_1,\,\tilde{H}_2=\matr{{cc}H'_1&0}',\dots.
\]
Finally, define the upper-left controllability submatrices $\tilde{\C}_{k}$:
\begin{equation}\label{eq:upper_left_controllability}
\tilde{\C}_{k}=\matr{{cccc}\tilde{H}_k&\tilde{A}_k\tilde{H}_k&\dots&\tilde{A}^{k-1}_k\tilde{H}_k}\in\R^{\sum_{i=1}^{k}r_i\times (kr)}.
\end{equation}
The benefit of working with the above matrices is that they are block upper-triangular. For example:
\[
\tilde{\C}_1=H_1,\,\tilde{\C}_2=\matr{{cc}H_1&A_{1,1}H_1\\0&A_{2,1}H_1},\dots
\]
By definition $\tilde{A}_{\kappa}=A$, $\tilde{H}_{\kappa}=H$, and $\tilde{\C}_{\kappa}=\C_{\kappa}$. 
\begin{lemma}[Recursive definition of right-inverse]
	Assume the pair $(A,K)$ is in the canonical representation~\eqref{eq:canonical_form} with $U=I$. Let $\tilde{\C}_{k}$ be the upper-left part of the controllability matrix as defined in~\eqref{eq:upper_left_controllability}, with $k\le\kappa$, where $\kappa$ is the controllability index. Let $\Pi_k=H^{-1}_1 A^{\dagger}_{2,1}A^{\dagger}_{3,2}\cdots A^{\dagger}_{k,k-1}$, where $\dagger$ denotes the Moore-Penrose pseudo-inverse. Then, the following inequality holds recursively:
	\begin{equation}\label{eq:pseudo_inverse_recursion}
	  	\snorm{\tilde{\C}^{\dagger}_k}_2\le \snorm{\tilde{\C}^{\dagger}_{k-1}}_2+\snorm{\Pi_{k}}_2+\snorm{\tilde{\C}^{\dagger}_{k-1}\tilde{A}^{k-1}_{k-1}\tilde{H}_{k-1}\Pi_{k}}_2.
	\end{equation}
\end{lemma}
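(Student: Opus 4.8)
The plan is to exploit the block upper-Hessenberg shape that the staircase form~\eqref{eq:canonical_form} forces on the controllability submatrices: I would peel off the last ``layer'' of $\tilde{\C}_k$, build an explicit right inverse of $\tilde{\C}_k$ out of $\tilde{\C}_{k-1}^{\dagger}$ and $\Pi_k$, and finish by invoking the minimality of the Moore--Penrose pseudo-inverse among all right inverses.

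\emph{Step 1 (the block identity).} First I would prove that
\[
\tilde{\C}_k=\matr{{cc}\tilde{\C}_{k-1}&\tilde{A}_{k-1}^{k-1}\tilde{H}_{k-1}\\0&D_k},\qquad D_k:=A_{k,k-1}A_{k-1,k-2}\cdots A_{2,1}H_1.
\]
This comes from an induction on $j$: since $\tilde{H}_k$ is supported on the first block row and $\tilde{A}_k$ has a single nonzero block subdiagonal, $\tilde{A}_k^{\,j}\tilde{H}_k$ is supported on block rows $1,\dots,j+1$; its restriction to the first $k-1$ block rows equals $\tilde{A}_{k-1}^{\,j}\tilde{H}_{k-1}$ for every $j\le k-1$ (the coupling block $A_{k-1,k}$ never gets ``activated'' because the block-row-$k$ entries are still zero); and the bottom ($k$-th) block of $\tilde{A}_k^{\,k-1}\tilde{H}_k$ is exactly the telescoping product $D_k$. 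Reading this off column by column gives the displayed identity. I would also record here, using the staircase lower-bound lemma ($H_1$ invertible, each $A_{i+1,i}$ of full row rank $r_{i+1}$), that every $D_j$ has full row rank $r_j$; hence $\tilde{\C}_{k-1}$ --- block upper-triangular with diagonal blocks $D_1,\dots,D_{k-1}$ --- has full row rank, so $\tilde{\C}_{k-1}^{\dagger}$ is a genuine right inverse, and the chain of cancellations $A_{i+1,i}A_{i+1,i}^{\dagger}=I_{r_{i+1}}$ shows $D_k\Pi_k=I_{r_k}$, i.e.\ $\Pi_k$ is a right inverse of $D_k$.

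\emph{Step 2 (explicit right inverse and the bound).} Next I would verify by block multiplication that
\[
G:=\matr{{cc}\tilde{\C}_{k-1}^{\dagger}&-\tilde{\C}_{k-1}^{\dagger}\tilde{A}_{k-1}^{k-1}\tilde{H}_{k-1}\Pi_k\\0&\Pi_k}
\]
satisfies $\tilde{\C}_k G=I$; the top-right block is designed precisely so that the cross term $\tilde{A}_{k-1}^{k-1}\tilde{H}_{k-1}\Pi_k$ cancels, using $\tilde{\C}_{k-1}\tilde{\C}_{k-1}^{\dagger}=I$ and $D_k\Pi_k=I$. Then I would invoke the elementary fact that, for a full-row-rank matrix $M$, $\snorm{M^{\dagger}}_2=\sigma_{\min}(M)^{-1}\le\snorm{G}_2$ for \emph{any} right inverse $G$: if $MG=I$ then $\snorm{z}^2=\langle M'z,Gz\rangle\le\snorm{M'z}\,\snorm{G}_2\,\snorm{z}$, so $\snorm{G}_2\ge\snorm{z}/\snorm{M'z}$, and optimizing over $z$ yields $\sigma_{\min}(M)^{-1}$. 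Applying this with $M=\tilde{\C}_k$ and bounding $\snorm{G}_2$ by the triangle inequality over its three nonzero blocks gives
\[
\snorm{\tilde{\C}_k^{\dagger}}_2\le\snorm{\tilde{\C}_{k-1}^{\dagger}}_2+\snorm{\Pi_k}_2+\snorm{\tilde{\C}_{k-1}^{\dagger}\tilde{A}_{k-1}^{k-1}\tilde{H}_{k-1}\Pi_k}_2,
\]
which is exactly~\eqref{eq:pseudo_inverse_recursion}.

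I expect Step 1 to be the crux: one must check carefully that the upper coupling blocks $A_{i,k}$ of the staircase form genuinely do not disturb the first $k-1$ block rows of $\tilde{A}_k^{\,j}\tilde{H}_k$ while $j\le k-1$, and that the new bottom block is \emph{exactly} $D_k$ rather than $D_k$ plus lower-order corrections --- this is what makes $\Pi_k$ the right object to plug in. Everything downstream (the block right inverse, pseudo-inverse minimality, the triangle inequality) is routine bookkeeping.
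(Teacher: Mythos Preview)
Your proposal is correct and follows essentially the same route as the paper: establish the block-triangular identity $\tilde{\C}_k=\begin{pmatrix}\tilde{\C}_{k-1}&\tilde{A}_{k-1}^{\,k-1}\tilde{H}_{k-1}\\0&D_k\end{pmatrix}$, construct the explicit right inverse $G$ you wrote, and then use the minimality of $\snorm{\tilde{\C}_k^{\dagger}}_2$ among right inverses together with the triangle inequality. The only cosmetic differences are that the paper packages the pseudo-inverse minimality as a separate lemma (proved via orthogonality of $M^{\dagger}$ to the null space rather than your Cauchy--Schwarz argument), and that full row rank of each $A_{i+1,i}$ already follows from the staircase-form proposition, so you do not need to invoke the $\mu$-lower-bound lemma in Step~1.
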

\begin{proof}
The upper-left controllability matrix $\tilde{\C}_k$, $k\le \kappa$ has the following block triangular structure:
	\begin{align}
	\tilde{\C}_k&=\matr{{ccc|c}\tilde{H}_k&\dots&\tilde{A}^{k-1}_k\tilde{H}_k&\tilde{A}^{k-1}_k\tilde{H}_k}\nonumber\\
	&=\matr{{c|c}\tilde{\C}_{k-1}&\tilde{A}^{k-1}_{k-1}\tilde{H}_{k-1}\\0&A_{k,k-1}A_{k-1,k-2}\dots H_1}\label{eq:upper_left_controllability_b}.
	\end{align}
	Based on the above form, we can construct a right-inverse of matrix $\tilde{\C}_k$:
	\begin{align*}
\tilde{\C}^{\sharp}_{k}\triangleq\matr{{cc}\tilde{\C}^{\dagger}_{k-1} & -\tilde{\C}^{\dagger}_{k-1}\tilde{A}^{k-1}_{k-1}\tilde{H}_{k-1}\Pi_{k}\\0& \Pi_{k}},
	\end{align*}
	which satisfies $\tilde{\C}_k\tilde{\C}^{\sharp}_k=I$.
	By the definition of $\tilde{\C}^{\sharp}_k$:
	\[
	\snorm{\tilde{\C}^{\sharp}_k}_2\le \snorm{\tilde{\C}^{\dagger}_{k-1}}_2+\snorm{\Pi_{k}}_2+\snorm{\tilde{\C}^{\dagger}_{k-1}\tilde{A}^{k-1}_{k-1}\tilde{H}_{k-1}\Pi_{k}}_2.
	\]
	To conclude the proof, we invoke Lemma~\ref{lem:right_inverse_lower_bound}.
\end{proof}

	\begin{lemma}\label{lem:right_inverse_lower_bound}
		Let $M\in \R^{s\times t}$ be any matrix with full column rank $s\le t$. Let $M^{\sharp}$ be any right inverse of $M$, i.e. $MM^{\sharp}=I_s$. Then the following inequality is true:
		\[
		\snorm{M^{\dagger}}_2\le \snorm{M^{\sharp}}_2,
		\]
		where $M^{\dagger}$ is the Moore Penrose pseudo-inverse. 
	\end{lemma}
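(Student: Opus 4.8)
The plan is to use the variational characterization of the Moore--Penrose pseudo-inverse. Since $M^{\sharp}$ is a right inverse, $M$ has full row rank $s$, so the linear system $My=x$ is consistent for \emph{every} $x\in\R^{s}$, and $M^{\dagger}x$ is by definition its minimum Euclidean-norm solution (recall $M^{\dagger}=M'(MM')^{-1}$, whence $MM^{\dagger}=I_s$ and the rows of $M^{\dagger}$ lie in the row space of $M$). This is the only structural fact I need.

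First I would fix an arbitrary $x\in\R^{s}$. Because $M^{\sharp}$ is a right inverse, the vector $y:=M^{\sharp}x$ satisfies $My=(MM^{\sharp})x=x$, so $y$ is \emph{some} solution of $My=x$, while $M^{\dagger}x$ is the least-norm one. Hence $\snorm{M^{\dagger}x}_2\le\snorm{M^{\sharp}x}_2$ pointwise in $x$. Taking the supremum over unit vectors on both sides gives $\snorm{M^{\dagger}}_2=\sup_{\snorm{x}_2=1}\snorm{M^{\dagger}x}_2\le\sup_{\snorm{x}_2=1}\snorm{M^{\sharp}x}_2=\snorm{M^{\sharp}}_2$, which is the assertion.

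An equivalent, purely algebraic route avoids the optimality characterization: from $MM^{\sharp}=I_s$ one gets $M^{\dagger}=M^{\dagger}(MM^{\sharp})=(M^{\dagger}M)M^{\sharp}$, and $M^{\dagger}M$ is the orthogonal projector onto $\operatorname{range}(M')$, so $\snorm{M^{\dagger}M}_2\le 1$; submultiplicativity of the spectral norm then yields $\snorm{M^{\dagger}}_2\le\snorm{M^{\dagger}M}_2\,\snorm{M^{\sharp}}_2\le\snorm{M^{\sharp}}_2$. There is no genuine obstacle here: the only point requiring a line of care is recording that the hypothesis $MM^{\sharp}=I_s$ forces $M$ to have full row rank (so $M^{\dagger}$ is a right inverse and $M^{\dagger}M$ is an orthogonal projection); everything else is a one-line estimate. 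Applied to $M=\tilde{\C}_{k}$ and $M^{\sharp}=\tilde{\C}^{\sharp}_{k}$, this is exactly what turns the bound on the explicitly constructed right inverse into the recursion~\eqref{eq:pseudo_inverse_recursion} for $\snorm{\tilde{\C}^{\dagger}_{k}}_2$.
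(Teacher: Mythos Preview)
Your proposal is correct and is essentially the paper's argument: the paper also proves the pointwise inequality $\snorm{M^{\dagger}x}_2\le\snorm{M^{\sharp}x}_2$ and then takes the supremum, obtaining it via the orthogonal decomposition $M^{\sharp}=M^{\dagger}+M_{\mathrm{null}}$ with $(M^{\dagger})'M_{\mathrm{null}}=0$ and Pythagoras, which is exactly the fact underlying your minimum-norm characterization. Your alternative algebraic route $M^{\dagger}=(M^{\dagger}M)M^{\sharp}$ with $\snorm{M^{\dagger}M}_2\le 1$ is also fine and just packages the same orthogonality differently.
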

\begin{proof}
	Notice that $M(M^{\dagger}-M^{\sharp})=0$. As a result, we can write $M^{\sharp}=M^{\dagger}+M_{\mathrm{null}}$, where $M_{\mathrm{null}}$ is any matrix in the null space $MM_{\mathrm{null}}=0$. However, the Moorse-Penrose pseudoinverse and $M_{\mathrm{null}}$ are orthogonal
	\[
(M^{\dagger})'M_{\mathrm{null}}=0.
	\]
	By orthogonality, for every $x\in R^{t\times 1}$ we have $\snorm{M^{\sharp}x}_2=\sqrt{\snorm{M^{\dagger}x}^2+\snorm{M_{\mathrm{null}}x}^2}\ge \snorm{M^{\dagger}x}^2$.
\end{proof}
Since all coupling matrices $A_{k,k-1},\dots,A_{2,1},H_1$ have least singular value lower bounded by $\mu$, the product of their pseudo-inverses is upper bounded by:
\[
\snorm{\Pi_k}\le \mu^{-k}.
\] 
So, we should expect~\eqref{eq:pseudo_inverse_recursion} to grow no faster than exponentially with $\kappa$. However, the main challenge is to control the last term in~\eqref{eq:pseudo_inverse_recursion}. Unless we follow a careful analysis, if we just apply the submultiplicative property of the norm we will get bounds which are exponential with $\kappa^2$ instead of $\kappa$. 
The idea is the following. Since by definition $\tilde{C}_{k-1}$ has full rank, then there exists an appropriate matrix $\Lambda_{k-1}\in\R^{(k-1)r\times r_{k}}$ such that
\[
\tilde{A}^{k-1}_{k-1}\tilde{H}_{k-1}\Pi_{k}=\tilde{C}_{k-1}\Lambda_{k-1}.
\]
 Then the above bound becomes:
\begin{equation}\label{eq:pseudo_inverse_recursion_lambda}
	\snorm{\tilde{\C}^{\dagger}_k}\le \snorm{\tilde{\C}^{\dagger}_{k-1}}+\mu^{-k}+\snorm{\Lambda_{k-1}},
\end{equation}
where we used the fact that $\snorm{\tilde{\C}^{\dagger}_{k-1}\tilde{\C}_{k-1}}\le 1$. For the remaining proof, we need to construct such a matrix $\Lambda_{k-1}$ and upper bound it.

\begin{lemma}
Let $\Lambda_{k-2}\in\R^{(k-2)r\times r_{k-1}}$ be any matrix such that:
\[
\tilde{A}^{k-2}_{k-2}\tilde{H}_{k-2}\Pi_{k-1}=\tilde{\C}_{k-2}\Lambda_{k-2}
\]
There exists a matrix $\Lambda_{k-1}\in\R^{(k-1)r\times r_{k}}$ such that:
\[
\tilde{A}^{k-1}_{k-1}\tilde{H}_{k-1}\Pi_{k}=\tilde{\C}_{k-1}\Lambda_{k-1}
\]
with
\begin{equation}\label{eq:pf_upper_bound_auxiliary_matrix}
\norm{\Lambda_{k-1}}_2\le \frac{2+M}{\mu}\norm{\Lambda_{k-2}}_2+\frac{M}{\mu}\snorm{\tilde{\C}^{\dagger}_{k-2}}_2+\mu^{-k}M.
\end{equation}
\end{lemma}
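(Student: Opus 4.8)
The plan is to construct $\Lambda_{k-1}$ explicitly, exploiting the block upper–triangular form of the truncated controllability matrix recorded in~\eqref{eq:upper_left_controllability_b},
\[
\tilde{\C}_{k-1}=\matr{{c|c}\tilde{\C}_{k-2}&\tilde{A}^{k-2}_{k-2}\tilde{H}_{k-2}\\0&D_{k-1}},\qquad D_{k-1}:=A_{k-1,k-2}A_{k-2,k-3}\cdots A_{2,1}H_1,
\]
together with the hypothesis $\tilde{A}^{k-2}_{k-2}\tilde{H}_{k-2}\Pi_{k-1}=\tilde{\C}_{k-2}\Lambda_{k-2}$. Since Assumption~\ref{ass:away_from_uncontrollability} forces each $A_{i+1,i}$ to have full row rank with $\sigma_{\min}\ge\mu$ and $H_1$ to be invertible with $\sigma_{\min}(H_1)\ge\mu$, the matrix $\Pi_{k-1}$ is an \emph{exact} right inverse of $D_{k-1}$, and consequently $D_{k-1}\Pi_k=D_{k-1}\Pi_{k-1}A^{\dagger}_{k,k-1}=A^{\dagger}_{k,k-1}$, whose norm is at most $\mu^{-1}$. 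This single identity is what prevents the otherwise unavoidable factor $\snorm{D_{k-1}}_2\le M^{k-1}$ from entering.

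I would then peel off one power of $\tilde A$: the last block column of $\tilde{\C}_{k-1}$ above is precisely $\tilde{A}^{k-2}_{k-1}\tilde{H}_{k-1}$, so
\[
\tilde{A}^{k-1}_{k-1}\tilde{H}_{k-1}\,\Pi_k=\tilde{A}_{k-1}\matr{{c}\tilde{A}^{k-2}_{k-2}\tilde{H}_{k-2}\\ D_{k-1}}\Pi_k .
\]
Multiplying out the block–Hessenberg matrix $\tilde{A}_{k-1}$ (whose relevant blocks are the submatrix $\tilde{A}_{k-2}$, an above–diagonal block column, the subdiagonal block $A_{k-1,k-2}$ and the diagonal block $A_{k-1,k-1}$, each of spectral norm $\le M$), every surviving occurrence of $\tilde{A}^{k-2}_{k-2}\tilde{H}_{k-2}$ is immediately followed by $\Pi_k=\Pi_{k-1}A^{\dagger}_{k,k-1}$ and is therefore replaced, via the inductive hypothesis, by $\tilde{\C}_{k-2}\Lambda_{k-2}A^{\dagger}_{k,k-1}$; and every occurrence of $D_{k-1}$ is followed by $\Pi_k$ and collapses to $A^{\dagger}_{k,k-1}$. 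Thus $\tilde{A}^{k-1}_{k-1}\tilde{H}_{k-1}\Pi_k$ gets written as $\tilde{\C}_{k-2}$ applied (through operators of norm $\le M$ and through $A^{\dagger}_{k,k-1}$ of norm $\le\mu^{-1}$) to $\Lambda_{k-2}$, plus a lower block of the form $A_{k-1,k-1}A^{\dagger}_{k,k-1}$ (norm $\le M\mu^{-1}$), plus bounded cross terms — with no bare high power of $\tilde{A}$ or of $D_{k-1}$ left behind.

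Finally I would match this representation against $\tilde{\C}_{k-1}\Lambda_{k-1}$ block by block. Writing $\Lambda_{k-1}=\matr{{c}\Lambda_a\\ \Lambda_b}$ conformably, the lower equation $D_{k-1}\Lambda_b=(\text{lower block})$ is solved by $\Lambda_b=\Pi_{k-1}\cdot(\text{lower block})$; pairing $\snorm{\Pi_{k-1}}_2\le\mu^{-(k-1)}$ with the $A_{k-1,k-1}A^{\dagger}_{k,k-1}$ piece yields the $\mu^{-k}M$ summand, the remaining pieces of the lower block already carrying $\Lambda_{k-2}$. The upper equation $\tilde{\C}_{k-2}\Lambda_a=(\text{upper block})-\tilde{A}^{k-2}_{k-2}\tilde{H}_{k-2}\Lambda_b$, after using the hypothesis once more on the subtracted term, is solved with $\tilde{\C}^{\dagger}_{k-2}$: one term is $\tilde{\C}^{\dagger}_{k-2}\tilde{\C}_{k-2}$ (a contraction) applied to a $\Lambda_{k-2}$–multiple, and the other is $\tilde{\C}^{\dagger}_{k-2}$ applied to something of norm $\le M\mu^{-1}$, giving the $\tfrac{M}{\mu}\snorm{\tilde{\C}^{\dagger}_{k-2}}_2$ summand. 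Collecting and repeatedly using $\snorm{\tilde{\C}^{\dagger}_{k-2}\tilde{\C}_{k-2}}_2\le1$, $\snorm{A_{i+1,i}A^{\dagger}_{i+1,i}}_2=1$, $\snorm{A^{\dagger}_{k,k-1}}_2\le\mu^{-1}$ and $\snorm{\cdot}_2\le M$ for the blocks of $\tilde{A}_{k-1}$ assembles the three summands into $\snorm{\Lambda_{k-1}}_2\le\tfrac{2+M}{\mu}\snorm{\Lambda_{k-2}}_2+\tfrac{M}{\mu}\snorm{\tilde{\C}^{\dagger}_{k-2}}_2+\mu^{-k}M$.

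The main obstacle is exactly this ``every high power cancels in one step'' discipline. A naive bound on $\tilde{A}^{k-1}_{k-1}\tilde{H}_{k-1}$ already costs $M^{\Theta(k)}$ via $\snorm{\tilde{A}^{k-2}_{k-2}\tilde{H}_{k-2}}_2$ and $\snorm{D_{k-1}}_2$, which would destroy the recursion once iterated from $1$ up to $\kappa$; the argument only works because the accumulated exponential growth is carried entirely inside $\Lambda_{k-2}$, $\tilde{\C}^{\dagger}_{k-2}$ and $\Pi_{k-1}$, which is why those are the only quantities appearing on the right of~\eqref{eq:pf_upper_bound_auxiliary_matrix}. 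Checking that no multiplication in the expansion of $\tilde{A}_{k-1}$ leaves a stranded high power, and that the leftover bounded constants really sum to $2+M$ and $M$ (and not something larger), is the delicate bookkeeping at the heart of the proof.
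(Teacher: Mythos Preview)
Your framework is right, but there is a real gap exactly at the step you flag as ``delicate.'' After peeling off one power of $\tilde A_{k-1}$ and using the hypothesis, the upper block contains $\tilde A_{k-2}\tilde{\C}_{k-2}\Lambda_{k-2}A^{\dagger}_{k,k-1}$. This is \emph{not} of the form $\tilde{\C}_{k-2}\,(\text{bounded})\,\Lambda_{k-2}$: the matrix $\tilde A_{k-2}\tilde{\C}_{k-2}=[\tilde A_{k-2}\tilde H_{k-2},\dots,\tilde A^{k-2}_{k-2}\tilde H_{k-2}]$ has a last block column $\tilde A^{k-2}_{k-2}\tilde H_{k-2}$ that overshoots $\tilde{\C}_{k-2}$, so hitting it with $\tilde{\C}^{\dagger}_{k-2}$ costs $\snorm{\tilde{\C}^{\dagger}_{k-2}}_2\snorm{\tilde{\C}_{k-2}}_2$, which is exponential in $k$. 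A second, coupled problem is your choice $\Lambda_b=\Pi_{k-1}\cdot(\text{lower block})$. The lower block (after using the staircase zeros) is $Q_{k-1}\Lambda_{k-2,k-2}A^{\dagger}_{k,k-1}+A_{k-1,k-1}A^{\dagger}_{k,k-1}$ with $Q_{k-1}=A_{k-1,k-2}\cdots A_{2,1}H_1$; since $\Pi_{k-1}$ is only a \emph{right} inverse of $Q_{k-1}$, the product $\Pi_{k-1}Q_{k-1}$ does not collapse, and its norm can be as large as $(M/\mu)^{k-1}$. So neither $\snorm{\Lambda_b}_2$ nor $\snorm{\Lambda_a}_2$ obeys the claimed bound under your construction.

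The paper closes this gap with two coupled moves. First, it takes $\Lambda_b=(\Lambda_{k-2,k-2}+\Pi_{k-1}A_{k-1,k-1})A^{\dagger}_{k,k-1}$ directly (so $Q_{k-1}\Lambda_b$ matches the lower block without ever forming $\Pi_{k-1}Q_{k-1}$). Second, it uses the shift identity
\[
\tilde A_{k-2}\tilde{\C}_{k-2}\Lambda_{k-2}-\tilde A^{k-2}_{k-2}\tilde H_{k-2}\Lambda_{k-2,k-2}=\tilde{\C}_{k-2}\Lambda^{\mathrm{shift}}_{k-2},\qquad \Lambda^{\mathrm{shift}}_{k-2}=\matr{{c}0\\\Lambda_{k-2,1}\\\vdots\\\Lambda_{k-2,k-3}},
\]
which is exactly the observation that dropping the overshooting column $\tilde A^{k-2}_{k-2}\tilde H_{k-2}$ leaves a block shift inside $\tilde{\C}_{k-2}$. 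With this $\Lambda_b$, the piece $\tilde A^{k-2}_{k-2}\tilde H_{k-2}\Lambda_{k-2,k-2}A^{\dagger}_{k,k-1}$ in $\tilde A^{k-2}_{k-2}\tilde H_{k-2}\Lambda_b$ cancels the overshoot, and the upper residual becomes $\tilde{\C}_{k-2}(\Lambda^{\mathrm{shift}}_{k-2}+\tilde{\C}^{\dagger}_{k-2}A_{1:k-2,k-1}-\Lambda_{k-2}A_{k-1,k-1})A^{\dagger}_{k,k-1}$. Reading off $\Lambda_a$ from this and bounding $\snorm{\Lambda^{\mathrm{shift}}_{k-2}}_2\le\snorm{\Lambda_{k-2}}_2$ is what produces the constants $2+M$ and $M$. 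Your outline becomes a proof once you replace the generic $\Lambda_b=\Pi_{k-1}(\cdot)$ by this specific choice and invoke the shift identity instead of asserting that $\tilde A_{k-2}\tilde{\C}_{k-2}$ factors through $\tilde{\C}_{k-2}$.
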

\begin{proof}
\textbf{Part A: algebraic expression for $\tilde{A}^{k-1}_{k-1}\tilde{H}_{k-1}\Pi_{k}$.}
Observe that every matrix $\tilde{A}_{k-1}$ includes the previous as an upper-left submatrix:
\[
\tilde{A}_{k-1}=\matr{{cc}\tilde{A}_{k-2}&A_{1:k-2,k-1}\\
A_{k-1,1:k-2}&A_{k-1,k-1}},
\]
with
\[
A_{1:k-1,k-1}=\matr{{c}A_{1,k-1}\\\vdots\\A_{k-2,k-1}}, \, A_{k-1,1:k-2}=\matr{{cccc}0&\cdots&0&A_{k-1,k-2}}
\]
Let also:
\[
Q_{k}=A_{k,k-1}A_{k-1,k-2}\cdots H_1.
\]
A direct computation gives:
\begin{equation}\label{eq:upper_left_controllability_c}
	\tilde{\C}_{k}=\matr{{ccc}\tilde{\C}_{k-2} &\tilde{A}^{k-2}_{k-2}\tilde{H}_{k-2} &\tilde{A}^{k-1}_{k-2}\tilde{H}_{k-2}+A_{1:k-2,k-1}Q_{k-1} \\0& Q_{k-1}& A_{k-1,1:k-2}\tilde{A}^{k-2}_{k-2}\tilde{H}_{k-2}+A_{k-1,k-1}Q_{k-1}\\
		0&0&Q_{k}}.
\end{equation}
As a result of~\eqref{eq:upper_left_controllability_b} and~\eqref{eq:upper_left_controllability_c},
\[
\tilde{A}^{k-1}_{k-1}\tilde{H}_{k-1}\Pi_{k}=\matr{{c}\tilde{A}^{k-1}_{k-2}\tilde{H}_{k-2}+A_{1:k-2,k-1}Q_{k-1}\\A_{k-1,1:k-2}\tilde{A}^{k-2}_{k-2}\tilde{H}_{k-2}+A_{k-1,k-1}Q_{k-1}}\Pi_{k}.
\]
We can simplify the above expression using $Q_{k-1}\Pi_{k-1}=I$ and $\tilde{A}^{k-2}_{k-2}\tilde{H}_{k-2}\Pi_{k-1}=\tilde{\C}_{k-2}\Lambda_{k-2}$:
\begin{equation}\label{eq:pf_last_column_analysis}
\tilde{A}^{k-1}_{k-1}\tilde{H}_{k-1}\Pi_{k}=\matr{{c}\tilde{A}_{k-2}\tilde{\C}_{k-2}\Lambda_{k-2}+A_{1:k-2,k-1}\\A_{k-1,1:k-2}\tilde{\C}_{k-2}\Lambda_{k-2}+A_{k-1,k-1}}\tilde{A}^{\dagger}_{k,k-1}.
\end{equation}

\noindent\textbf{Part B: last rows as linear combination.}\\
Our goal is to express~\eqref{eq:pf_last_column_analysis} as a linear combination of the columns of:
\[
\tilde{\C}_{k-1}=\matr{{cc}\tilde{\C}_{k-2}&\tilde{A}^{k-2}_{k-2}\tilde{H}_{k-2}\\0&Q_{k-1}}.
\]
Since $\tilde{\C}_{k-1}$ has a triangular structure, we start from the last $r_{k-1}$ rows of $\tilde{A}^{k-1}_{k-1}\tilde{H}_{k-1}\Pi_{k}$
Exploiting the structure of $A_{k-1,1:k-2}$, which includes many zeros we can write:
\begin{align*}
&A_{k-1,1:k-2}\tilde{\C}_{k-2}\Lambda_{k-2}+A_{k-1,k-1}\\
&=\matr{{cccc}0&\cdots&0&A_{k-1,k-2}}\matr{{c|c}\tilde{\C}_{k-3}&\tilde{A}^{k-3}_{k-3}\tilde{H}_{k-1}\\0&A_{k-2,k-3}A_{k-3,k-4}\dots H_1}\Lambda_{k-2}+A_{k-1,k-1}\\
&=A_{k-1,k-2}Q_{k-2}\Lambda_{k-2,k-2}+A_{k-1,k-1}\\
&=Q_{k-1}\Lambda_{k-2,k-2}+A_{k-1,k-1},
\end{align*}
where $\Lambda_{k-2,k-2}\in \R^{r\times r_{k-1}}$ are the last $r$ rows of matrix $\Lambda_{k-2}$: \[\Lambda_{k-2}=\matr{{c}\Lambda_{k-2,1}\\\vdots\\\Lambda_{k-2,k-2}}.\]
Finally, we car rewrite the last $r_{k-1}$ rows of $\tilde{A}^{k-1}_{k-1}\tilde{H}_{k-1}\Pi_{k}$ as:
\begin{align}\label{eq:last_row}
&(A_{k-1,1:k-2}\tilde{\C}_{k-2}\Lambda_{k-2}+A_{k-1,k-1})\tilde{A}^{\dagger}_{k,k-1}=Q_{k-1}(\Lambda_{k-2,k-2}+\Pi_{k-1}A_{k-1,k-1})\tilde{A}^{\dagger}_{k,k-1}
\end{align}

\noindent\textbf{Part c: remaining rows.}\\
From~\eqref{eq:last_row}, we can eliminate the last rows:
\begin{align*}
&\tilde{A}^{k-1}_{k-1}\tilde{H}_{k-1}\Pi_{k}-\matr{{c}\tilde{A}^{k-2}_{k-2}\tilde{H}_{k-2}\\Q_{k-1}}(\Lambda_{k-2,k-2}+\Pi_{k-1}A_{k-1,k-1})\tilde{A}^{\dagger}_{k,k-1}\\
&=\matr{{c}\tilde{A}_{k-2}\tilde{\C}_{k-2}\Lambda_{k-2}+A_{1:k-2,k-1}-\tilde{A}^{k-2}_{k-2}\tilde{H}_{k-2}\Lambda_{k-2,k-2}-\tilde{A}^{k-2}_{k-2}\tilde{H}_{k-2}\Pi_{k-1}A_{k-1,k-1}\\0}\tilde{A}^{\dagger}_{k,k-1}\\
&=\matr{{c}\tilde{A}_{k-2}\tilde{\C}_{k-2}\Lambda_{k-2}+A_{1:k-2,k-1}-\tilde{A}^{k-2}_{k-2}\tilde{H}_{k-2}\Lambda_{k-2,k-2}-\tilde{\C}_{k-2}\Lambda_{k-2}A_{k-1,k-1}\\0}\tilde{A}^{\dagger}_{k,k-1}
\end{align*}
Notice that by the shift structure of the controllability matrix:
\begin{align*}
\tilde{A}_{k-2}\tilde{\C}_{k-2}\Lambda_{k-2}-\tilde{A}^{k-2}_{k-2}\tilde{H}_{k-2}\Lambda_{k-2,k-2}&=\matr{{ccc}\tilde{A}_{k-2}\tilde{H}_{k-2}&\dots &\tilde{A}^{k-2}_{k-2}\tilde{H}_{k-2}}\Lambda_{k-2}-\tilde{A}^{k-2}_{k-2}\tilde{H}_{k-2}\Lambda_{k-2,k-2}\\
&=\matr{{cccc}\tilde{A}_{k-2}\tilde{H}_{k-2}&\dots &\tilde{A}^{k-3}_{k-2}\tilde{H}_{k-2}&0}\Lambda_{k-2}\\
&=\matr{{cccc}\tilde{H}_{k-2}&\tilde{A}_{k-2}\tilde{H}_{k-2}&\dots &\tilde{A}^{k-3}_{k-2}\tilde{H}_{k-2}}\Lambda^{\mathrm{shift}}_{k-2}\\
&=\tilde{\C}_{k-2}\Lambda^{\mathrm{shift}}_{k-2}.
\end{align*}
where
\[
\Lambda^{\mathrm{shift}}_{k-2}=\matr{{c}0\\\Lambda_{k-2,1}\\\vdots\\\Lambda_{k-2,k-3}}.
\]
Moreover, we can write $A_{1:k-2,k-1}=\tilde{\C}_{k-2}\tilde{\C}^{\dagger}_{k-2}A_{1:k-2,k-1}$

\noindent\textbf{Part d: construction of $\Lambda_{k-1}$.}\\
Combining the above equalities:
\begin{align*}
	&\tilde{A}^{k-1}_{k-1}\tilde{H}_{k-1}\Pi_{k}=\matr{{c}\tilde{A}^{k-2}_{k-2}\tilde{H}_{k-2}\\Q_{k-1}}(\Lambda_{k-2,k-2}+\Pi_{k-1}A_{k-1,k-1})\tilde{A}^{\dagger}_{k,k-1}\\
	&+\matr{{c}\tilde{\C}_{k-2}\\0}(\Lambda^{\mathrm{shift}}_{k-2}+\tilde{\C}^{\dagger}_{k-2}A_{1:k-2,k-1}-\Lambda_{k-2}A_{k-1,k-1})\tilde{A}^{\dagger}_{k,k-1}.
\end{align*}
Hence we can select:
\[
\Lambda_{k-1}=\matr{{c}\paren{\Lambda^{\mathrm{shift}}_{k-2}+\tilde{\C}^{\dagger}_{k-2}A_{1:k-2,k-1}-\Lambda_{k-2}A_{k-1,k-1}}\tilde{A}^{\dagger}_{k,k-1}\\\paren{\Lambda_{k-2,k-2}+\Pi_{k-1}A_{k-1,k-1}}\tilde{A}^{\dagger}_{k,k-1}},
\]
with
\[
\norm{\Lambda_{k-1}}\le (2+M)\mu^{-1}\norm{\Lambda_{k-2}}+M\mu^{-1}\snorm{\tilde{\C}^{\dagger}_{k-2}}+\mu^{-k}M
\]
\end{proof}

Now we can complete the proof of Theorem~\ref{thm:upper_bound_control_energy}. It is sufficient to select $\Lambda_1$: 
\[
A_{1,1}H_1\Pi_2=H_1H^{-1}_1A_{1,1}A^{\dagger}_{2,1}=\tilde{\C}_1\Lambda_1,
\]
with $\norm{\Lambda_1}_2\le M\mu^{-2}$.
Let $\alpha_k=\matr{{ccc}\snorm{\tilde{\C}^{\dagger}_k}&\snorm{\Lambda_k}&\mu^{-k}}'$. From~\eqref{eq:pseudo_inverse_recursion_lambda},~\eqref{eq:pf_upper_bound_auxiliary_matrix} we obtain the following recursion:
\[
\alpha_k\le \matr{{ccc}1&1&\mu^{-1}\\\frac{M}{\mu}&\frac{2+M}{\mu}&\frac{M}{\mu}\\0&0&\mu^{-1}}\alpha_{k-1},
\]
where the inequality is interpreted coordinate-wise. Let $\Xi$ be the matrix of the above recursion.
We have the crude bound:
\[
\snorm{\C^{\dagger}_\kappa}_2=\snorm{\tilde{\C}^{\dagger}_\kappa}_2\le \snorm{\Xi^{\kappa-1}}_2\snorm{\alpha_1}_{2},
\]
where $\snorm{\Xi^{\kappa}}_2\snorm{\alpha_1}_{2}\le\poly^{\kappa}(M/\mu)$. This completes the proof.

\end{document}